\theoremstyle{plain}
\newtheorem{theorem}{Theorem}[section]
\newtheorem{proposition}[theorem]{Proposition}
\newtheorem{lemma}[theorem]{Lemma}
\theoremstyle{definition}
\newtheorem{definition}[theorem]{Definition}
\newtheorem{condition}{Condition}
\newtheorem{condition2}{KRR Condition}
\theoremstyle{remark}
\newtheorem{remark}[theorem]{Remark}
\definecolor{myblue}{RGB}{0,102,153}
\newcommand{\BPhi}{\boldsymbol \Phi}
\newcommand{\mc}[1]{\mathcal{#1}}
\newcommand{\abs}[1]{\left|#1\right|}
\newcommand{\normone}[1]{\lVert #1 \rVert_1}
\newcommand{\wal}{\ensuremath{\mathrm{wal}}}
\newcommand{\boldell}{\bm{\ell}}
\newcommand{\bl}{\bm{\ell}}
\newcommand{\by}{\bm{y}}
\newcommand{\bk}{\bm{k}}
\newcommand{\bt}{\bm{t}}
\newcommand{\ba}{\bm{a}}
\newcommand{\bh}{\bm{h}}
\newcommand{\mcZ}{\mc{Z}}
\newcommand{\MHlprime}{M(H_{\boldsymbol{l}^{\prime}})}
\newcommand{\MprimeHlprime}{M^\prime(H_{\boldsymbol{l}^{\prime}})}
\newcommand{\Hlprime}{H_{\boldell^\prime}}
\newcommand{\floor}[1]{\left\lfloor #1 \right\rfloor}
\newcommand{\cardu}{|u|}
\newcommand{\bomega}{\bm{\omega}}
\newcommand{\dd}{\,\mathrm{d}}
\definecolor{deepred}{RGB}{139, 0, 0} %
\newcommand{\R}{\mathbb{R}}
\newcommand{\x}{\mathbf{x}}
\newcommand{\X}{\mathbf{X}}
\newcommand{\cX}{\mathcal{X}}
\newcommand{\E}{\mathbb{E}}
\newcommand{\y}{\mathbf{y}}
\newcommand{\pph}{\pmb{\phi}}
\newcommand{\tildeC}{\tilde{C}}
\title{Randomized Quasi-Monte Carlo Features for Kernel Approximation}
\author[1]{Yian Huang\thanks{yh3209@columbia.edu}\textsuperscript{\ddag}}
\author[1]{Zhen Huang\thanks{zh2395@columbia.edu}\textsuperscript{\ddag}}
\affil[1]{Department of Statistics, Columbia University}
\date{\today} %
\begin{document}
\maketitle %

\renewcommand{\thefootnote}{\fnsymbol{footnote}}
\footnotetext[3]{Equal contribution}
\renewcommand{\thefootnote}{\arabic{footnote}}

\begin{abstract}
We investigate the application of randomized quasi-Monte Carlo (RQMC) methods in random feature approximations for kernel-based learning. Compared to the classical Monte Carlo (MC) approach \citep{rahimi2007random}, RQMC improves the deterministic approximation error bound from $O_P(1/\sqrt{M})$ to
$O(1/M)$ (up to logarithmic factors), matching the rate achieved by quasi-Monte Carlo (QMC) methods \citep{huangquasi}. Beyond the deterministic error bound guarantee, we further establish additional average error bounds for RQMC features: some requiring weaker assumptions and others significantly reducing the exponent of the logarithmic factor. In the context of kernel ridge regression, we show that RQMC features offer computational advantages over MC features while preserving the same statistical error rate. Empirical results further show that RQMC methods maintain stable performance in both low and moderately high-dimensional settings, unlike QMC methods, which suffer from significant performance degradation as dimension increases.
\end{abstract}

\section{Introduction}

Kernel methods constitute a fundamental class of techniques  in machine learning, offering powerful tools for tackling complex nonparametric estimation and inference tasks in classification, regression, and beyond. Their flexibility and theoretical guarantees have led to widespread use in practice~\citep{scholkopf2002learning, evgeniou2005learning, Hofmann2007KernelMI,muller2018introduction}.
Despite these strengths, kernel methods often suffer from prohibitive computational costs.
To mitigate these scaling challenges, various approaches have been developed, notably low-rank approximations and randomized feature mappings~\citep{williams2001using, rahimi2007random}. Among these, Monte Carlo (MC) random features for kernel approximation~\citep{rahimi2007random} have gained significant popularity, as they are easy to implement and drastically reduce the computational complexity in kernel-based learning~\citep{liu2022survey, sinha2016learning, bach2017equivalence, chen2022online}.

While the MC random feature approach has proven effective, it inherently suffers from the statistical limitations of random sampling. In recent years, attention has turned to improving the accuracy and stability of kernel approximations by employing more carefully designed sampling schemes. Quasi–Monte Carlo (QMC) methods~\citep{niederreiter1992random,dick2010digital}, which replace purely random sampling with low-discrepancy point sets, have been shown to yield more accurate approximations under certain conditions. %
Some studies have successfully applied QMC features for kernel approximation~\citep{yang2014quasi,avron2016quasi,huangquasi}, demonstrating improvements over standard MC random features in low-dimensional settings. However, these benefits are reported to degrade with increasing dimension, typically becoming negligible or even detrimental when the dimension exceeds 10~\citep{huangquasi}.

To address the poor scalability of QMC features in higher dimensions, we explore a new approach: randomized quasi–Monte Carlo (RQMC) features for kernel approximation. 
By incorporating appropriate randomization schemes, RQMC can maintain the improved convergence properties of QMC in low-dimensional scenarios without succumbing to the curse of dimensionality that plagues non-randomized QMC methods~\citep{l2018randomized,hok2022importance}. 
In this paper, we establish the average and deterministic error bounds for RQMC based kernel approximation, and the theoretical guarantees for application of the RQMC method in kernel ridge regression (KRR).
Our proposed RQMC-based kernel approximation method proves to be computationally more efficient than MC methods, and
does not incur any theoretical loss in the statistical error rate.

We further support these theoretical insights with comprehensive empirical evaluations. In low-dimensional settings, our experiments confirm that the proposed RQMC features  match the performance of QMC features, offering significantly improved accuracy over the MC method with the same number of features. In intermediate to moderately high dimensional domains, RQMC does not degrade as QMC does. Instead, it remains competitive, typically exhibiting performance on par with MC random features or better, and consistently outperforming the non-randomized QMC approach. 
Hence, RQMC based kernel approximation  emerges as a robust alternative that seamlessly adapts to varying dimensional complexities without sacrificing theoretical soundness or empirical stability, offering a valuable new tool for scalable kernel methods in modern machine learning applications.

\subsection{Background on QMC and RQMC}
We first introduce the necessary background on QMC and RQMC. 
For a comprehensive introduction, we refer readers to \citet{niederreiter1992random,dick2010digital,owen2023practical}.
QMC methods replace random samples used in MC with carefully constructed deterministic sequences, often referred to as {low-discrepancy (LD) sequences}. 
The key property of these sequences is their low {\it star discrepancy},
where the {star discrepancy} $D^*_{M}$ of a sequence $\{\mathbf{x}_1, \ldots, \mathbf{x}_M\}$ in $[0,1]^d$ is defined as
\begin{equation*}
  D^*_{M} \;=\; \sup_{\mathbf{t} \in [0,1]^d} \Bigg| \frac{1}{M}\sum_{n=1}^{M} \mathbf{1}\{\mathbf{x}_n \le \mathbf{t}\} \;-\; \prod_{j=1}^d t_j \Bigg|,
\end{equation*}
where $\mathbf{x}_n \le \mathbf{t}$ means $x_{n,j} \le t_j$ for each dimension $j = 1,\ldots,d$.
Intuitively, the star discrepancy captures how much deviation there is between the empirical distribution of the points and the ideal uniform distribution.
LD sequences satisfy 
\[
D^*_{M} = O\bigl((\log M)^d / M\bigr),
\]
which is superior to the $O_P(M^{-1/2})$ star discrepancy decay of classic MC~\citep{dick2010digital,owen2023practical}.

\paragraph{Koksma--Hlawka Inequality.}
The Koksma-Hlawka (KH) inequality serves as the theoretical foundation for the QMC approach for approximating integrals.
Consider the integral
\begin{equation*}
  I(f) \;=\; \int_{[0,1]^d} f(\mathbf{x}) \, \dd \mathbf{x}.
\end{equation*}
The Koksma--Hlawka inequality \citep{hlawka1961funktionen,niederreiter1992random} states that for a function $f$ of \emph{bounded variation} $V(f)$ in the sense of Hardy and Krause,
\begin{equation*}
  \Bigg|\frac{1}{M}\sum_{n=1}^{M} f(\mathbf{x}_n) \;-\; I(f)\Bigg| \;\le\; V(f)\, D^*_{M},
\end{equation*}
where $\{\mathbf{x}_n\}_{n=1}^M$ is a sequence of points in $[0,1]^d$, and $D^*_{M}$ is its star discrepancy. Hence, the rate of convergence of the QMC estimator depends on both the smoothness of $f$ and the discrepancy of the sequence used for sampling.

Some commonly used QMC sequences include Halton, Sobol' and Faure sequences~\citep{halton1960efficiency,halton1964algorithm,sobol1967distribution,faure1982discrepance}.
Halton sequences \citep{halton1960efficiency} generalize the one-dimensional Van der Corput sequence to multiple dimensions by using distinct prime bases for each dimension. 
Sobol' sequences \citep{sobol1967distribution} are among the most popular LD sequences in machine learning and statistical contexts due to their ease of generation and good empirical performance. They are constructed based on direction numbers in base $2$ and exhibit low star discrepancy, and can be combined with randomization strategies like {scrambling} \citep{owen1997scrambled} to construct RQMC sequences.
Faure sequences \citep{faure1982discrepance} are another class of LD sequences constructed in base $b$ using a permuted polynomial representation. 

Digital nets constitute a broader framework for generating low-discrepancy point sets in $[0,1]^d$ \citep{niederreiter1992random, dick2010digital}. 
Let $d \ge 1$ and $b \ge 2$ be integers. An \emph{elementary interval in base ${b}$} is a
subinterval of $[0, 1)^d$ of the form
$$
E_{\bm{k},\bm{c}} = \prod_{j=1}^d \left[ \frac{c_j}{b^{k_j}}, \frac{c_j + 1}{b^{k_j}} \right)
$$
for integers $k_j$ and $c_j$, with $k_j \ge 0$ and $0 \le c_j < b^{k_j}$. 
Let $m \ge t \ge 0$ be integers. The sequence $\boldsymbol{x}_1, \dots, \boldsymbol{x}_{b^m} \in [0, 1)^d$
is a $(t, m, d)$-net in base ${b}$ if every elementary interval in base $b$ of volume
$b^{t-m}$ contains exactly $b^t$ points of the sequence. Intuitively, this means that every subregion of the space (in the form of an elementary interval) gets a fair share of points. 
The infinite extensions of the $(t,m,d)$-nets are called $(t,d)$-sequences.
For $t \geq 0$, the infinite sequence $\x_1, \x_2, \ldots \in [0, 1]^d$ is a $(t, d)$-sequence in base $b$ if for all $k \geq 0$ and $m \geq t$ the sequence $\x_{k b^m + 1}, \ldots, \x_{(k+1) b^m}$ is a $(t, m, d)$-net in base $b$~\citep{owen2023practical}.
In particular, Faure sequences are $(0,d)$-sequences in base $p$ with $p \ge d$ being a prime number,
and Sobol' sequences are $(t,d)$-sequences in base $2$~\citep{owen2023practical}.

Randomized Quasi-Monte Carlo (RQMC) \citep{owen1995randomly,owen1997scrambled, l2002recent} methods aim to combine the best of both worlds: they preserve the low-discrepancy structure of QMC sequences while incorporating a layer of randomness that enables unbiased estimation and variance evaluation through replication. In one widely used RQMC technique, \emph{digital scrambling}~\citep{owen1995randomly}, each point of a QMC sequence is randomly permuted in its digital representation in base $b$, yielding a {scrambled} sequence whose discrepancy properties remain advantageous while still allowing the practitioner to compute empirical variances in a manner similar to standard MC. 
The practical utility of RQMC is especially pronounced in high-dimensional integration tasks where direct QMC can still face challenges, but well-chosen scramblings can often noticeably reduce the variance compared to classic MC.

The digital nets mentioned earlier can be scrambled to obtain scrambled nets while preserving the low discrepancy features. For instance, one can apply a random linear transformation or random permutation of the digits in the base-$b$ expansions of each point. The resulting scrambled digital net inherits the low-discrepancy characteristics of the original net, allowing the construction of unbiased RQMC estimators \citep{owen1997scrambled, l2002recent}. In particular, a scrambled $(t, m, d)$-net remains a $(t, m, d)$-net with probability 1 after scrambling 
\citep[Proposition 17.2]{owen2023practical}. Scrambling not only provides a mechanism for variance estimation but also mitigates certain systematic artifacts that can arise when employing purely deterministic point sets, especially in higher dimensions.

The implementations of RQMC sequences are available in major computational
softwares (e.g., the Python package SciPy \citealp{2020SciPy-NMeth}).

\subsection{Literature Review}

Kernel methods underpin many machine learning algorithms, including kernel ridge regression, support vector machines and Gaussian processes, by allowing nonlinear decision functions to be learned efficiently in a high-dimensional Reproducing Kernel Hilbert Space (RKHS) \citep{scholkopf2002learning, Cortes1995, Rasmussen2006, huang2022kernel,gretton2012kernel,belkin2006manifold}. 
However, their high computational complexity often hinders scalability~\citep{rudi2017generalization, lu2014scale, cesa2015complexity}. 
A significant advance to mitigate this issue is the use of {random features}, where the kernel function is represented by an inner product in a finite-dimensional space \citep{rahimi2007random}.
Other variants have also been investigated, such as using structured transforms \citep{Le2013} or refining the sampling strategy \citep{Sutherland2015}.
For KRR, it is shown that RF can achieve significant reduction in computational complexity without sacrificing statistical accuracy~\citep{li2019towards, rudi2017generalization, avron2017random}.
Despite these advances, standard MC sampling remains vulnerable to high variance and slow convergence. 

Quasi-Monte Carlo (QMC) methods, introduced in~\cite{korobov1959approximate, korobov1963number}, 
using low-discrepancy sequences, 
offer a means of more uniformly covering the input space, potentially improving estimation quality and convergence \citep{niederreiter1992random, dick2010digital, dick2013high}.

 While QMC sequences show promise, they are not without drawbacks, and could suffer in higher dimensional settings \citep{huangquasi}. 
 RQMC methods, introduced in~\cite{cranley1976randomization, owen1995randomly, owen1997scrambled},
 provide a remedy by introducing a controlled form of randomness to QMC sequences, thereby producing randomized low-discrepancy samples~\citep{owen2023practical, dick2010digital}.

QMC and RQMC have been used in kernel methods in the literature. 
\citet{ben2021density} studied the effectiveness of RQMC for kernel density estimation.
\cite{di2022quasi} examined the use of QMC for exponentiated quadratic kernel in latent force models.
\cite{hertrich2024fast} used QMC slicing for fast summation of radial kernels.
\cite{yang2014quasi, avron2016quasi} used QMC sequences to enhance the efficiency of RF  and introduced a discrepancy measure called 
\enquote{box discrepancy}.
\cite{huangquasi} 
further solidified the theoretical foundation of using QMC for kernel approximation; they demonstrated that for a broad class of kernels, including the widely used Gaussian kernel, QMC methods can achieve a significant improvement in approximation error.
They also highlighted the benefits of QMC features in kernel ridge regression, where fewer random features are needed to achieve the same level of accuracy.
However, higher dimensional challenges exist for the QMC method: when the dimension exceeds roughly $10$, 
the performance of QMC features was observed to degrades significantly and becomes even worse than the vanilla MC method~\citep{huangquasi}.
To address this problem, we propose the use of RQMC, and establish the average and deterministic error bounds for RQMC features in kernel approximation, as well as the theoretical guarantees for its performance in KRR.

\subsection{Organization}
We present the RQMC based kernel approximation approach in Section~\ref{sec:approx_kernel_func_with_rqmc}, and provide both average-case and deterministic-case approximation error bounds. In Section~\ref{sec:KRR}, we show that in the application of kernel ridge regression, the RQMC-based random features achieve the same statistical error rate as the exact KRR, with lower computational cost compared with MC-based random features. 
In Section~\ref{sec:simulation}, the empirical evidence is provided to show that 
in kernel approximation and KRR, while QMC based random features degrades significantly as the dimension increases, the RQMC based random features remain stable in both low dimensions and moderately high dimensions, making it a preferred choice in practice.
Proofs and additional simulation results are presented in the appendices.

\section{Approximate Kernel Functions with RQMC}
\label{sec:approx_kernel_func_with_rqmc}
In this section, we introduce RQMC-based kernel approximation, and provide the average-case and deterministic-case error bounds.

Kernel methods often rely on a kernel function $K: \mathcal{X} \times \mathcal{X} \to \mathbb{R}$, where $\mathcal{X} \subseteq \mathbb{R}^d$, that admits an integral representation of the form
\begin{equation}
\label{eq:integralrepresentation}
K\bigl(\mathbf{x}, \mathbf{x}^{\prime}\bigr)
= \int_{\Omega} \psi\bigl(\mathbf{x}, \omega\bigr)\,\psi\bigl(\mathbf{x}^{\prime}, \omega\bigr)\, \mathrm{d}\pi(\omega),
\end{equation}
where $\pi$ is a probability measure defined over some space $\Omega$, and $\psi(\cdot, \cdot)$ is a suitable mapping from $\mathcal{X} \times \Omega$ to $\mathbb{R}$. A notable instance of such kernels arises when $K$ is shift-invariant, i.e., $K\bigl(\mathbf{x}, \mathbf{x}^{\prime}\bigr) = h\bigl(\mathbf{x} - \mathbf{x}^{\prime}\bigr)$. The Bochner's theorem \citep{bochner1933monotone} states that every continuous, shift-invariant kernel on $\mathbb{R}^d$ is the Fourier transform of a finite non-negative symmetric Borel measure $\mu$ on $\mathbb{R}^d$, such that
\begin{equation}
\label{eq:bochner}
\begin{aligned}
h(\mathbf{x} - \mathbf{x}') 
&= \int_{\mathbb{R}^{d}} e^{-i(\mathbf{x} - \mathbf{x}')^\top \omega}\,\mathrm{d}\,\mu(\omega) \\
&= \int_{\mathbb{R}^d} \int_{0}^{2\pi}
   \frac{1}{\pi}\cos(\mathbf{x}^\top \omega + b)\,
   \cos(\mathbf{x}'^\top \omega + b)\,\mathrm{d}b\,\mathrm{d}\mu(\omega).
\end{aligned}
\end{equation}
This framework encompasses many commonly used kernels, such as Gaussian kernel, Laplacian kernel and Cauchy kernel~\citep{huangquasi}. For example, for Gaussian kernel $K(\x,\x')=\exp(-\|\sigma (\x - \x')\|^2/2)$, $\mu$ is the Gaussian measure $\mu\sim \mathcal{N}(\mathbf{0}, \sigma^2 \mathbf{I}_d)$.

When $K$ can be represented via the above integral \eqref{eq:bochner}, one may approximate $K\bigl(\mathbf{x}, \mathbf{x}^{\prime}\bigr)$ by an average:
\begin{equation}
K_{M}\bigl(\mathbf{x}, \mathbf{x}^{\prime}\bigr) 
= \frac{1}{M}\sum_{i=1}^{M} \psi\bigl(\mathbf{x}, \omega_i\bigr)\,\psi\bigl(\mathbf{x}^{\prime}, \omega_i\bigr),
\end{equation}
where $\{\omega_i\}_{i=1}^M$ are independent and identically distributed (i.i.d.) random variables drawn from $\pi$. This strategy forms the basis of the well-known random features approach in kernel methods, which reduces the computational complexity of the kernel ridge regression 
($O(n^3)$ in time; ${O}(n^2)$ in space) to that of the ordinary ridge regression on \(\mathbb{R}^M\) 
(\({O}(nM^2 + M^3)\) in time; \({O}(nM)\) in space), with \(M \ll n\).

Here, we propose to replace the MC sequence by a randomized quasi-Monte Carlo sequence:
\begin{definition}[RQMC features]
    Suppose there exists a function $\psi:\mathcal{X}\times [0,1]^p\to\R$ such that
$$K(\x,\x')=\int_{[0,1]^p}\psi(\x,\omega)\psi(\x',\omega) {\rm d} \omega.$$
The kernel $K\bigl(\mathbf{x}, \mathbf{x}^{\prime}\bigr)$ is approximated by RQMC features as follows:
\begin{equation}
\label{eq:K_M}
K_{M}\bigl(\mathbf{x}, \mathbf{x}^{\prime}\bigr) 
= \frac{1}{M}\sum_{i=1}^{M} \psi\bigl(\mathbf{x}, \omega_i\bigr)\,\psi\bigl(\mathbf{x}^{\prime}, \omega_i\bigr),
\end{equation}
where $\{\omega_i\}_{i=1}^M$ are a sequence of RQMC features.
\end{definition}

Assume that $\mu$ from Bochner's theorem~\eqref{eq:bochner} is a probability measure with independent components, with the $i$-th component having cumulative distribution function $\Phi_{i}(t)(i=1,2, \ldots, d)$. Let $\boldsymbol{\Phi}(\mathbf{t}):=\left(\Phi_{1}(\mathbf{t}), \ldots, \Phi_{d}(\mathbf{t})\right)^{\top}$, and $\boldsymbol{\Phi}^{-1}(\mathbf{t}):=\left(\Phi_{1}^{-1}(\mathbf{t}), \ldots, \Phi_{d}^{-1}(\mathbf{t})\right)^{\top}$, where $\Phi_{i}^{-1}(\mathbf{t})$ denotes the inverse function of the monotone function $\Phi_{i}(\mathbf{t})$. By a change of variable, \eqref{eq:bochner} reduces to
\begin{equation}
\begin{split}
\label{eq:shift-invariant-kernel-integral}
& K\left(\mathbf{x}, \mathbf{x}^{\prime}\right)=h\left(\mathbf{x}-\mathbf{x}^{\prime}\right)= \\
& \int_{[0,1]^{d+1}} 2 \cos \left(\mathbf{x}^{\top} \mathbf{\Phi}^{-1}(\mathbf{t})+2 \pi b\right) \cos \left(\left(\mathbf{x}^{\prime}\right)^{\top} \mathbf{\Phi}^{-1}(\mathbf{t})+2 \pi b\right) \mathrm{d} b \mathrm{~d} \mathbf{t} .
\end{split}
\end{equation}
Therefore, the integral representation \eqref{eq:integralrepresentation} holds with $\omega=(\mathbf{t}, b)$ following Unif $[0,1]^{d+1}$ and $\psi(\mathbf{x}, \omega)=$ $\sqrt{2} \cos \left(\mathbf{x}^{\top} \mathbf{\Phi}^{-1}(\mathbf{t})+2 \pi b\right)$.

As $t$ approaches the boundary of $[0,1]^{d}$, the integrand in~\eqref{eq:shift-invariant-kernel-integral} 
oscillates back and forth and has unbounded variation (so classical Koksma-Hlawka inequality is not applicable). We therefore need a condition to characterize the situation where the singularity is mild so that $K$ can still be well approximated by $K_{M}$. We adopt the regularity condition as in \citet{huangquasi}:

\begin{condition}
\label{cond1}
$K(\cdot, \cdot)$ is shift invariant with $\Phi_{i}$ defined as above $(i=1, \ldots, d)$ satisfying $\frac{\mathrm{d}}{\mathrm{d} t} \Phi_{i}^{-1}(t) \leq \frac{C_{i}}{\min (t, 1-t)}$ for some constant $C_{i}>0$ and all $t \in(0,1)$.
$\mathcal{X}$ is compact.
\end{condition}

Condition \ref{cond1} helps control the derivatives of the integrand in~\eqref{eq:shift-invariant-kernel-integral} as $\mathbf{t}$ approaches the boundary of $[0,1]^{d}$. It is known that the Gaussian kernel and Cauchy kernel over a compact domain satisfy Condition 1
\citep[Proposition 2.1]{huangquasi}.

\subsection{Average Error Bound}
In this section, we establish several average error bounds for the proposed RQMC features.

Under the Condition \ref{cond1}, we first establish an average-case approximation error bound of $K_M$ to $K$ below (see Appendix~\ref{pf:thm:qmc1_averagecase_kernelapprox_error} for a proof).

\begin{theorem}
\label{thm:qmc1_averagecase_kernelapprox_error}
    Suppose $K(\cdot, \cdot)$ satisfies Condition \ref{cond1}, and an RQMC sequence on $[0,1]^{d+1}$ satisfying $\mathcal{D}^{*}\left(\left\{\mathbf{h}_{i}\right\}_{i=1}^{M}\right) \leq$ $C \frac{\log ^{d+1} M}{M}$ for all $M \geq 2$ is used. Then there exists a constant $C^{\prime}>0$ (depending on $C, \mathcal{X} \subset \mathbb{R}^{d}$ and $K$) such that for all $M \geq 2$,
$$
\sup _{\mathbf{x}, \mathbf{x}^{\prime}}
\mathbb{E}
\left|K_{M}\left(\mathbf{x}, \mathbf{x}^{\prime}\right)-K\left(\mathbf{x}, \mathbf{x}^{\prime}\right)\right|
\leq C^{\prime} \frac{(\log M)^{2 d+1}}{M}.
$$
\end{theorem}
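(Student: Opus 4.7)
The plan is to reduce the result to a deterministic, Koksma--Hlawka style bound that already matches the target rate pathwise, and then take expectation. This works because for standard RQMC constructions (e.g., scrambled $(t,m,d)$-nets) the assumed star-discrepancy bound $D^*_M \le C(\log M)^{d+1}/M$ holds with probability one, so the expectation in the statement is a conservative wrapper around a deterministic QMC-type argument, essentially the one used by \citet{huangquasi}. Writing $\omega = (\mathbf{t},b)$ and $f(\omega;\mathbf{x},\mathbf{x}') = 2\cos(\mathbf{x}^\top \boldsymbol{\Phi}^{-1}(\mathbf{t})+2\pi b)\cos((\mathbf{x}')^\top \boldsymbol{\Phi}^{-1}(\mathbf{t})+2\pi b)$, one has $K(\mathbf{x},\mathbf{x}')=\int_{[0,1]^{d+1}} f$ by \eqref{eq:shift-invariant-kernel-integral}, but $f$ has unbounded Hardy--Krause variation because $\boldsymbol{\Phi}^{-1}$ blows up at $t_i\in\{0,1\}$, so Koksma--Hlawka cannot be applied directly.

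The first step is to introduce a truncation parameter $\eta\in(0,1/2)$ and replace $f$ by its restriction (or a coordinate-wise clamp) $f_\eta$ to the interior box $A_\eta=[\eta,1-\eta]^d\times[0,1]$. Then decompose
$$
K_M(\mathbf{x},\mathbf{x}')-K(\mathbf{x},\mathbf{x}') \;=\; T_1 + T_2 - T_3,
$$
with $T_1=M^{-1}\sum_i f_\eta(\omega_i)-\int f_\eta$, $T_2=M^{-1}\sum_i(f-f_\eta)(\omega_i)$, and $T_3=\int(f-f_\eta)$. Since $|f|\le 2$ and $[0,1]^{d+1}\setminus A_\eta$ has volume at most $2d\eta$, one gets $|T_3|\le 4d\eta$ immediately. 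For $T_2$, the same pointwise bound together with the definition of the star discrepancy applied to the indicator of the boundary slab gives $|T_2|\le 4d\eta + C'' D^*_M$ deterministically, regardless of the realization.

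The main work is bounding $T_1$ by Koksma--Hlawka: $|T_1|\le V_{HK}(f_\eta)\cdot D^*_M$. To estimate $V_{HK}(f_\eta)$, for each nonempty $S\subseteq\{1,\dots,d+1\}$ one expands $\partial_S f$ via the product rule: each $t_i$-derivative pulls out a factor $x_i\,\tfrac{d}{dt}\Phi_i^{-1}(t_i)$, which by Condition~\ref{cond1} and compactness of $\mathcal{X}$ is bounded by a constant times $1/\min(t_i,1-t_i)$, while the $b$-derivative contributes an $O(1)$ factor. Integrating the resulting singular weights over the face of $A_\eta$ and using $\int_\eta^{1-\eta}dt/\min(t,1-t)\lesssim\log(1/\eta)$, one obtains $V_{HK}(f_\eta)\le \tilde C(\log(1/\eta))^d$ uniformly in $\mathbf{x},\mathbf{x}'\in\mathcal{X}$. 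Taking $\eta=1/M$ and combining the three pieces yields, on the probability-one event that $D^*_M\le C(\log M)^{d+1}/M$, the pathwise bound $|K_M-K|\le C'(\log M)^{2d+1}/M$ for every $\mathbf{x},\mathbf{x}'$, from which the statement follows by taking expectation and supremum. The chief technical obstacle is the bookkeeping in the variation bound: tracking how the product-rule expansion mixes the $\|\mathbf{x}\|$- and $\|\mathbf{x}'\|$-dependent factors with the singular weights $1/\min(t_i,1-t_i)$ across all $2^{d+1}$ subsets $S$, and verifying that the resulting constant stays uniform in $\mathbf{x},\mathbf{x}'\in\mathcal{X}$ so that the final supremum is harmless.
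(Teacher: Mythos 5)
Your proposal is correct, but it takes a genuinely different route from the paper at the one step that matters. Both arguments use the same three-term truncation decomposition (your $T_1,T_2,T_3$ correspond to the paper's Koksma--Hlawka term for the clamped function $\tilde f_M$, the empirical boundary term $\frac{1}{M}\sum_i|\tilde f_M(\mathbf{h}_i)-f(\mathbf{h}_i)|$, and the $L^1$ truncation error), and both use the same variation bound $V_{\mathrm{HK}}(\tilde f_M)\lesssim(\log(1/\varepsilon_M))^d$ imported from \citet{huangquasi}. The difference is how the empirical boundary term is controlled. The paper exploits the fact that each RQMC point is marginally uniform, so that \emph{after taking expectation} this term collapses to $\int|f-\tilde f_M|$; this is precisely why the paper states only an average bound, and its accompanying remark asserts that a deterministic argument is unavailable because RQMC points do not deterministically avoid the boundary. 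You instead observe that $f$ is uniformly bounded (only its variation blows up), so the boundary term is at most $4M^{-1}\#\{i:\omega_i\notin A_\eta\}$, and each of the $2d$ boundary slabs is an anchored box (or a complement of one), so the star discrepancy hypothesis itself gives $M^{-1}\#\{i:\omega_i\notin A_\eta\}\le 2d(\eta+D^*_M)$ pathwise. With $\eta=1/M$ this term is $O((\log M)^{d+1}/M)$, dominated by the Koksma--Hlawka term $O((\log M)^{2d+1}/M)$, so you obtain the stated rate \emph{deterministically} on every realization satisfying the discrepancy hypothesis, and the expectation is indeed superfluous. This is a strictly stronger conclusion than the theorem claims, in the spirit of the paper's Theorem~\ref{thm:qmc1_deterministiccase_kernelapprox_error} but without requiring $M=b^m$ or the exact point-counting property of $(t,m,d+1)$-nets (paying one extra power of $\log M$ relative to that result). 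What the paper's expectation argument buys in exchange is a marginally cleaner constant and no reliance on the slab-counting step; what your argument buys is a pathwise guarantee and a demonstration that the paper's remark about the impossibility of a direct deterministic bound is overly pessimistic under the stated hypotheses. The remaining details you defer (uniformity of the constants over compact $\mathcal{X}$, the product-rule bookkeeping for $V_{\mathrm{HK}}$) are exactly the computations already carried out in \citet[Appendix B.1]{huangquasi} and pose no obstacle.
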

\begin{remark}
Compared with \citet[Theorem 2.2]{huangquasi}, Theorem~\ref{thm:qmc1_averagecase_kernelapprox_error} is an average error bound instead of a deterministic one. Note that any point from an RQMC sequence marginally follows the uniform distribution over the unit cube, which does not deterministically avoid the boundary. Therefore, the technique used in 
\citet[Theorem 2.2]{huangquasi}
cannot be directly applied. On the other hand, compared with \citet[Theorem 2.2]{huangquasi} for which the use of Halton sequence is crucial, Theorem~\ref{thm:qmc1_averagecase_kernelapprox_error} is not restricted to a particular choice of RQMC sequence.
\end{remark}
\begin{remark}
Here, we provide some examples of RQMC sequences for which the above Theorem 2.1 is applicable.
A widely used RQMC sequence is the scrambled Sobol' sequence, which has been well implemented in major computational softwares. For the scrambled Sobol' sequence, it is recommended to take $M$ as a power of 2 , for which the power of $(\log M)$ in the bound of $\mathcal{D}^{*}\left(\left\{\mathbf{h}_{i}\right\}_{i=1}^{M}\right)$ can be reduced by 1 \citep[Theorem 4.10]{niederreiter1992random}, i.e., for scrambled Sobol’ sequence $\left\{\mathbf{h}_{i}\right\}_{i=1}^{M}$ in dimension $d+1$,
    \begin{equation}
    \label{eq:star-discrepancy-bound}
        \mathcal{D}^{*}\left(\left\{\mathbf{h}_{i}\right\}_{i=1}^{M}\right) \leq C \frac{\log ^{d} M}{M}+O\left(\frac{\log ^{d-1} M}{M}\right).
    \end{equation}
    
If $\ba_1,\ldots,\ba_M$ is a $(t,m,d+1)$-net in base $b$, 
let $\bh_1,\ldots,\bh_M$ be 
a nested uniform  scramble of $\ba_1,\ldots,\ba_M$, then $\bh_1,\ldots,\bh_M$ is also a $(t,m,d+1)$-net in base $b$, with  probability 1~\citep{owen1995randomly}. Thus $\bh_1,\ldots,\bh_M$ also satisfy the star discrepancy bound in~\eqref{eq:star-discrepancy-bound}, according to~\citet[Theorem 4.10]{niederreiter1992random}.

Cranley-Patterson (CP) rotation~\citep{cranley1976randomization} provides another way to randomize QMC  points. It is shown that  low discrepancy points randomized by the CP rotation still has low discrepancy~\citep[(17.10)]{owen2023practical}. Therefore, an RQMC sequence resulting from a low discrepancy sequence (e.g., Halton sequence, $(t,m,d+1)$-nets) randomized by the CP rotation still satisfies the star discrepancy bound required in Theorem~\ref{thm:qmc1_averagecase_kernelapprox_error}.
\end{remark}

Theorem~\ref{thm:qmc1_averagecase_kernelapprox_error} focused on the averaged worst case error.
Next, we show that the averaged $L^2$ error of the RQMC estimator is also superior to that of the MC estimator.
We first introduce some notations:
A sequence $\left(\mathbf{X}_{i}\right)$ of $\lambda b^{m}$ points is called a $(\lambda, t, m, s)$-net 
\citep{owen1997monte}
in base $b$ if every elementary interval in base $b$ of volume $b^{t-m}$ contains $\lambda b^{t}$ points of the sequence and no elementary interval in base $b$ of volume $b^{t-m-1}$ contains more than $b^{t}$ points of the sequence. Here, $s, m, t, b, \lambda$ are integers with $s \geq 1$, $0 \leq t \leq m, b \geq 2$, and $1 \leq \lambda<b$. Trivially, a $(t, m, s)$-net in base $b$ is a $(1, t, m, s)$-net in base $b$, and for base $b=2$ all $(\lambda, t, m, s)$-nets are also $(t, m, s)$-nets. If $\left(\mathbf{X}_{i}\right)_{i \geq 1}$ is a $(t, s)$-sequence in base $b$, then $\left(\mathbf{X}_{i}\right)_{i=a b^{m+1}+1}^{a b^{m+1}+\lambda b^{m}}$ is a $(\lambda, t, m, s)$-net in base $b$ for integers $a \geq 0$ and $1 \leq \lambda<b$.

\begin{theorem}
\label{thm:kernel-variance}
    Let $K(\cdot, \cdot)$ be a shift-invariant kernel (or a non-shift invariant kernel with a square integrable integrand). Suppose an RQMC sequence on $[0,1]^{d+1}$ based on a scrambled $(\lambda, t, m, d+1)$-net with $m\ge t$ is used. Then for fixed $\mathbf{x}$ and $\mathbf{x}^{\prime}$, we have
$$
\mathbb{E}\left[\left|K_{M}\left(\mathbf{x}, \mathbf{x}^{\prime}\right)-K\left(\mathbf{x}, \mathbf{x}^{\prime}\right)\right|^{2}\right]=o(1 / M).
$$
\end{theorem}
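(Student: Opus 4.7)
The strategy is to recast the MSE as the variance of an RQMC integration estimator and then invoke Owen's scrambled-net variance theorem. First, I would define the pointwise integrand $f_{\x,\x'}(\omega) := \psi(\x,\omega)\psi(\x',\omega)$, so that the approximation error becomes an RQMC cubature error:
$$K_M(\x,\x') - K(\x,\x') = \frac{1}{M}\sum_{i=1}^{M} f_{\x,\x'}(\omega_i) - \int_{[0,1]^{d+1}} f_{\x,\x'}(\omega)\,\dd\omega.$$
Any single point of a nested uniformly scrambled net is marginally uniform on $[0,1]^{d+1}$, so $\E[K_M(\x,\x')] = K(\x,\x')$ and hence the MSE coincides with the variance of the estimator.

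Next, I would appeal to Owen's classical result on the variance of scrambled $(\lambda,t,m,s)$-net quadrature (Owen, 1997a; see also Owen, 2023, Chapter 17). That theorem asserts that for any $f \in L^2([0,1]^s)$ and for a nested uniform scramble of a $(\lambda,t,m,s)$-net in base $b$ with $m \geq t$, the variance of the equal-weight estimator satisfies
$$\mathrm{Var}\Bigl(\tfrac{1}{M}\sum_{i=1}^{M} f(\omega_i)\Bigr) = o(1/M) \quad \text{as } M = \lambda b^m \to \infty,$$
strictly improving on the $\Theta(1/M)$ plain MC rate. Applied to the integrand $f_{\x,\x'}$ at fixed $(\x,\x')$, this immediately gives the desired $o(1/M)$ bound, provided $f_{\x,\x'} \in L^2$.

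Finally, I would verify the $L^2$ condition. In the shift-invariant case, the representation \eqref{eq:shift-invariant-kernel-integral} yields $|\psi(\x,\omega)| \leq \sqrt{2}$, so $|f_{\x,\x'}(\omega)| \leq 2$ uniformly, giving $f_{\x,\x'} \in L^\infty([0,1]^{d+1}) \subset L^2([0,1]^{d+1})$; the non-shift-invariant case is covered by the hypothesis of square-integrable integrand. Combining this with Owen's bound yields the theorem. The main (and essentially only) obstacle is to cite the correct form of Owen's variance theorem for $(\lambda,t,m,s)$-nets rather than the more familiar $(t,m,s)$-net statement, and to note that the conclusion is pointwise in $(\x,\x')$ as claimed, so no uniform control over $\mathcal{X} \times \mathcal{X}$ is needed here (that stronger claim was the province of Theorem \ref{thm:qmc1_averagecase_kernelapprox_error}).
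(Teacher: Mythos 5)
Your proposal is correct and follows essentially the same route as the paper: the paper also reduces the claim to the unbiasedness of the scrambled-net estimator plus Owen's variance theorem for scrambled $(\lambda,t,m,s)$-nets applied to the $L^2$ integrand $f_{\x,\x'}$ (the paper cites Owen's 1998 scrambling paper, Theorem 1). Your additional verification that $|f_{\x,\x'}|\leq 2$ in the shift-invariant case is a nice explicit touch but matches the paper's implicit reasoning.
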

\begin{remark}
    The above bound is an improvement compared with the MC method, where
$$
\mathbb{E}\left[\left|K_{M}\left(\mathbf{x}, \mathbf{x}^{\prime}\right)-K\left(\mathbf{x}, \mathbf{x}^{\prime}\right)\right|^{2}\right]=O(1 / M).$$ Theorem~\ref{thm:kernel-variance} follows from a direct application of 
\citet[Theorem 1]{owen1998scrambling}, 
as the integrand $f \in L^{2}[0,1]^{d+1}$.
In particular, it does not require $K(\cdot, \cdot)$ to satisfy Condition~\ref{cond1}.
\end{remark}

Under slightly stronger conditions, it can be shown that 
$\sup_{\mathbf{x},\mathbf{x}' \in \mc{X}} \mathbb{E} [ |K_M(\mathbf{x}, \mathbf{x}') - K(\mathbf{x}, \mathbf{x}')|^2 ]$
is of the order of $O(\frac{\log^d (M)}{M^2})$.
Here, we introduce the following smoothness condition of the integrand which is a revised version of the boundary growth condition 
proposed by~\citet{owen2006halton}. 
\begin{condition}
\label{assumption:owen_boundary_growth_condition}

    Suppose $f_{\x,\x'}(\bomega)$ is a square integrable real-valued function on $[0, 1]^{d+1}$,
     and the derivative $\frac{\partial^{{u}} f_{\x,\x'}}{\partial \bm{\omega}_{{u}}}$ exists on $[0, 1]^{d+1}$ for any 
     ${u} \subseteq \{1,2, \dotsc, {d+1}\}$ 
     and any $\x,\x' \in \mc{X}$. 
  There exists a constant $C>0$ and constants $A_j \geq 0$ for $ j \in \{ 1,2,\ldots, d+1 \}$  such that
	\begin{equation}
    \label{eq:boundary_growth_condition}
		\sup_{\x,\x' \in \mc{X}}
  \abs{\frac{\partial^{{u}} f_{\x,\x'}}{\partial \bm{\omega}_{{u} } }} \leq C \prod_{j=1}^{d+1} 
  \min(\omega_j, 1-\omega_j)^{- \mathbbm{1}_{j \in {u}} - A_j}
	\end{equation}
for all ${u} \subseteq \{1,2, \dotsc, {d+1}\}$. 
\end{condition}
\begin{remark}
    The set $u$ in Condition~\ref{assumption:owen_boundary_growth_condition} can be an empty set. When $u = \emptyset$, and $A_j=0$ for $j = 1,2, \ldots,d+1$, we adopt the convention that $0^0=1$.
\end{remark}
\begin{remark}
\label{rk:cond2}
    If $K(\cdot, \cdot)$ is a shift-invariant kernel satisfying Condition \ref{cond1},
    then its integral representation satisfies the Condition~\ref{assumption:owen_boundary_growth_condition}.
    In fact,
    let $\bm{w}=(\bm{t},b)$, then by \citet[Appendix B.1]{huangquasi},
the integrand function $f_{\x,\x'}$ 
can be re-written as
$$\begin{aligned}
f_{\x,\x'}(\mathbf{t},b) &= \cos\left( (\x-\x')^\top \BPhi^{-1}(\mathbf{t})  \right) -  \cos\left( (\x+\x')^\top \BPhi^{-1}(\mathbf{t}) + 4\pi b \right).
\end{aligned} $$
Let $D=\max_{\x,\mathbf{y}\in \mathcal{X},i\in\{1,\ldots,d\}} \{ |x_i-{y}_i|,|x_i+{y}_i|\}$. For any non-empty set $u\subset \{1,\ldots,d+1\}$ and $(\mathbf{t},b)\in(0,1)^{d+1}$, we have
$$\left|\partial^u f_{\x,\x'}
(\mathbf{t},b) \right| \leq 4\pi  D^{|u\backslash\{d+1\}|} \prod_{i\in u\backslash\{d+1\}}\frac{{\rm d} }{{\rm d}t_i}\Phi^{-1}_i(t_i).$$
And by Condition \ref{cond1}, $\frac{{\rm d} }{{\rm d}t}\Phi^{-1}_i(t)\leq \frac{C_i}{\min(t,1-t)} $ for some constant $C_i>0$ and all $ t\in (0,1)$. Therefore, the Condition \ref{assumption:owen_boundary_growth_condition} is satisfied with $C = 4\pi  D^{|u\backslash\{d+1\}|} \prod_{i\in u\backslash\{d+1\}} C_i$
and all $A_j=0$.
\end{remark}

Assuming $f_{\x,\x'}(\bomega)=\psi(\x,\bomega)\psi(\x',\bomega)$
satisfies Condition~\ref{assumption:owen_boundary_growth_condition}, we have
the following average error bound (see Appendix~\ref{pf:thm:sup_meansquare} for a proof).
\begin{theorem}
\label{thm:sup_meansquare}
    Let $\mc{X}$ be a bounded domain. Suppose $K(\cdot, \cdot)$ %
    is a shift-invariant kernel satisfying Condition \ref{cond1},
    or a general kernel with a square integrable integrand $f_{\x,\x'}(\bomega)=\psi(\x,\bomega)\psi(\x',\bomega)$
    satisfying 
    Condition \ref{assumption:owen_boundary_growth_condition} with all $A_j=0$. 
    Suppose the first $M=2^m \,(m \ge 4)$ points of a scrambled Sobol' $(t,s)$-sequence $(m \ge t \ge 0)$  on $[0, 1]^{d+1}$
is used. 
Then we have
\begin{equation*}
    \sup_{\x,\x' \in \mc{X}} \mathbb{E} \left[ |K_M(\mathbf{x}, \mathbf{x}') - K(\mathbf{x}, \mathbf{x}')|^2 \right] 
  \leq 
  C^2 \cdot \frac{2^{2t+7(d+1)}}{{d}!}  \frac{(\log_2 M)^d}{M^2},
\end{equation*}
where the constant $C = 4\pi  D^{|u\backslash\{d+1\}|} \prod_{i\in u\backslash\{d+1\}} C_i$ as specified in Remark~\ref{rk:cond2}, if $K(\cdot, \cdot)$ is a shift-invariant kernel satisfying Condition \ref{cond1}; 
and if $K(\cdot, \cdot)$ is a general kernel with a square integrable integrand 
    satisfying  Condition \ref{assumption:owen_boundary_growth_condition} with all $A_j=0$, the constant $C$ is the same as that specified in Condition \ref{assumption:owen_boundary_growth_condition}.
\end{theorem}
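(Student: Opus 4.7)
The plan is to reduce the mean-squared error to a variance via unbiasedness, then apply Owen's variance bound for scrambled $(t,m,d+1)$-nets in base $2$ under the boundary-growth condition with $A_j=0$. Since each point of a scrambled Sobol' sequence is marginally uniform on $[0,1]^{d+1}$, $K_M$ is unbiased for $K$, so $\mathbb{E}[|K_M(\x,\x') - K(\x,\x')|^2] = \operatorname{Var}(K_M(\x,\x'))$, and the task reduces to uniformly bounding this variance. For shift-invariant kernels, Remark~\ref{rk:cond2} supplies Condition~\ref{assumption:owen_boundary_growth_condition} with $A_j \equiv 0$ and an explicit $\x,\x'$-uniform constant $C$ arising from the boundedness of $\mc{X}$ (which keeps $D$ finite); the general-kernel hypothesis assumes this directly.

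The core step is the scrambled-net variance decomposition. Writing $f_{\x,\x'} = \sum_{u \subseteq \{1,\ldots,d+1\}} f_{\x,\x',u}$ in the ANOVA decomposition with $\sigma_u^2 := \|f_{\x,\x',u}\|_{L^2}^2$, Owen's identity for scrambled nets yields
\[
\operatorname{Var}(K_M(\x,\x')) = \sum_{u \neq \emptyset} \Gamma_{u,M}\, \sigma_u^2,
\]
where, for a scrambled $(t,m,d+1)$-net in base $2$, each gain coefficient $\Gamma_{u,M}$ admits an explicit bound of the form $\Gamma_{u,M} \leq 2^{t} \cdot 2^{c|u|} M^{-2} \cdot \binom{m}{|u|-1}$, capturing both the base-$2$ saving of one $\log$ factor and the combinatorial dependence on subset size.

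The ANOVA variances $\sigma_u^2$ are controlled using the $A_j=0$ case of Condition~\ref{assumption:owen_boundary_growth_condition}: repeated integration-by-parts expresses $f_{\x,\x',u}$ in terms of the mixed partials $\partial^u f_{\x,\x'}$, whose absolute value is bounded by $C \prod_{j \in u}\min(\omega_j,1-\omega_j)^{-1}$. The resulting one-dimensional integrals $\int_0^{1/2}[\log(1/t)]^{2}\,\mathrm{d}t < \infty$ give $\sigma_u^2 \leq C^2 \kappa^{|u|}$ for an explicit $\kappa$, uniformly in $(\x,\x')$. Summing $\sum_{k=1}^{d+1}\binom{d+1}{k}\Gamma_{u,M}\kappa^{k}$ and using $\binom{m}{k-1} \le m^{k-1}/(k-1)!$ so that the extremal contribution at $k=d+1$ produces the factor $m^{d}/d!$, we recover the overall $(\log_2 M)^d/(d!\,M^2)$ rate, with the remaining geometric prefactors absorbed into $2^{2t+7(d+1)}$.

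The main obstacle will be the explicit constant tracking: Owen's scrambled-net variance bounds come with intricate prefactors from the gain coefficients, and combining these with the ANOVA-variance estimates so as to collapse them to the advertised $2^{2t+7(d+1)}/d!$ form requires careful bookkeeping of the base-$2$ scrambling constants across subset sizes. By contrast, passing to $\sup_{\x,\x' \in \mc{X}}$ at the end is immediate, since the constant $C$ (whether arising from Condition~\ref{assumption:owen_boundary_growth_condition} or through Remark~\ref{rk:cond2} via the uniform bound $D$ on the compact set $\mc{X}$) does not depend on the particular $(\x,\x')$.
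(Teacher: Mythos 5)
Your overall strategy---unbiasedness reduces the mean-squared error to a variance, which is then controlled by Owen's scrambled-net variance theory together with the boundary-growth condition---is the right family of ideas and is the one the paper follows. But your ``core step'' contains a genuine gap: the decomposition $\operatorname{Var}(K_M(\x,\x')) = \sum_{u\neq\emptyset}\Gamma_{u,M}\,\sigma_u^2$ with $\sigma_u^2$ the \emph{ANOVA} variances, together with a net-only gain bound $\Gamma_{u,M} \le 2^{t}2^{c|u|}M^{-2}\binom{m}{|u|-1}$, is not a valid property of scrambled $(t,m,d+1)$-nets. Owen's gain coefficients are indexed by the pair (subset $u$, resolution level), not by $u$ alone; at the ANOVA level the best $f$-independent statement is $\operatorname{Var} \le \frac{\Gamma}{M}\sigma^2$ with $\Gamma$ a constant, which only yields the $O(1/M)$ of Theorem~\ref{thm:kernel-variance}. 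Indeed, for an $f$ whose component $f_u$ is concentrated at fine resolution levels the variance is of order $\sigma_u^2/M$, so no uniform bound of the form $\Gamma_{u,M}=O(M^{-2}\binom{m}{|u|-1})$ can hold. The $M^{-2}$ rate arises only from the \emph{product} of two facts that must be kept separate: (i) the net integrates exactly all Walsh levels $\bl$ with $\normone{\bl}\le m-t$, so only levels with $\normone{\bl}>m-t$ contribute, each with bounded gain $2^{-m+t+d+1}$ (Lemma~\ref{lemma:2}); and (ii) the boundary-growth condition with $A_j=0$ forces the per-level decay $\sup_{\x,\x'}\sigma^2_{\bl}\le C^2\,2^{-\normone{\bl}+5(d+1)}$ (Lemma~\ref{lemma:variance_sobol_multi_d}). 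Summing $2^{-\normone{\bl}}$ over $\normone{\bl}>m-t$ supplies the second factor of $M^{-1}$ and the $\binom{m-t+d+1}{d}\le 2m^d/d!$ term. Your proposal collapses (i) and (ii) into a single unproven claim about the net itself.

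The second gap is in your treatment of the component variances. Bounding $\sigma_u^2\le C^2\kappa^{|u|}$ via square-integrability of the logarithmic singularity controls the \emph{total} variance of each component but carries no decay in resolution, and combined with any correct gain bound it cannot give better than $O(1/M)$. What is actually needed---and what constitutes the bulk of the paper's proof---is the level-wise estimate $\sigma^2_{\bl}\lesssim 2^{-\normone{\bl}}$, uniformly in $(\x,\x')$. This does not follow from squaring the derivative bound and integrating (note $\int_0^{1/2}t^{-2}\dd t=\infty$): the paper first integrates the first difference $|f(y)-f(y+2^{-\ell})|\le C\int_y^{y+2^{-\ell}}t^{-1}\dd t$ over each dyadic cell, then squares, and uses a Taylor expansion of $\varTheta(x)=x\log x-x$ to show each cell contributes $O(2^{-2\ell}(2k+1)^{-2})$, which is summable in $k$; the multidimensional case additionally requires the signed-sum representation of $\beta_{\bl}$ over neighboring dyadic boxes. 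That computation is the technical heart of the theorem and is absent from your outline.
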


\begin{remark}
    Compared with the deterministic error bound in \citet[Theorem 2.2]{huangquasi} for 
    $ 
    \sup_{\x,\x' \in \mc{X}}
    \left[ |K_M(\mathbf{x}, \mathbf{x}') - K(\mathbf{x}, \mathbf{x}')|^2 \right] 
    $, the average-case error bound in Theorem \ref{thm:sup_meansquare} reduces the exponent of $\log M$ from $4d+2$ to $d$ (where $M$ denotes the number of random features). This improved bound aligns with its better empirical performance in higher dimensions, as observed in practice (see Section~\ref{sec:simulation}).
\end{remark}

\subsection{Deterministic Error Bound}
\label{sec:deterministic-case}
In this subsection, we establish deterministic error bounds for the RQMC-based kernel approximation and integral operator approximation.

The following theorem (proved in Appendix~\ref{pf:thm:qmc1_deterministiccase_kernelapprox_error}) provides a deterministic kernel approximation error bound for kernels satisfying Condition~\ref{cond1}.
\begin{theorem}
\label{thm:qmc1_deterministiccase_kernelapprox_error}
    Suppose $K(\cdot, \cdot)$ satisfies Condition \ref{cond1}, and an 
    scrambled $(t,m,d+1)$-net in base $b$
    on $[0,1]^{d+1}$
    with $m\ge t$
    satisfying $\mathcal{D}^{*}\left(\left\{\mathbf{h}_{i}\right\}_{i=1}^{M}\right) \leq$ $C \frac{\log ^{d} M}{M}$ for all $M=b^m$ is used. Then there exists a constant $C^{\prime}>0$ (depending on $C, \mathcal{X} \subset \mathbb{R}^{d}$ and $K$) such that for all $M=b^m$,
$$
\sup _{\mathbf{x}, \mathbf{x}^{\prime}}\left|K_{M}\left(\mathbf{x}, \mathbf{x}^{\prime}\right)-K\left(\mathbf{x}, \mathbf{x}^{\prime}\right)\right|
\leq C^{\prime} \frac{ \log^{2d} M }{M}.
$$
\end{theorem}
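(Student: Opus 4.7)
The plan is to mirror the boundary-truncation argument underlying \citet[Theorem 2.2]{huangquasi}, replacing the deterministic boundary avoidance of the Halton sequence with the star discrepancy bound $\mathcal{D}^* = O(\log^d M/M)$ of the scrambled $(t,m,d+1)$-net (which, as noted after Theorem~\ref{thm:qmc1_averagecase_kernelapprox_error}, remains a $(t,m,d+1)$-net with probability one). Fix $\delta>0$ to be tuned at the end, and split $[0,1]^{d+1}$ into a safe region $\mathcal{S}_\delta := [\delta,1-\delta]^{d+1}$ and its complement $\mathcal{B}_\delta$. Using the closed form $f_{\x,\x'}(\mathbf{t},b) = \cos((\x-\x')^\top \boldsymbol{\Phi}^{-1}(\mathbf{t})) - \cos((\x+\x')^\top \boldsymbol{\Phi}^{-1}(\mathbf{t}) + 4\pi b)$ from Remark~\ref{rk:cond2}, on $\mathcal{S}_\delta$ the integrand is smooth with derivatives controlled by Condition~\ref{cond1}, while on $\mathcal{B}_\delta$ one only has $|f_{\x,\x'}|\le 2$.

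First I would bound the boundary contribution. Since $|\mathcal{B}_\delta| = 1-(1-2\delta)^{d+1} = O(\delta)$, we immediately get $|\int_{\mathcal{B}_\delta} f_{\x,\x'}\,\dd\omega| = O(\delta)$. Writing $\mathcal{B}_\delta$ as a union of $2(d+1)$ coordinate slabs, each expressible as a difference of two anchored boxes, the star discrepancy bound yields
\[
\#\{\,i : \omega_i \in \mathcal{B}_\delta\,\} \le M|\mathcal{B}_\delta| + O(M\mathcal{D}^*) = O\!\bigl(M\delta + \log^d M\bigr),
\]
so the sum of $|f_{\x,\x'}(\omega_i)|$ over boundary points contributes at most $O(\delta + \log^d M/M)$ to $|K_M-K|$.

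Next, on the safe region I would use a plateau extension $\tilde f(\omega) := f_{\x,\x'}(\pi_\delta(\omega))$, where $\pi_\delta$ clamps each coordinate into $[\delta,1-\delta]$. Then $\tilde f$ agrees with $f_{\x,\x'}$ on $\mathcal{S}_\delta$, its mixed partials are supported in $\mathcal{S}_\delta$, and by Condition~\ref{cond1} and the calculation in Remark~\ref{rk:cond2}, for any nonempty $u \subseteq \{1,\ldots,d+1\}$,
\[
\bigl|\partial^u \tilde f(\omega)\bigr| \le 4\pi D^{|u\setminus\{d+1\}|} \prod_{j\in u\setminus\{d+1\}} \frac{C_j}{\min(t_j,1-t_j)} \quad \text{on } \mathcal{S}_\delta.
\]
The Vitali variation on the $u$-slice is therefore at most a constant times $\prod_{j\in u\setminus\{d+1\}} 2C_j\log(1/(2\delta))$. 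The decisive feature is that the $b$-coordinate ($j=d+1$) contributes only a bounded factor, because $\boldsymbol{\Phi}^{-1}$ does not depend on $b$; hence the largest exponent on $\log(1/\delta)$ arising from any $u$ is $d$, not $d+1$, giving $V(\tilde f) = O((\log(1/\delta))^d)$.

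Finally, applying the Koksma--Hlawka inequality to $\tilde f$ against the net yields
\[
\biggl|\frac{1}{M}\sum_{i=1}^M \tilde f(\omega_i) - \int_{[0,1]^{d+1}} \tilde f\,\dd\omega\biggr| \le V(\tilde f)\,\mathcal{D}^* = O\!\left(\frac{(\log(1/\delta))^d (\log M)^d}{M}\right).
\]
Since $\int_{\mathcal{S}_\delta}\tilde f = \int_{\mathcal{S}_\delta} f_{\x,\x'}$ by construction and $|\int_{\mathcal{B}_\delta}\tilde f| = O(\delta)$, combining with the boundary estimates above gives $|K_M-K| = O\bigl(\delta + \log^d M/M + (\log(1/\delta))^d (\log M)^d/M\bigr)$, uniformly in $\x,\x'$; the choice $\delta = 1/M$ then yields the claimed $O((\log M)^{2d}/M)$ bound. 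The main obstacle I anticipate is the tight variation estimate $V(\tilde f) = O((\log(1/\delta))^d)$: one must verify that the plateau extension does not inflate the Vitali variation on slices meeting $\partial\mathcal{S}_\delta$, and that the $b$-coordinate genuinely saves a $\log$-factor relative to a naive count---both points hinge on the specific structure of~\eqref{eq:shift-invariant-kernel-integral} in which $b$ enters only as a bounded phase shift.
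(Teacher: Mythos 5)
Your proposal is correct and follows the same architecture as the paper's proof: both clamp the integrand onto $[\varepsilon,1-\varepsilon]^{d+1}$ (the paper's $\tilde f_M$ from \citet[Appendix B.1]{huangquasi} is exactly your plateau extension $f\circ\pi_\delta$), both invoke the Hardy--Krause variation bound $V(\tilde f)=O\bigl((\log(1/\varepsilon))^{d}\bigr)$ --- with the exponent $d$ rather than $d+1$ coming, as you observe, from the fact that $b$ enters only as a bounded phase shift --- and both apply Koksma--Hlawka to the clamped function against the $O(\log^{d}M/M)$ discrepancy of the net, with the final choice $\varepsilon=1/M$. The one step you handle differently is the count of sample points falling in the boundary slabs. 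The paper chooses $\varepsilon_M=b^{t-m}$ so that each of the $2(d+1)$ slabs is an elementary interval of volume $b^{t-m}$ and therefore contains \emph{exactly} $b^{t}$ points of the $(t,m,d+1)$-net, giving an $O(1)$ point count and an $O(1/M)$ contribution; you instead bound the count by $M|\mathcal{B}_\delta|+O(MD^{*})=O(M\delta+\log^{d}M)$ using only the star discrepancy, giving an $O(\log^{d}M/M)$ contribution. Both are dominated by the $O(\log^{2d}M/M)$ Koksma--Hlawka term, so the final rate is unaffected; your version is marginally more general in that it needs only the discrepancy bound for this step, while the paper's version is sharper and makes explicit why the net structure (rather than, say, a Halton sequence, for which \citet{huangquasi} had to argue deterministic boundary avoidance) is the natural hypothesis here. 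Your flagged concern about the clamping not inflating the Vitali variation on slices meeting $\partial\mathcal{S}_\delta$ is resolved by exactly the inequality the paper cites, \citet[Inequality B.4]{huangquasi}, whose bound matches the one you claim.
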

\begin{remark}
    Compared with the upper bound for the Halton sequence~\citep[Theorem 2.2]{huangquasi}, the power of $\log M$ in Theorem~\ref{thm:qmc1_deterministiccase_kernelapprox_error} is reduced by $1$. This is achieved by requiring $M$ as a power of $b$. Note that the proof technique of Theorem~\ref{thm:qmc1_deterministiccase_kernelapprox_error} can be applied to QMC methods as well (i.e., QMC features using digital nets), and thus may be seen as an extension of~\citet[Theorem 2.2]{huangquasi}.
\end{remark}

\begin{remark}
    As one may expect, compared with the average-case error bound in Theorem~\ref{thm:sup_meansquare} for
    $
    \sup_{\x,\x' \in \mc{X}} \E 
    \left[ |K_M(\mathbf{x}, \mathbf{x}') - K(\mathbf{x}, \mathbf{x}')|^2 \right] 
    $,
     the deterministic error bound in Theorem~\ref{thm:qmc1_deterministiccase_kernelapprox_error} has a larger exponent of $\log M$.
\end{remark}

The above determinstic bound is very useful for establishing other kernel-related estimation bounds, e.g., for approximating the integral operator as shown in the proposition below.

For kernel ridge regression, suppose $(\mathbf{X}, Y) \in \mathcal{X} \times \mathbb{R}$ follows a distribution $P_{\mathbf{X} Y}$ with marginal distributions $P_{\mathbf{X}}$ and $P_{Y}$. Given the kernel function $K$, the integral operator $L: L^{2}\left(P_{\mathbf{X}}\right) \rightarrow L^{2}\left(P_{\mathbf{X}}\right)$ is defined as:
\begin{equation}
\label{eq:integral-operator}
L f(\mathbf{x}):=\mathbb{E}_{\mathbf{X} \sim P_{\mathbf{X}}}[K(\mathbf{X}, \mathbf{x}) f(\mathbf{X})].
\end{equation}
Define its approximation $L_{M}: L^{2}\left(P_{\mathbf{X}}\right) \rightarrow L^{2}\left(P_{\mathbf{X}}\right)$ as
\begin{equation*}
    L_{M} f(\mathbf{x}):=\mathbb{E}_{\mathbf{X} \sim P_{\mathbf{X}}}\left[K_{M}(\mathbf{X}, \mathbf{x}) f(\mathbf{X})\right].
\end{equation*}

The following proposition on the approximation error of the integral operator can be shown, using the same technique as in \citet[Proposition 2.6]{huangquasi}, which will be used in the proof of the theoretical properties of RQMC features in kernel ridge regression.
\begin{proposition}
    Under the same conditions as in Theorem~\ref{thm:qmc1_deterministiccase_kernelapprox_error}, we have
    $$
\left\|L_{M}-L\right\| \leq C^{\prime}  \frac{\log^{2d} M  }{M},
$$
where $\|\cdot\|$ denotes the operator norm.
\end{proposition}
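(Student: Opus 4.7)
The plan is a direct reduction to the uniform kernel bound of Theorem~\ref{thm:qmc1_deterministiccase_kernelapprox_error}. For any $f\in L^2(P_{\mathbf X})$ and any $\mathbf x\in\mathcal X$, I write
\begin{equation*}
(L_M-L)f(\mathbf x)=\mathbb E_{\mathbf X\sim P_{\mathbf X}}\bigl[(K_M(\mathbf X,\mathbf x)-K(\mathbf X,\mathbf x))\,f(\mathbf X)\bigr],
\end{equation*}
and then apply the Cauchy--Schwarz inequality in the inner expectation to get
$|(L_M-L)f(\mathbf x)|^2 \le \mathbb E_{\mathbf X}[(K_M-K)^2]\cdot\mathbb E_{\mathbf X}[f^2(\mathbf X)]$.

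Next I would bound the first factor uniformly in $\mathbf x$ by Theorem~\ref{thm:qmc1_deterministiccase_kernelapprox_error}: since $\sup_{\mathbf x,\mathbf x'}|K_M(\mathbf x,\mathbf x')-K(\mathbf x,\mathbf x')|\le C'\log^{2d}M/M$, one obtains $\mathbb E_{\mathbf X}[(K_M(\mathbf X,\mathbf x)-K(\mathbf X,\mathbf x))^2]\le (C'\log^{2d}M/M)^2$ uniformly in $\mathbf x$. Integrating the pointwise bound against $P_{\mathbf X}$ then gives $\|(L_M-L)f\|_{L^2(P_{\mathbf X})}^2\le (C'\log^{2d}M/M)^2\,\|f\|_{L^2(P_{\mathbf X})}^2$, so taking the supremum over unit-norm $f$ yields the desired operator norm bound.

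There is essentially no obstacle here: the entire argument is a one-line Cauchy--Schwarz estimate combined with the uniform deterministic kernel bound already established, which is exactly parallel to the proof of \citet[Proposition 2.6]{huangquasi}. The only mild point worth emphasizing in the write-up is that Theorem~\ref{thm:qmc1_deterministiccase_kernelapprox_error} applies uniformly over $\mathbf x,\mathbf x'\in\mathcal X$, so the pointwise-in-$\mathbf x$ estimate can be integrated without loss, and that the scrambled net realization on which Theorem~\ref{thm:qmc1_deterministiccase_kernelapprox_error} holds carries over verbatim to $L_M$ since $L_M$ depends on the same $\{\boldsymbol\omega_i\}$ only through $K_M$.
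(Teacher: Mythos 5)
Your argument is correct and is essentially the same one-line Cauchy--Schwarz reduction that the paper invokes by reference to \citet[Proposition 2.6]{huangquasi}: bound $|(L_M-L)f(\mathbf x)|$ pointwise via Cauchy--Schwarz, use the uniform kernel bound from Theorem~\ref{thm:qmc1_deterministiccase_kernelapprox_error}, and integrate against $P_{\mathbf X}$. No gaps; your closing remarks about uniformity over $\mathcal X$ and the shared realization of $\{\omega_i\}$ are the right points to flag.
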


For general kernels, we can also establish the deterministic error bound for the RQMC-based kernel approximation, if the following condition \citep{huangquasi} holds.
\begin{condition}
\label{cond:bv}
    Suppose there exists a function $\psi:\mathcal{X}\times [0,1]^p\to\R$ such that
$$K(\x,\x')=\int_{[0,1]^p}\psi(\x,\omega)\psi(\x',\omega) {\rm d} \omega,$$
and for any $\x,\x'\in\mathcal{X}$, $g(\omega)=\psi(\x,\omega)\psi(\x',\omega)$ is of bounded Hardy-Krause variation $V_{\rm HK}(g)\leq C_0$, for some $C_0>0$.
\end{condition}

It was shown in \citet{huangquasi} that 
the min kernel, Brownian bridge kernel, a class of iterative kernel, natural cubic spline kernel, and a class of product kernels satisfy Condition~\ref{cond:bv}. 
Assuming Condition~\ref{cond:bv} holds, the following theorem is a consequence of the Koksma-Hlawka inequality.
\begin{theorem}
\label{thm:QMC2}
Suppose $K(\cdot,\cdot)$ satisfies Condition~\ref{cond:bv}. 
Suppose an RQMC sequence on $[0,1]^{d+1}$ satisfying $\mathcal{D}^{*}\left(\left\{\mathbf{h}_{i}\right\}_{i=1}^{M}\right) \leq C \frac{\log ^{a} M}{M}\, (a>0)$   is used.
For any $\x,\x'\in \mathcal{X}$,
we have
$$|K_M(\x,\x')-K(\x,\x')|\leq C_0
C
\cdot \frac{ \log^a M }{M},$$
where $C_0$ is the constant in the Condition~\ref{cond:bv}.
\end{theorem}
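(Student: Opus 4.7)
The statement is set up so that it falls out almost immediately from the classical Koksma--Hlawka (KH) inequality, once we view the kernel approximation as a cubature rule applied to a specific integrand. My plan is as follows.

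First, I would fix the point pair $(\x,\x') \in \mathcal{X}\times\mathcal{X}$ and define the single-variable integrand
\[
g_{\x,\x'}(\omega) \;=\; \psi(\x,\omega)\,\psi(\x',\omega),\qquad \omega\in[0,1]^{d+1}.
\]
By the integral representation assumed in Condition~\ref{cond:bv}, we have $K(\x,\x') = \int_{[0,1]^{d+1}} g_{\x,\x'}(\omega)\,\mathrm{d}\omega$, and by the definition of the RQMC estimator in \eqref{eq:K_M}, $K_M(\x,\x') = \frac{1}{M}\sum_{i=1}^{M} g_{\x,\x'}(\mathbf{h}_i)$. Thus the approximation error $|K_M(\x,\x')-K(\x,\x')|$ is exactly the cubature error of evaluating $g_{\x,\x'}$ on the point set $\{\mathbf{h}_i\}_{i=1}^M$.

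Next, I would invoke Condition~\ref{cond:bv}, which guarantees that $g_{\x,\x'}$ has Hardy--Krause variation bounded by $C_0$ uniformly in $(\x,\x')$. The KH inequality, recalled in the introduction, then gives
\[
|K_M(\x,\x')-K(\x,\x')| \;\le\; V_{\mathrm{HK}}(g_{\x,\x'})\cdot \mathcal{D}^{*}\bigl(\{\mathbf{h}_i\}_{i=1}^M\bigr) \;\le\; C_0\cdot C\,\frac{\log^a M}{M},
\]
where the last step uses the assumed star-discrepancy bound on the RQMC sequence. Taking this bound, which already holds pointwise for every $(\x,\x')\in\mathcal{X}\times\mathcal{X}$, yields the claimed estimate.

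There is essentially no technical obstacle: the conditions of the theorem (existence of a bounded Hardy--Krause representation for $g_{\x,\x'}$, and a star-discrepancy bound on the RQMC sequence) were crafted precisely to make KH directly applicable. The one minor point worth noting, but not requiring additional argument, is that KH applies realization-wise to the random sequence $\{\mathbf{h}_i\}$: whenever the star-discrepancy bound holds for a given realization (as it does, with probability one, for standard RQMC constructions such as scrambled $(t,m,d{+}1)$-nets or Cranley--Patterson rotated low-discrepancy sequences), the deterministic inequality above follows. Hence no additional concentration or averaging argument is needed, and the conclusion holds as stated.
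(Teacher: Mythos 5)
Your proposal is correct and is essentially the paper's own argument: the paper proves this theorem exactly by noting it is a direct consequence of the Koksma--Hlawka inequality applied to the integrand $\psi(\x,\omega)\psi(\x',\omega)$, whose Hardy--Krause variation is bounded by $C_0$ under Condition~\ref{cond:bv}, combined with the assumed star-discrepancy bound. Your added remark that the inequality applies realization-wise to the randomized sequence is a sensible clarification but introduces nothing beyond the paper's reasoning.
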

\begin{remark}
Plenty of RQMC sequences satisfy the low discrepancy property required by Theorem~\ref{thm:QMC2}.
For example, such sequences can be obtained by applying Owen's scrambling~\citep{owen1995randomly} to a $(t,m,d+1)$-net (or a 
$(t,d+1)$-sequence), or by applying Cranley-Patterson rotation~\citep{cranley1976randomization} to a low-discrepancy sequence, such as the Halton sequence, a $(t,m,d+1)$-net or a 
$(t,d+1)$-sequence.
\end{remark}

\section{Application in Kernel Ridge Regression}
\label{sec:KRR}

In this section, we study the application of RQMC features in kernel ridge regression (KRR) and discuss how the use of RQMC features, as introduced in Section~\ref{sec:approx_kernel_func_with_rqmc}, can improve the computational performance over standard MC random features, without loss of theoretical error rate.

We start with an integrated overview of KRR and its computational approximation through both MC random features and RQMC features. 
Then theoretical guarantees are given for our proposed RQMCF-KRR method.
The presented formulation and results draw upon established literature on kernel methods and KRR with MC and QMC features\citep{huangquasi,scholkopf2002learning,caponnetto2007optimal,smale2007learning,bach2017equivalence,rudi2017generalization,avron2017random,rahimi2007random}.

\subsection{Background on Kernel Ridge Regression}
Consider a supervised learning setup with $n$ i.i.d.\ samples $(\x_i,y_i)_{i=1}^n$, where $\x_i \in \mathcal{X}$ and $y_i \in \mathbb{R}$, drawn from a distribution $P_{\X Y}$. The target function is the conditional expectation $f_{\ast}(\x)=\mathbb{E}[Y|\X=\x]$. Let $K:\mathcal{X}\times\mathcal{X}\to\mathbb{R}$ be a positive definite kernel associated with a reproducing kernel Hilbert space (RKHS) $\mathcal{H}$. The  KRR estimator \citep{scholkopf2002learning,caponnetto2007optimal} solves:
\begin{equation}\label{eq:KRR_obj_main}
    \hat{f}_\lambda := \arg\min_{f\in\mathcal{H}}\left\{\frac{1}{n}\sum_{i=1}^n (y_i - f(\x_i))^2 + \lambda \|f\|_{\mathcal{H}}^2\right\},
\end{equation}
with a regularization parameter $\lambda > 0$. The closed-form solution for \eqref{eq:KRR_obj_main} is:
\begin{equation}\label{eq:KRR_sol_main}
    \hat{f}_\lambda(\x) = \sum_{i=1}^n \hat{\alpha}_i K(\x_i,\x), \quad \text{where } \hat{\boldsymbol{\alpha}} = (\mathbf{K}+n\lambda \mathbf{I}_n)^{-1}\mathbf{y}
\end{equation}
and $\mathbf{K}=[K(\x_i,\x_j)]_{i,j=1}^n$. Although KRR achieves minimax-optimal rates~\citep{caponnetto2007optimal},
its direct implementation costs $O(n^3)$ in time and $O(n^2)$ in memory.

\subsection{Monte Carlo Random Feature Approximations}
A popular strategy to scale KRR is to approximate the kernel $K$ using random features. Suppose the kernel $K$ admits an integral representation:
\begin{equation}\label{eq:int_rep_rf}
    K(\x,\x')=\int_{\Omega}\psi(\x,\omega)\psi(\x',\omega)\,\mathrm{d}\pi(\omega).
\end{equation}
With $M$ independent samples $\{\omega_i\}_{i=1}^M$ from $\pi$, we have the MC approximation:
\begin{equation}\label{eq:K_approx_mc_main}
    K_M(\x,\x') = \frac{1}{M}\sum_{i=1}^M \psi(\x,\omega_i)\psi(\x',\omega_i).
\end{equation}
By setting $\pph_M(\x)=\frac{1}{\sqrt{M}}(\psi(\x,\omega_1),\ldots,\psi(\x,\omega_M))^\top$, $K_M(\x,\x')$ can be written as $K_M(\x,\x')=\pph_M(\x)^\top\pph_M(\x')$. Replacing $K$ with $K_M$ in \eqref{eq:KRR_sol_main} leads to a random feature-based KRR (RF-KRR) estimator \citep{rudi2017generalization,avron2017random}. This method reduces the complexity to $O(nM^2+M^3)$ in time and $O(nM)$ in memory. 
Moreover, it is known that 
if $M \asymp n^{\frac{2r}{2r+1}}$ (up to logarithmic factors), RF-KRR preserves the same statistical guarantees as KRR, where $r\in[\frac{1}{2},1]$ is a smoothness parameter of the underlying true regression function~\citep{rudi2017generalization,huangquasi}.

\subsection{Randomized Quasi-Monte Carlo Features and Improved Approximations}
While MC random features yield a typical convergence rate of $O_P(M^{-1/2})$ for the kernel approximation error, QMC and RQMC methods can often achieve better rates $O(M^{-1})$ (up to logarithmic factors) by using low-discrepancy sequences instead of i.i.d.\ random samples.
\citet{huangquasi} proposed QMCF-KRR, which uses Quasi-Monte Carlo sequence, and in particular, Halton sequence, for the kernel approximation in KRR. This method works well in the low dimensional settings. However, it was found that when the dimension is larger than 10, the performance of QMCF-KRR degrades and may even be worse than RF-KRR~\citep{huangquasi}.

We propose RQMC-feature-based KRR (RQMCF-KRR), with scrambled net used for the kernel approximation in the KRR.
Substituting scrambled net sequences $\{\tilde{\omega}_i\}_{i=1}^M$ into \eqref{eq:int_rep_rf}, we approximate the kernel $K$ by 
\begin{equation}
\label{eq:K_approx_rqmc}
    K_M(\x,\x') := \frac{1}{M}\sum_{i=1}^M \psi(\x,\tilde{\omega}_i)\psi(\x',\tilde{\omega}_i),
\end{equation}
thus defining the RQMCF-KRR. As will be seen in Section~\ref{sec:theory-results-rqmcf-krr}, RQMCF-KRR requires fewer features $M$ than RF-KRR to attain the same statistical precision. Specifically, to obtain the optimal rates, RQMCF-KRR requires $M$ only of order $n^{\frac{1}{2r+1}}$, where $r\in[\frac{1}{2},1]$ characterizes the smoothness of the true regression function. This is an improvement over the $M \asymp n^{\frac{2r}{2r+1}}$ required by RF-KRR, resulting in a more efficient computational trade-off without compromising statistical performance. Theoretically, RQMCF-KRR achieves the same computational complexity as QMCF-KRR. In practice, RQMCF-KRR appears to be more suitable for higher-dimensional problems, as will be illustrated by the empirical performance in Section~\ref{sec:simulation}.

\subsection{Theoretical Results for RQMCF-KRR}\label{sec:theory-results-rqmcf-krr}
We adopt the same KRR conditions as in \citet{huangquasi}:
\begin{condition2}
    \label{cond:kernel}
   (i) $K(\x,\x')$ is continuous and has the integral representation \eqref{eq:int_rep_rf},
in which $|\psi(\x,\omega)|\leq \kappa$ for some constant $\kappa>0$. Assume $\X$ has full support on $\mathcal{X}$, and $\omega\mapsto \psi(\cdot,\omega)$, as a map from $\Omega$ to $L^2(P_\X)$, is continuous.

\vspace{-0.1cm}
(ii) $\pi$ in \eqref{eq:int_rep_rf} is the uniform distribution over $[0,1]^p$ for some $p\geq 1$, and 
an RQMC sequence is used for approximating the kernel as in 
\eqref{eq:K_approx_rqmc}, from which we have
\vspace{-0.25cm}
$$\sup_{\x,\x'\in \cX} 
\left| K(\x,\x')- K_M(\x,\x')\right| \leq C\cdot  \frac{\log^a M}{M}$$ 
for some positive constants $C$ and $a$. %
\end{condition2}
\begin{condition2}
    \label{cond:y_tail}
The distribution of $Y$ satisfies a Bernstein condition:
there exist positive constants $\sigma$ and $D$ such that $\mathbb{E}[|Y|^k \mid \X]\leq \frac{1}{2}k!\sigma^2 D^{k-2}$ for all $k\geq 2$. 
\end{condition2}
\begin{condition2}
    \label{cond:f=Lrg}
There exists $r\in [1/2,1]$ such that $f_\mathcal{H} = L^r g$ for some $g\in L^2(P_\X)$, where $f_\mathcal{H}$ solves $\min_{f\in\mathcal{H}} \mathcal{E}(f)$, and $L$ is the integral operator defined in \eqref{eq:integral-operator}. Let $R := \max\{\|g\|_{L^2(P_\X)},1\} $ be a positive constant.
\end{condition2}
\begin{remark}
The above conditions hold under mild conditions. 
Theorem~\ref{thm:qmc1_deterministiccase_kernelapprox_error} guarantees that when a shift-invariant kernel satisfying Condition~\ref{cond1} and a scrambled net are used,
    KRR Condition~\ref{cond:kernel}(ii) holds.
    In addition, KRR Condition~\ref{cond:kernel}(ii) also holds for a general kernel satisfying Condition~\ref{cond:bv}, by Theorem~\ref{thm:QMC2}.
    KRR Condition~\ref{cond:y_tail} is a usual tail condition on the response variable, which holds for the sub-exponential distribution.
    KRR Condition~\ref{cond:f=Lrg} can be viewed as a smoothness condition on the true regression function and is widely adopted in the kernel machine literature~\citep{smale2003estimating,caponnetto2007optimal}. See \citet{huangquasi} for more detailed discussions on these conditions.
\end{remark}

Theorem \ref{Thm:RQMCF-KRR} below (see Appendix~\ref{pf:thm:rqmcf-krr} for a proof) establishes the statistical error rate of the proposed RQMCF-KRR estimator. 
\begin{theorem}
\label{Thm:RQMCF-KRR}
Assume KRR Conditions~\ref{cond:kernel}, \ref{cond:y_tail}, \ref{cond:f=Lrg}. 
Let $\lambda = \tilde{C}n^{-\frac{1}{2r+1}}\in(0,e^{-1}]$,
 and $\hat f_{\lambda,M} $ be defined as in \eqref{eq:KRR_sol_main}.
Then 
$M=\frac{\log^a(1/\lambda)}{\lambda}=n^{\frac{1}{2r+1}}\log^a(n^{\frac{1}{2r+1}}/\tilde{C})/\tilde{C}$
is enough to guarantee that,
for any $\delta \in (0,1]$, there exists $n_0$ (of order $(\log \frac{1}{\delta})^{1+\frac{1}{2r}} $), such that when $n\geq n_0$, with probability at least $1-\delta$, the excess risk
\begin{equation}
    \label{eq:risk2}\mathcal{E}(\hat{f}_{\lambda,M} ) - \inf_{f\in\mathcal{H}}\mathcal{E}(f) \leq 
    C_1 n^{-\frac{2r}{2r+1}} \log^2 \frac{6}{\delta},
\end{equation}
where $C_1$ is a constant depending only on $ \kappa,\sigma, D, R, r,\tilde{C},$ $C$ and $a$.
\end{theorem}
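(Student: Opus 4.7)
The plan is to adapt the integral-operator error decomposition of \citet{caponnetto2007optimal,rudi2017generalization,huangquasi} to the RQMC setting. Introduce the two population auxiliary regressors $f_\lambda := (L+\lambda I)^{-1}L\,f_\mathcal{H}$ and $\bar f_{\lambda,M} := (L_M+\lambda I)^{-1}L_M\,f_\mathcal{H}$, and split
\[
\|\hat f_{\lambda,M}-f_\mathcal{H}\|_{L^2(P_\X)} \le \|\hat f_{\lambda,M}-\bar f_{\lambda,M}\|_{L^2(P_\X)} + \|\bar f_{\lambda,M}-f_\lambda\|_{L^2(P_\X)} + \|f_\lambda-f_\mathcal{H}\|_{L^2(P_\X)},
\]
so that the excess risk $\mathcal{E}(\hat f_{\lambda,M})-\mathcal{E}(f_\mathcal{H})$ is bounded by the square of this quantity. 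The three terms on the right are, respectively, the sample error, the deterministic kernel-approximation bias, and the pure regularization bias.

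For the regularization bias, KRR Condition~\ref{cond:f=Lrg} together with standard spectral calculus on $L$ yields $\|f_\lambda-f_\mathcal{H}\|_{L^2(P_\X)}\lesssim R\,\lambda^{r}$. For the kernel-approximation bias I would apply the resolvent identity $(L_M+\lambda I)^{-1}-(L+\lambda I)^{-1}=(L_M+\lambda I)^{-1}(L-L_M)(L+\lambda I)^{-1}$ together with the operator-norm bound $\|L_M-L\|\le C\log^{a} M/M$ (which follows from KRR Condition~\ref{cond:kernel}(ii), and is guaranteed deterministically by Theorem~\ref{thm:qmc1_deterministiccase_kernelapprox_error} and Theorem~\ref{thm:QMC2} via the proposition preceding Condition~\ref{cond:bv}). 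Combined with the source condition, this yields a bound of the form $\|\bar f_{\lambda,M}-f_\lambda\|_{L^2(P_\X)}\lesssim R\,\lambda^{r-1}\|L-L_M\|$; the choice $M=\log^{a}(1/\lambda)/\lambda$ makes $\|L-L_M\|\lesssim\lambda$, so this term is $\lesssim R\,\lambda^{r}$ and matches the regularization bias.

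For the sample error I would follow the Bernstein-based template of \citet{rudi2017generalization,huangquasi}: KRR Condition~\ref{cond:y_tail} controls the tails of $Y$, and Bernstein inequalities for sums of $\mathcal{H}_{K_M}$-valued and self-adjoint-operator-valued random variables allow one to bound $\|(\hat L_M+\lambda I)^{-1}(L_M+\lambda I)\|$ in operator norm on a high-probability event. Using the effective-dimension bound $\mathcal{N}_M(\lambda) := \mathrm{tr}\bigl(L_M(L_M+\lambda I)^{-1}\bigr) \le \kappa^{2}/\lambda$ (free from $|\psi|\le\kappa$), the sample error is of order $\log(6/\delta)\bigl(\sqrt{\kappa^{2}/(n\lambda)}+1/(n\lambda)\bigr)$. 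Squaring and balancing $\lambda^{2r}$ against $1/(n\lambda)$ gives the optimal $\lambda\asymp n^{-1/(2r+1)}$ and the advertised rate $n^{-2r/(2r+1)}\log^{2}(6/\delta)$; the threshold $n_{0}\asymp(\log(1/\delta))^{1+1/(2r)}$ arises from requiring the high-probability concentration events, in particular the invertibility of $\hat L_M+\lambda I$, to hold simultaneously with the approximation bound.

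The main obstacle will be this sample-error step: propagating concentration bounds through the data-dependent resolvent $(\hat L_M+\lambda I)^{-1}$ while tracking how the approximated kernel perturbs the RKHS $\mathcal{H}_{K_M}$ and the effective dimension. Because KRR Condition~\ref{cond:kernel}(ii) is entirely deterministic (it holds for every realization of the RQMC draw), this analysis can be carried out conditional on the RQMC sequence, closely paralleling the QMCF-KRR proof of \citet{huangquasi}; the explicit constant $C_{1}$ in the conclusion then comes from threading $\kappa,\sigma,D,R,r,\tilde C,C,a$ through each of the above bounds.
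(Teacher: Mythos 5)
Your proposal is correct and follows essentially the same route as the paper: the paper's proof simply observes that the deterministic approximation bound (Theorems~\ref{thm:qmc1_deterministiccase_kernelapprox_error} and \ref{thm:QMC2}) makes KRR Condition~\ref{cond:kernel}(ii) hold for every realization of the RQMC draw, and then invokes the QMCF-KRR argument of \citet[Appendix C]{huangquasi} verbatim, which is exactly the bias--variance--approximation decomposition and Bernstein-based operator concentration you outline. Your closing remark that the analysis can be carried out conditional on the RQMC sequence is precisely the key point the paper relies on.
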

    The error bound in \eqref{eq:risk2} matches the statistical convergence rate established for exact kernel ridge regression (KRR) \citep[Theorem 1]{caponnetto2007optimal} and for random features-based KRR (RF-KRR) \citep[Theorem 2]{rudi2017generalization}.

    Our RQMCF-KRR approach is more computationally efficient under smoother conditions.
To illustrate this, consider that RF-KRR, as shown in \citep[Theorem 2]{rudi2017generalization}, requires  the order of 
\(
   M \asymp n^{\frac{2r}{2r+1}} \log\left(\frac{108 \kappa^2 n}{\delta}\right)
\)
random features to achieve an excess risk of the order of 
\(
   \tildeC_1 n^{-\frac{2r}{2r+1}} \log^2\left(\frac{18}{\delta}\right).
\)
In contrast, our RQMCF-KRR method requires only 
\(
   M = n^{\frac{1}{2r+1}}\log^a\left(\frac{n^{\frac{1}{2r+1}}}{\tildeC}\right)/\tildeC
\)
features to attain the same statistical accuracy, where $r \in [1/2,1]$.
For $r > 1/2$, RQMCF-KRR enables a substantial reduction in the required number of features. Ignoring constant and logarithmic factors, RQMCF-KRR requires only the order of $n^{\frac{1}{2r+1}}$ features, which is strictly smaller than $n^{\frac{2r}{2r+1}}$ required by RF-KRR, thereby reducing the computational cost significantly.

In addition, note that the RQMCF-KRR achieves the same computational complexity as the QMCF-KRR proposed in \citet{huangquasi}, while exhibiting superior performance in higher dimensions than QMCF-KRR, as shown in Section~\ref{sec:simulation} below.

\section{Simulations}
\label{sec:simulation}
In this section, 
we show the superior performance of RQMC methods in kernel approximation and kernel ridge regression. In particular, we present simulation results on kernel approximation for the average case and deterministic  case discussed in Section~\ref{sec:approx_kernel_func_with_rqmc}, as well as the simulations for the KRR results in Section~\ref{sec:KRR}.
For QMC features, we follow \citet{huangquasi}'s proposal.
For RQMC features, 
we use the scrambled Sobol' sequence implemented in the Python SciPy package.

\subsection{Simulations on Kernel Approximation}

\paragraph{Average Case}
Theorems~\ref{thm:kernel-variance} and \ref{thm:sup_meansquare} provide theoretical guarantees on the average-case approximation accuracy of the RQMC features. Here, we examine the performance in practice.
We consider Gaussian kernel 
\(
K(\mathbf{x},\mathbf{x}') = \exp \Bigl(-\tfrac{1}{2\sigma^2}\|\mathbf{x}-\mathbf{x}'\|^2 \Bigr) \).
Let \(\mathbf{X}, \mathbf{X}' \) be i.i.d.\ from \(\text{Unif}[0,1]^d\).
The bandwidth \(\sigma\) of the Gaussian kernel is chosen to be the median of \(\|\mathbf{X} - \mathbf{X}'\|\) (computed numerically).
We sample \(10^3\) \((\mathbf{x}, \mathbf{x}')\) pairs, where \(\mathbf{x}\) and \(\mathbf{x}'\) are i.i.d.\ drawn from \(\text{Unif}[0,1]^d\). For each pair \((\mathbf{x}, \mathbf{x}')\), the same set of QMC features is used to compute \( \bigl\lvert K(\mathbf{x}, \mathbf{x}') - K_M(\mathbf{x}, \mathbf{x}')\bigr\rvert^2 \), given the deterministic nature of QMC points. In contrast, \(10^3\) independent sets of MC and RQMC features are sampled to compute the average square error (a numerical estimate of \( \mathbb{E}_\omega\bigl[\bigl\lvert K(\mathbf{x}, \mathbf{x}') - K_M(\mathbf{x}, \mathbf{x}')\bigr\rvert^2\bigr] \)). We then take both the supremum and the average of these errors over the \(10^3\) sampled pairs.

In Figure~\ref{fig:E_x-E_w}, we plot the average error over the \(10^3\) \((\mathbf{x}, \mathbf{x}')\) pairs as a function of the number of random features in various dimensions. In low-dimensional settings, RQMC features perform similarly to QMC features, and both outperform MC features. However, as the dimension increases, QMC features degrade, whereas RQMC features continue to perform comparably to or better than MC features.

In Figure~\ref{fig:sup_x-E_w}, we illustrate the supremum error over the same \(10^3\) \((\mathbf{x}, \mathbf{x}')\) pairs across dimensions. Even in moderately low-dimensional cases (e.g., when the dimension is greater than 1), QMC features do not achieve a high-accuracy kernel approximation. In contrast, RQMC features exhibit better performance than MC features in lower dimensions, and their performances become increasingly similar as the dimension grows.

\paragraph{Deterministic Case}
Theorem~\ref{thm:qmc1_deterministiccase_kernelapprox_error} provides desirable theoretical guarantee for the deterministic approximation error bound of the RQMC features, and we examine its empirical performance here.
The same Gaussian kernel as above is considered. 
We sample \(10^4\) \((\mathbf{x}, \mathbf{x}')\) pairs, with \(\mathbf{x}\) and \(\mathbf{x}'\) drawn i.i.d.\ from \(\text{Unif}[0,1]^d\). For each pair, one set of MC, QMC, and RQMC features is generated to compute \( \bigl\lvert K(\mathbf{x}, \mathbf{x}') - K_M(\mathbf{x}, \mathbf{x}')\bigr\rvert^2 \). We take the supremum of these errors over the \(10^4\) pairs to numerically estimate \( \sup_{\mathbf{x}, \mathbf{x}' \in \mathcal{X}} \bigl\lvert K(\mathbf{x}, \mathbf{x}') - K_M(\mathbf{x}, \mathbf{x}')\bigr\rvert^2 \).

Figure~\ref{fig:deterministic} shows the resulting supremum error for different dimensions. In low-dimensional cases, RQMC features perform on par with QMC features, and both outperform MC features. As the dimension increases, the performance of the QMC approach deteriorates, while RQMC features remain comparable to MC features.

\clearpage
\begin{figure}[htbp]
\thispagestyle{empty}
  \centering
  \includegraphics[width=\textwidth, height=\textheight, keepaspectratio]{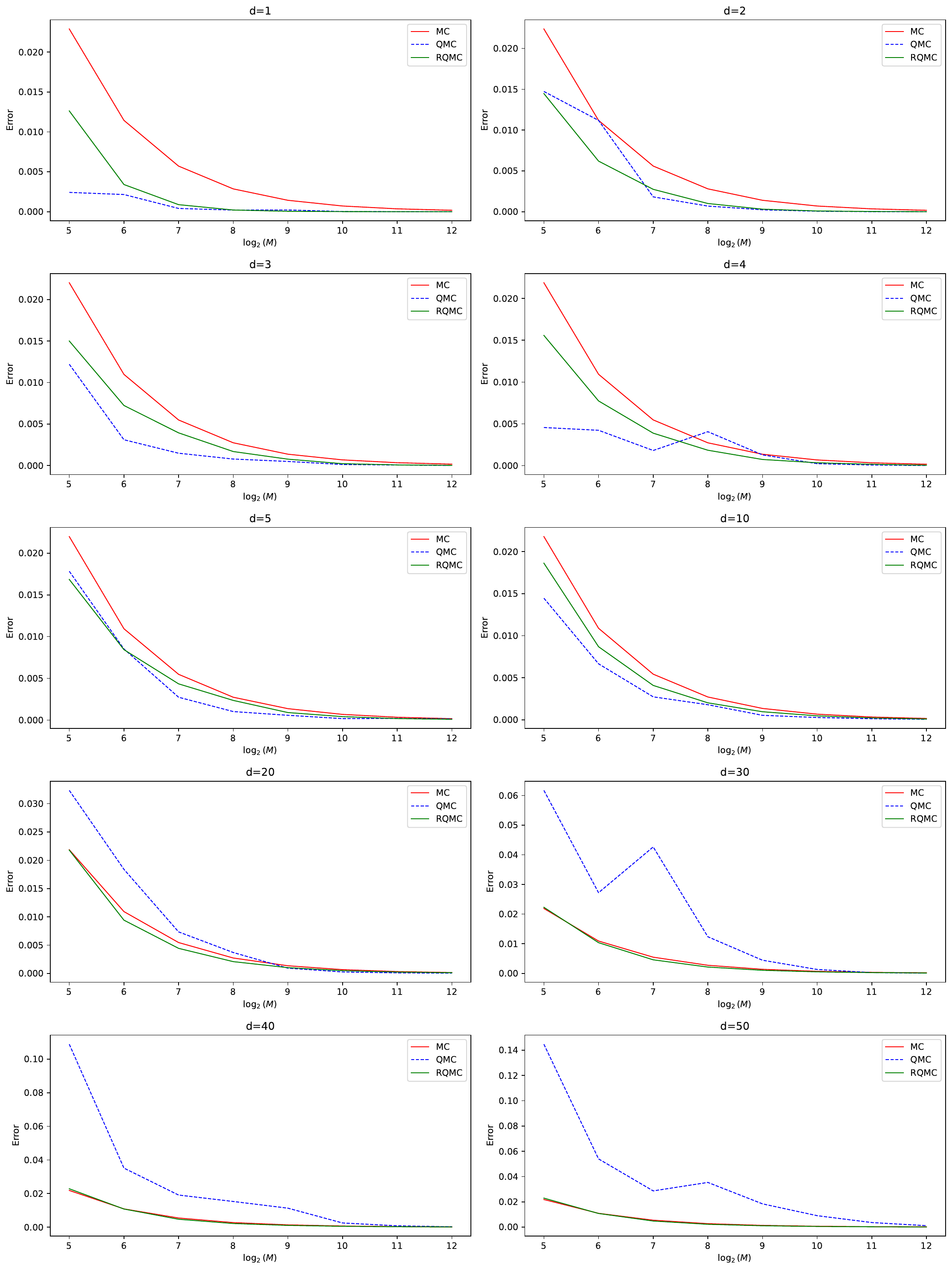}
  \caption{The average error $\E_{\x,\x' \in \mc{X}} \E_\omega \abs{K_M(\x,\x')-K(\x,\x')}^2$ against the number of random features for MC, QMC, RQMC based methods.}
  \label{fig:E_x-E_w}
\end{figure}
\clearpage

\clearpage
\begin{figure}[htbp]
\thispagestyle{empty}
  \centering
  \includegraphics[width=\textwidth, height=\textheight, keepaspectratio]{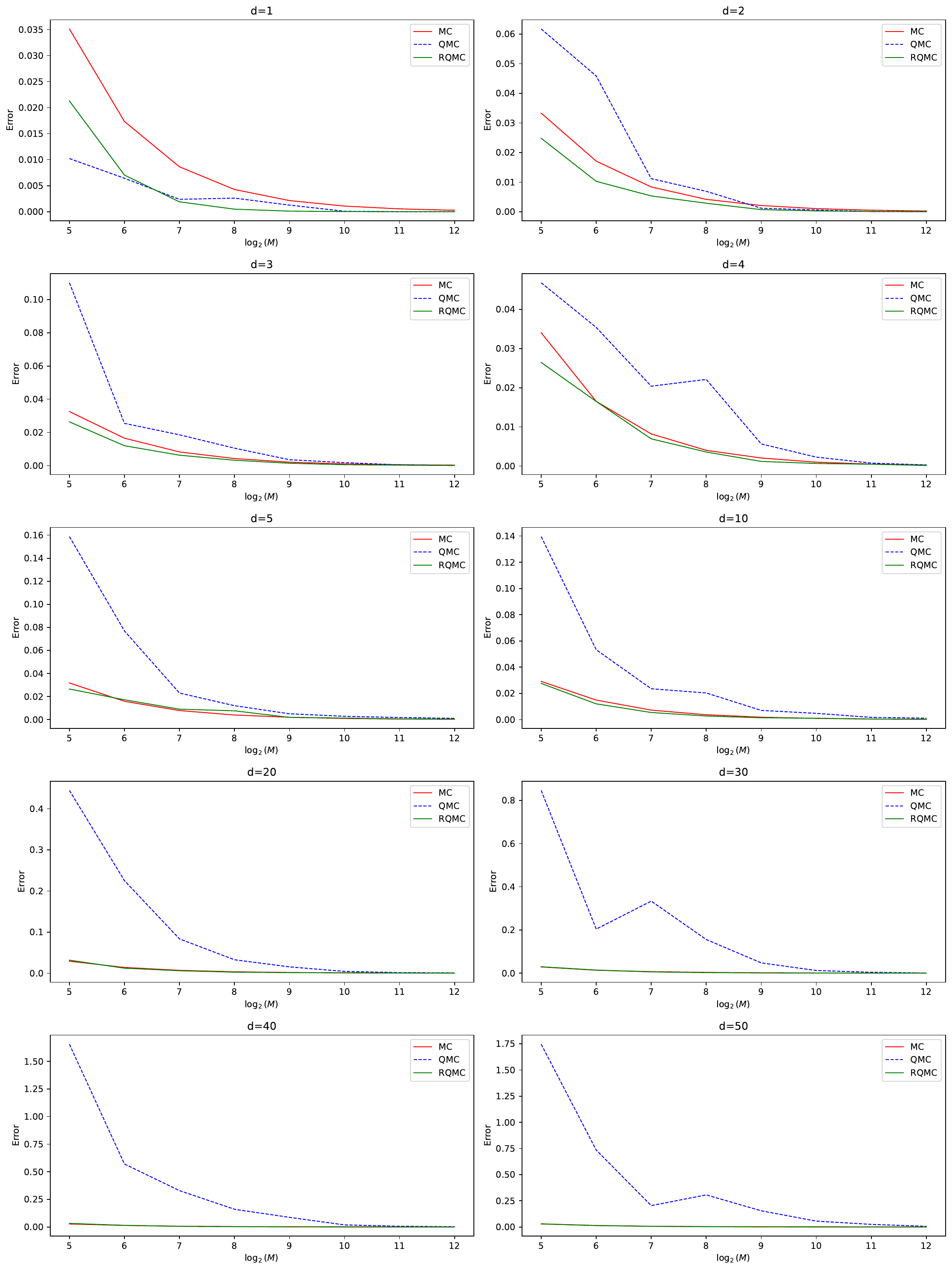}
  \caption{The sup-average error $\sup_{\x,\x' \in \mc{X}} \E_\omega \abs{K_M(\x,\x')-K(\x,\x')}^2$ against the number of random features for MC, QMC, RQMC based methods.}
  \label{fig:sup_x-E_w}
\end{figure}
\clearpage

\clearpage
\begin{figure}[htbp]
\thispagestyle{empty}
  \centering
  \includegraphics[width=\textwidth, height=\textheight, keepaspectratio]{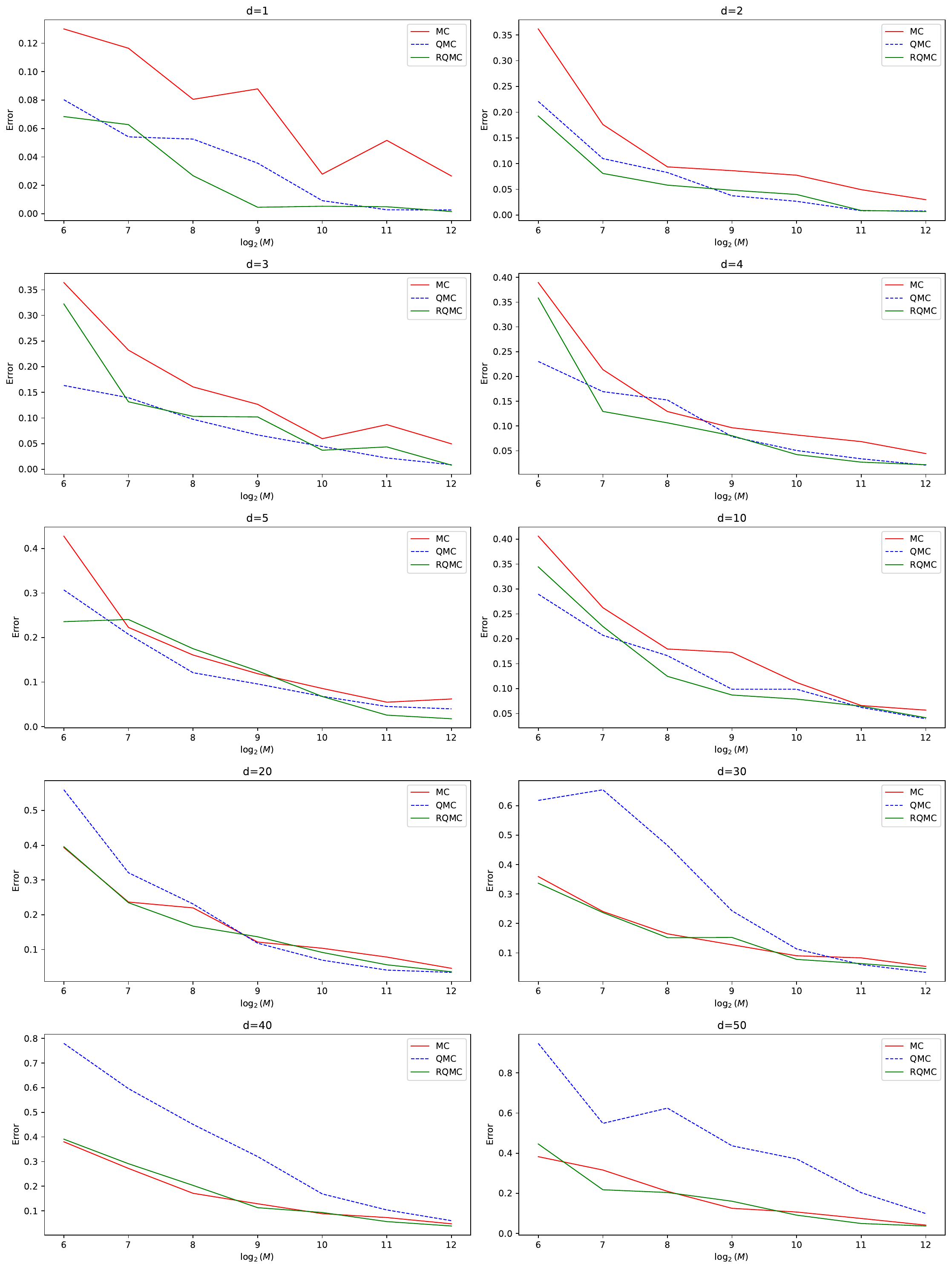}
  \caption{The determinstic error $\sup_{\x,\x' \in \mc{X}}  \abs{K_M(\x,\x')-K(\x,\x')}^2$ against the number of random features for MC, QMC, RQMC based methods.}
  \label{fig:deterministic}
\end{figure}
\clearpage

\subsection{Simulations on Kernel Ridge Regression}
\label{sec:simu-krr}
In this subsection, we compare the performance of RF-KRR, QMCF-KRR, RQMCF-KRR in low and moderately high dimensions
when modeling data with the Gaussian kernel. As shown in Theorem~\ref{Thm:RQMCF-KRR}, the RQMC method outperforms the MC method in smoother cases, specifically when $r \in [1/2,1]$ is large. 
We illustrate this with experiments for the scenario $r=1$. In fact, empirical evidence suggests that RQMCF-KRR also exhibits advantages when $r = 1/2$; additional simulations are provided in Appendix~\ref{sec:additional-simu}.

\paragraph{Experimental Setup}
We follow the experimental setting in~\citet{huangquasi} for simulations on kernel ridge regression.
We generate training and test data according to the model
$Y = f(\mathbf{X}) + \varepsilon$,
where \(\mathbf{X} \sim \text{Unif}[0,1]^d\) and \(\varepsilon \sim \mathcal{N}(0,1)\). We use the Gaussian kernel 
\(
K(\mathbf{x},\mathbf{x}') = \exp \Bigl(-\tfrac{1}{2\sigma^2}\|\mathbf{x}-\mathbf{x}'\|^2 \Bigr) \),
with bandwidth \(\sigma\) chosen to be the median of \(\|\mathbf{X} - \mathbf{X}'\|\) (computed numerically), where \(\mathbf{X}, \mathbf{X}' \) are i.i.d.\ from \(\text{Unif}[0,1]^d\).

For \(r=1\), any function \(\tilde{f}\) in \(\operatorname{ran} L^r\) can be written as
\(
\tilde{f}(\mathbf{x}) = \int K(\mathbf{x},\mathbf{z})\, g(\mathbf{z}) \, \mathrm{d}P_{\mathbf{X}}(\mathbf{z})
\)
for some \(g \in L^2(P_{\mathbf{X}})\). The function \(g(\mathbf{z}) = \exp\bigl(\tfrac{1}{2\sigma^2}\|\mathbf{z}\|^2\bigr)\) is adopted, which leads to a closed form
\(
\tilde{f}(\mathbf{x}) 
= \sigma^{2d} \exp\Bigl(-\tfrac{1}{2\sigma^2}\|\mathbf{x}\|^2\Bigr) \prod_{j=1}^d \frac{\exp(\tfrac{x_j}{\sigma^2}) - 1}{x_j}
\).
To control the signal-noise-ratio, we set
\(
f(\mathbf{x}) = C_{\tilde{f}} \cdot \tilde{f}(\mathbf{x}),
\)
where \(C_{\tilde{f}}\) is chosen such that the mean of \(f(\mathbf{X})\) equals 5. The kernel ridge regularization parameter is fixed as 
\(
\lambda = 0.25 \, n^{-\tfrac{1}{2r+1}}
\).

\paragraph{Results}
Figure~\ref{fig:r-1} shows the test mean square error (MSE) against the number of random features for the exact KRR, RF-KRR, QMCF-KRR and RQMCF-KRR under different values of dimension \(d\). Specifically, we generate and hold fixed \(10^6\) test data points,
  and consider 1000 realizations of training samples, each of size \(10^4\). For each realization, we train with different methods and record their test errors (MSE) on the fixed test set.
The solid lines in Figure~\ref{fig:r-1} show the average test MSE over 1000 trials, and the shaded areas indicate the 25\% and 75\% quantiles. 

It can be observed that RQMCF-KRR outperforms RF-KRR in both low and higher dimensions. In the low dimensional setting, RQMCF-KRR 
substantially reduces the number of features needed to attain a comparable generalization error to that of the exact KRR, relative to the MC-based random features. As the dimension increases, their performances get closer, but RQMC features still exhibit superior or similar performance. 

In low-dimensional settings, RQMCF-KRR and QMCF-KRR exhibit comparable performance. However, as the dimension increases, QMCF-KRR experiences a substantial decline in effectiveness, whereas RQMCF-KRR remains stable.

Results for the case \(r=0.5\) yield similar conclusions; interested readers are referred to Appendix~\ref{sec:additional-simu} for more details.

\clearpage
\begin{figure}[htbp]
\thispagestyle{empty}
  \centering
  \includegraphics[width=\textwidth, height=\textheight, keepaspectratio]{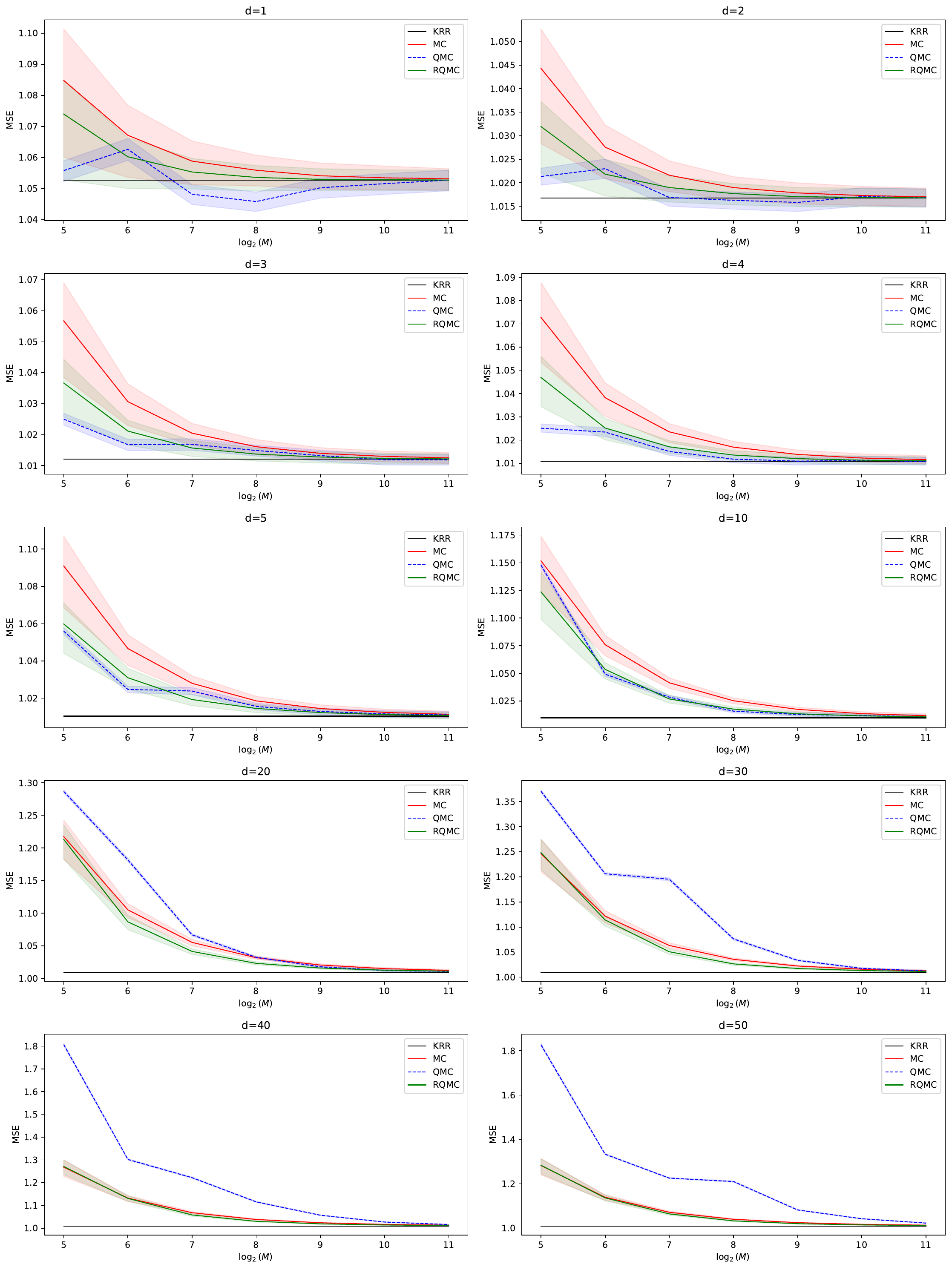}
  \caption{The test MSE against the number of random features ($r=1$), for exact KRR, RF-KRR, QMCF-KRR and RQMCF-KRR.}
  \label{fig:r-1}
\end{figure}
\clearpage

\bibliographystyle{ims}
\bibliography{reference}

\newpage 
\appendix
\numberwithin{equation}{section}

\section{Proof of the Results in Section~\ref{sec:approx_kernel_func_with_rqmc} and \ref{sec:KRR}
}
\label{sec:appendix:pf}

\subsection{Proof of Theorem~\ref{thm:qmc1_averagecase_kernelapprox_error}}
\label{pf:thm:qmc1_averagecase_kernelapprox_error}

\begin{proof}
Consider $f:[0,1]^{d+1} \rightarrow \mathbb{R}$, which takes $(\mathbf{t}, b) \mapsto \sqrt{2} \cos \left(\mathbf{x}^{\top} \boldsymbol{\Phi}^{-1}(\mathbf{t})+2 \pi b\right)$. Let $\tilde{f}_{M}$ be the low variation function that coincides with $f$ on a ``large set" $K_{M}=\left[\varepsilon_{M}, 1-\varepsilon_{M}\right]^{d+1}$ as defined in \citet[Appendix B.1]{huangquasi}.
We have

\begin{equation*}
    \begin{split}
    \left|\int_{[0,1]^{d+1}} f(\mathbf{x}) \mathrm{d} \mathbf{x}-\frac{1}{M} \sum_{i=1}^{M} f\left(\mathbf{h}_{i}\right)\right| \leq \int_{[0,1]^{d+1}}
    &\left|f(\mathbf{x})-\tilde{f}_{M}(\mathbf{x})\right| \mathrm{d} \mathbf{x}\\
    & +\mathcal{D}^{*}\left(\left\{\mathbf{h}_{i}\right\}_{i=1}^{M}\right) V_{\mathrm{HK}}\left(\tilde{f}_{M}\right)\\
    & +\frac{1}{M} \sum_{i=1}^{M}\left|\tilde{f}_{M}\left(\mathbf{h}_{i}\right)-f_{M}\left(\mathbf{h}_{i}\right)\right|.
    \end{split}
\end{equation*}

When $\left\{\mathbf{h}_{i}\right\}_{i=1}^{M}$ is an RQMC sequence, each $\mathbf{h}_{i}$ marginally follows Unif $[0,1]^{d+1}$. Therefore, by taking expectation,

\begin{equation*}
    \begin{split}
\mathbb{E}\left|\int_{[0,1]^{d+1}} f(\mathbf{x}) \mathrm{d} \mathbf{x}-\frac{1}{M} \sum_{i=1}^{M} f\left(\mathbf{h}_{i}\right)\right| \leq 2 \int_{[0,1]^{d+1}}
& \left|f(\mathbf{x})-\tilde{f}_{M}(\mathbf{x})\right| \mathrm{d} \mathbf{x}\\
& +\mathcal{D}^{*}\left(\left\{\mathbf{h}_{i}\right\}_{i=1}^{M}\right) V_{\mathrm{HK}}\left(\tilde{f}_{M}\right).
\end{split}
\end{equation*}
By \citet[Inequality B.4]{huangquasi},
$$
V_{\mathrm{HK}}\left(\tilde{f}_{M}\right) \leq 2 B\left(1-2 \log 2-2 \log \varepsilon_{M}\right)^{d},
$$
where $B=4\pi D^{|u\setminus \{d+1\}|} \prod_{i \in u \setminus \{d+1\}}C_i$, 
with 
$$D=\max_{\x,\y \in \mc{X}, \, i \in \{1,\ldots,d\}} \{ \abs{x_i-y_i}, \abs{x_i+y_i}  \}.$$

By \citet[Inequality B.6]{huangquasi},
$$
\int_{[0,1]^{d+1}}\left|f(\mathbf{x})-\tilde{f}_{M}(\mathbf{x})\right| \mathrm{d} \mathbf{x} \leq 3 \cdot 2^{d-1} B \varepsilon_{M}\left(2+(2-\log 2) d-d \log \varepsilon_{M}\right).
$$
Therefore,\\
\begin{equation*}
    \begin{split}
\mathbb{E}\left|\int_{[0,1]^{d+1}} f(\mathbf{x}) \mathrm{d} \mathbf{x}-\frac{1}{M} \sum_{i=1}^{M} f\left(\mathbf{h}_{i}\right)\right| &\leq 3 \cdot 2^{d} B 
 \varepsilon_{M}\left(2
+(2-\log 2) d-d \log \varepsilon_{M} \right) \\
&+\frac{C \log ^{d+1} M}{M} 2 B\left(1-2 \log 2-2 \log \varepsilon_{M}\right)^{d}.
   \end{split}
\end{equation*}
By taking $\varepsilon_{M}=1 / M$, we have
$$
\mathbb{E}\left|\int_{[0,1]^{d+1}} f(\mathbf{x}) \mathrm{d} \mathbf{x}-\frac{1}{M} \sum_{i=1}^{M} f\left(\mathbf{h}_{i}\right)\right| \leq C^{\prime} \cdot \frac{\log ^{2 d+1} M}{M}.
$$
\end{proof}

\subsection{Proof of Theorem~\ref{thm:sup_meansquare}}
\label{pf:thm:sup_meansquare}

In this part, we prove Theorem~\ref{thm:sup_meansquare}, 
following the proof strategy proposed in~\citet{liu2024randomized,dick2010digital}.
We begin by introducing some notations.
Let $[d+1]$ denote the set $\{1,2,\ldots,d+1\}$.
For a square integrable function $f$ on $[0,1]^{d+1}$,
consider the Walsh decomposition of $f$~\citep{dick2010digital, liu2024randomized}:
\begin{equation*}
	f(\bm{t}) = \sum_{\bm{\ell} \in \mathbb{N}_0^{d+1}} \bar{f}({\bm{\ell}}) {}_{2}\wal_{\bm{\ell}}(\bm{t})
\end{equation*}
where $\bar{f}$ denotes the Walsh coefficients.
Fix $\bm{\ell} = \left({\ell}_1, \dotsc, {\ell}_{d+1} \right) \in \mathbb{N}_0^{d+1}$, 
and let
\begin{equation*}
	L_{\bm{\ell}} = \{ \bm{k} = ({k}_1, \dotsc, {k}_{d+1}) \in \mathbb{N}_0^{d+1}: \left\lfloor 2^{l_j - 1} \right\rfloor \leq k_j < 2^{l_j}\ \textrm{for}\ 1\leq j \leq {d+1} \}.
\end{equation*}
Then the Walsh expansion of $f$ corresponding to $L_{\bm{\ell}}$ is defined as 
$$	\beta_{\bm{\ell}}(\bm{t}) := \sum_{\bm{k}\in L_{\bm{\ell}}} \bar{f}(\bm{k}) {}_{2}\wal_{\bm{k}}(\bm{t}). 
$$
Define
\begin{equation}
	\label{eq:sigma2_l}
	\sigma^2_{\bm{\ell}}: =
    \sigma^2_{\bm{\ell}}(f): =
    \sum_{\bm{k}\in L_{\bm{\ell}}} \abs{\bar{f}(\bm{k})}^2.
\end{equation}
Let 
\begin{equation}
	\label{eq:set_T_ell}
	T_{\bm{\ell}} = \{ \bm{k} = ({k}_1, \dotsc, {k}_{d+1}) \in \mathbb{N}_0^{d+1}: 0 \leq k_j < 2^{l_j}\ \textrm{for}\ 1\leq j \leq {d+1} \}.
\end{equation}
Let $\Delta_j$ be the set difference, 
$
	\Delta_j T_{\bm{\ell}} := 
		 T_{\bm{\ell}} \setminus  T_{\bm{\ell} - \bm{e}_j},
$
where $\bm{e}_j$ is the standard basis vector. When $l_j = 0$, let $T_{\bm{\ell} - \bm{e}_j} \coloneqq \varnothing$. 
We further define the composition of set difference 
$$
\Delta_{j^\prime}(\Delta_j T_{\boldell}):= \Delta_{j^\prime}(T_{\bm{\ell}} \setminus  T_{\bm{\ell} - \bm{e}_j}):=(\Delta_{j^\prime} T_{\boldell})\setminus(\Delta_{j^\prime}T_{\bm{\ell} - \bm{e}_j})
$$
with $j^\prime \neq j$.
Then the set $L_{\bm{\ell}}$ can be expressed as the composition of set differences:
\begin{equation}
	L_{\bm{\ell}} = \left(\bigotimes_{j=1}^{{d+1}} \Delta_j \right) T_{\bm{\ell}}, 
\end{equation}
where $\bigotimes_{j=1}^{d+1} \Delta_j := \Delta_{d+1} \circ \Delta_{d} \circ \cdots \circ \Delta_1$.

Further define
	\begin{equation*}
        \mc{D}_j
        \sum_{\bm{k} \in T_{\bm{\ell}}} \bar{f}(\bm{k}) {}_{2}\wal_{\bm{k}}(\bm{t}) =
		\begin{cases} \sum_{\bm{k} \in T_{\bm{\ell}}} \bar{f}(\bm{k}) {}_{2}\wal_{\bm{k}}(\bm{t}) - \sum_{\bm{k} \in T_{\bm{\ell} - \bm{e}_j}} \bar{f}(\bm{k}) {}_{2}\wal_{\bm{k}}(\bm{t}) & \ \mathrm{if} \ l_j \geq 1,\\
		\sum_{\bm{k} \in T_{\bm{\ell}}} \bar{f}(\bm{k}) {}_{2}\wal_{\bm{k}}(\bm{t}) & \ \mathrm{if} \ l_j =0.
		\end{cases}
	\end{equation*}
And we define the composition $\mc{D}_{j^\prime} \mc{D}_j$ with $j^\prime \neq j$ similarly:
	\begin{equation*}
        \begin{split}
        \mc{D}_{j^\prime}
        \mc{D}_j
        &\sum_{\bm{k} \in T_{\bm{\ell}}} \bar{f}(\bm{k}) {}_{2}\wal_{\bm{k}}(\bm{t}) =\\
		&\begin{cases} 
        \mc{D}_{j^\prime}
        \sum_{\bm{k} \in T_{\bm{\ell}}} \bar{f}(\bm{k}) {}_{2}\wal_{\bm{k}}(\bm{t}) 
        - \mc{D}_{j^\prime}\sum_{\bm{k} \in T_{\bm{\ell} - \bm{e}_j}} \bar{f}(\bm{k}) {}_{2}\wal_{\bm{k}}(\bm{t}) & \ \mathrm{if} \ {\ell}_{j^\prime} \geq 1,\\
	\mc{D}_{j}	
        \sum_{\bm{k} \in T_{\bm{\ell}}} \bar{f}(\bm{k}) {}_{2}\wal_{\bm{k}}(\bm{t}) & \ \mathrm{if} \ {\ell}_{j^\prime} =0.
		\end{cases}
                \end{split}
	\end{equation*}

Then we have
	\begin{equation}
	\label{eq:walsh_series_L_ell}
		\begin{split}
			\beta_{\bm{\ell}}(\bm{t}) &= \sum_{\bm{k} \in L_{\bm{\ell}}} \bar{f}(\bm{k}) {}_{2}\wal_{\bm{k}}(\bm{t})\\
			&= \sum_{\bm{k} \in \bigotimes_{j=1}^{{d+1}} \Delta_j T_{\bm{\ell}}} \bar{f}(\bm{k}) {}_{2}\wal_{\bm{k}}(\bm{t})\\
			&= \bigotimes_{j=1}^{{d+1}} 
            \mc{D}_j
            \sum_{\bm{k} \in T_{\bm{\ell}}} \bar{f}(\bm{k}) {}_{2}\wal_{\bm{k}}(\bm{t})
		\end{split}
	\end{equation}
By Lemma~\ref{lemma:liu2024-lemma1}, we have
\begin{equation}
\label{eq:to-integral-representation}
    \sum_{\bm{k} \in T_{\bm{\ell}}} \bar{f}(\bm{k}) {}_{2}\wal_{\bm{k}}(\bm{t}) = 
\left(
\prod_{j=1}^{{d+1}} 2^{\ell_j} 
\right)
\int_{\cap_{j=1}^{d+1}
            \left\{
            \left\lfloor y_j 2^{l_j}  \right\rfloor = \left\lfloor {t}_j 2^{l_j}  \right\rfloor  
            \right\}
            } 
            f(\bm{y})\dd \bm{y}.
            \end{equation}

In the one-dimensional case (i.e., $d=0$), when $\ell > 0$, 
we have
	\begin{equation}
		\label{eq:beta_ell_1d}
		\begin{split}
		\beta_{\ell}(t) &= 2^{\ell} \int_{\left\lfloor y2^{\ell} \right\rfloor = \left\lfloor t2^{\ell} \right\rfloor} f(y) \dd y - 2^{\ell-1} \int_{\left\lfloor y2^{\ell-1} \right\rfloor = \left\lfloor t2^{\ell-1} \right\rfloor} f(y) \dd y\\
		&= \begin{cases}
			2^{\ell-1} \left( \int_{\left\lfloor y2^{\ell} \right\rfloor = \left\lfloor t2^{\ell} \right\rfloor} f(y) \dd y - \int_{\left\lfloor y2^{\ell} \right\rfloor = \left\lfloor t2^{\ell}+1 \right\rfloor} f(y) \dd y\right), & \text{if}\ \left\lfloor t2^{\ell} \right\rfloor = 2\left\lfloor t2^{\ell-1} \right\rfloor \\
			2^{\ell-1} \left( \int_{\left\lfloor y2^{\ell} \right\rfloor = \left\lfloor t2^{\ell} \right\rfloor} f(y) \dd y - \int_{\left\lfloor y2^{\ell} \right\rfloor = \left\lfloor t2^{\ell}-1 \right\rfloor} f(y) \dd y\right), & \text{otherwise},
		\end{cases}\\
		&= 2^{\ell-1} \left( \int_{\left\lfloor y2^{\ell} \right\rfloor = \left\lfloor t2^{\ell} \right\rfloor} f(y)  \dd y - \int_{\left\lfloor y2^{\ell} \right\rfloor = 2\left\lfloor t2^{\ell-1} \right\rfloor + \xi_{\ell}}  f(y)  \dd y \right),
		\end{split}
	\end{equation}
	where $\xi_{\ell}(t) = \left\lfloor t2^{\ell} \right\rfloor - 2\left\lfloor t2^{\ell-1} \right\rfloor + 1 \textrm{ mod } 2$.
When ${\ell} = 0$, $\beta_{0} = \int_{0}^{1} f(y)\dd y$.

To see what is going on, notice that $\left\lfloor y2^{\ell} \right\rfloor = \left\lfloor t2^{\ell} \right\rfloor$
means that $y$ and $t$ fall into the same interval of the form $[\frac{k}{2^{\ell}},\frac{k+1}{2^{\ell}})$, or equivalently, they agree on the first $\ell$ bits in their binary expansions. Similarly, $\left\lfloor y2^{\ell -1} \right\rfloor = \left\lfloor t2^{\ell -1} \right\rfloor$
means that $y$ and $t$ fall into the same interval of the form $[\frac{k'}{2^{\ell -1}},\frac{k'+1}{2^{\ell -1}})$.
The condition $\left\lfloor t2^{\ell} \right\rfloor = 2\left\lfloor t2^{\ell-1} \right\rfloor$ holds when the $\ell$-th binary digit of $t$ is zero. The two different cases (the $\ell$-th binary digit of $t$ being 0 or otherwise 1) tell us which neighboring `dyadic intervals' we should subtracting in the two integrals. Finally, $\left\lfloor t2^{\ell} \right\rfloor - 2\left\lfloor t2^{\ell-1} \right\rfloor$ is the $\ell$-th binary digit of $t$, and $\xi_\ell(t)$ `flips' that bit --- it is 1 if ``bit = 0" and 0 if ``bit = 1".

\begin{lemma}
    If $\bm{\ell} \in \mathbb{N}^{d+1}$ (i.e., $l_j \ge 1$ for $j \in [d+1]$), we have
    \begin{multline}
	\label{eq}
	\beta_{\bm{\ell}}(\bm{t}) = 
        \left(
        \prod_{j \in {[d+1]}} 2^{l_j - 1} 
        \right)
        \left( \sum_{{v}\subseteq {[d+1]}} (-1)^{\abs{{v}}} 
        \int_{\cap_{j \in {[d+1]}} 
        \left\{
        \left\lfloor {y}_j2^{l_j} \right\rfloor = 2\left\lfloor {t}_j2^{l_j-1} \right\rfloor + \xi_{\bm{\ell}, j} 
        \right\}
        } f(\bm{y})
        \dd \bm{y}
        \right),
	\end{multline}
where $\xi_{\bm{\ell}, j} = \left\lfloor t2^{l_j} \right\rfloor - 2\left\lfloor t2^{l_j-1} \right\rfloor + \mathbbm{1}_{j \in {v}} \textrm{ mod } 2$, and $|v|$ denotes the cardinality of the set $v$.
\end{lemma}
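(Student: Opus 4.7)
My approach is to reduce the multidimensional claim to the one-dimensional formula \eqref{eq:beta_ell_1d} via the tensor product structure implicit in equation~\eqref{eq:walsh_series_L_ell}. The starting point is
\[
\beta_{\bm{\ell}}(\bm{t}) \;=\; \bigotimes_{j=1}^{d+1} \mc{D}_j \sum_{\bm{k} \in T_{\bm{\ell}}} \bar{f}(\bm{k})\, {}_{2}\wal_{\bm{k}}(\bm{t}),
\]
together with the integral representation \eqref{eq:to-integral-representation}, which expresses $\sum_{\bm{k} \in T_{\bm{\ell}}} \bar{f}(\bm{k})\, {}_{2}\wal_{\bm{k}}(\bm{t})$ as $\bigl(\prod_j 2^{l_j}\bigr)$ times an integral of $f$ over the product of dyadic cells $\bigl\{y_j : \lfloor y_j 2^{l_j}\rfloor = \lfloor t_j 2^{l_j}\rfloor\bigr\}$.

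Next, I would process each coordinate one at a time. Fix $j$ and hold the other coordinates' constraints frozen. Applying $\mc{D}_j$ to the integral representation produces
\[
2^{l_j}\!\!\int_{\lfloor y_j 2^{l_j}\rfloor=\lfloor t_j 2^{l_j}\rfloor}\!\! f\,\dd y_j \;-\; 2^{l_j-1}\!\!\int_{\lfloor y_j 2^{l_j-1}\rfloor=\lfloor t_j 2^{l_j-1}\rfloor}\!\! f\,\dd y_j,
\]
with the other variables integrated over their own fixed dyadic cells. The coarser interval $\lfloor y_j 2^{l_j-1}\rfloor=\lfloor t_j 2^{l_j-1}\rfloor$ splits into two finer intervals: one being the cell containing $t_j$, and the other being its neighbor, whose index equals $2\lfloor t_j 2^{l_j-1}\rfloor + \xi_{\bm{\ell},j}$ with $\xi_{\bm{\ell},j}$ exactly the bit-flip defined in the one-dimensional case \eqref{eq:beta_ell_1d}. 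Substituting this splitting and combining with the finer integral yields the factorized form
\[
2^{l_j - 1}\!\left(\int_{\lfloor y_j 2^{l_j}\rfloor=\lfloor t_j 2^{l_j}\rfloor}\!\! f\,\dd y_j \;-\; \int_{\lfloor y_j 2^{l_j}\rfloor= 2\lfloor t_j 2^{l_j-1}\rfloor + \xi_{\bm{\ell},j}}\!\! f\,\dd y_j \right),
\]
exactly mirroring \eqref{eq:beta_ell_1d} inside the $j$-th slot.

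Now I iterate this over $j = 1, \dotsc, d+1$. Because the operators $\mc{D}_j$ act on disjoint coordinates, they commute and each one contributes an independent prefactor $2^{l_j-1}$ and an independent difference of integrals in the $j$-th variable. Pulling out the common prefactor yields $\prod_{j\in[d+1]} 2^{l_j-1}$, and what remains is a tensor product of $d+1$ differences
\[
\bigotimes_{j=1}^{d+1}\!\left(\int_{\lfloor y_j 2^{l_j}\rfloor=\lfloor t_j 2^{l_j}\rfloor}\!\!\cdot \;-\; \int_{\lfloor y_j 2^{l_j}\rfloor= 2\lfloor t_j 2^{l_j-1}\rfloor + \xi_{\bm{\ell},j}}\!\!\cdot\right).
\]
Expanding this tensor product produces a sum over $v \subseteq [d+1]$, where $v$ indexes the coordinates in which we choose the ``neighbor'' integral, contributing a factor $(-1)^{|v|}$. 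Recognizing $2\lfloor t_j 2^{l_j-1}\rfloor + \xi_{\bm{\ell},j}$ (for $j \in v$) or $\lfloor t_j 2^{l_j}\rfloor = 2\lfloor t_j 2^{l_j-1}\rfloor + \xi_{\bm{\ell},j}\!\bigl|_{\mathbbm{1}_{j\in v} = 0}$ (for $j \notin v$) as the common compact expression $2\lfloor t_j 2^{l_j-1}\rfloor + \xi_{\bm{\ell},j}$ with $\xi_{\bm{\ell},j} = \lfloor t_j 2^{l_j}\rfloor - 2\lfloor t_j 2^{l_j-1}\rfloor + \mathbbm{1}_{j \in v} \bmod 2$ gives precisely the stated formula, and uses Fubini (justified by the square-integrability of $f$) to rewrite the product of one-dimensional integrals as a single integral over the intersection.

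\textbf{Main obstacle.} The only delicate step is step~2: verifying that the dyadic splitting trick from \eqref{eq:beta_ell_1d} extends cleanly when the other coordinates are held fixed inside a frozen product of dyadic cells. Once that bookkeeping is nailed down (tracking the parity of the $l_j$-th binary digit of $t_j$ in each slot), the remaining combinatorics of expanding the tensor product into the alternating sum over $v$ is purely formal.
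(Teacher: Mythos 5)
Your proof is correct, but it takes a genuinely different route from the paper's. The paper expands $\bigotimes_{j=1}^{d+1}\mc{D}_j$ globally into all $2^{d+1}$ Walsh sums $\sum_{\bm{k}\in T_{\bm{\ell}'}}\bar f(\bm{k})\,{}_2\wal_{\bm{k}}(\bm{t})$, converts each to an integral over a product region $M(H_{\bm{\ell}'})$ indexed by the set $H_{\bm{\ell}'}$ of decremented coordinates, then partitions every such region into the atoms $J_v$ and identifies the coefficient of $\int_{J_v}f$ by a counting argument: summing $(-1)^{|H_{\bm{\ell}'}|}2^{d+1-|H_{\bm{\ell}'}|}\binom{m}{r}$ over the ways $H_{\bm{\ell}'}$ can meet $[d+1]\setminus v$, which collapses to $(-1)^{|v|}$ via the binomial theorem. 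You instead factor the computation coordinate-by-coordinate: each $\mc{D}_j$, applied to the integral representation with the other coordinates' dyadic cells frozen, reproduces exactly the one-dimensional identity \eqref{eq:beta_ell_1d} in the $j$-th slot (the coarse cell splits into the cell containing $t_j$ and its dyadic sibling $2\lfloor t_j2^{l_j-1}\rfloor+\xi_{\bm{\ell},j}$, and $2^{l_j}-2^{l_j-1}=2^{l_j-1}$ leaves a clean difference), so the whole expression becomes a product of $d+1$ differences whose expansion is precisely the alternating sum over $v$. Your route buys a shorter argument that reuses the already-established one-dimensional case and avoids the binomial-coefficient bookkeeping entirely; the paper's route is more self-contained at the level of the multidimensional index sets and generalizes verbatim to the case where some $l_j=0$ (the subsequent lemma), where your tensorization would need the minor extra remark that $\mc{D}_j$ acts as the identity on coordinates with $l_j=0$. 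The one step you flag as delicate --- that the splitting in coordinate $j$ commutes with the frozen constraints in the other coordinates --- is indeed fine, since the integration region is a product of per-coordinate cells and Fubini applies for $f\in L^2$.
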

\begin{proof}
By definition, when $l_j \geq 1$, each operation 
$
\mc{D}_j
        \sum_{\bm{k} \in T_{\bm{\ell}}} \bar{f}(\bm{k}) {}_{2}\wal_{\bm{k}}(\bm{t})$
        yields two terms: 
    $\sum_{\bm{k} \in T_{\bm{\ell}}} \bar{f}(\bm{k}) {}_{2}\wal_{\bm{k}}(\bm{t})$ 
    and 
    $- \sum_{\bm{k} \in T_{\bm{\ell} - \bm{e}_j}} \bar{f}(\bm{k}) {}_{2}\wal_{\bm{k}}(\bm{t})$.
Therefore,
    after applying $\bigotimes_{j=1}^{d+1} \mc{D}_j$, there are $2^{d+1}$ terms with the form 
    $$
        \sum_{\bm{k} \in T_{\bm{\ell}}} \bar{f}(\bm{k}) {}_{2}\wal_{\bm{k}}(\bm{t}),
    $$
    each of which can be converted to the integral representation as in~\eqref{eq:to-integral-representation}. 
    For the general term
    $
    \sum_{\bm{k} \in T_{\bm{\ell}^\prime }} \bar{f}(\bm{k}) {}_{2}\wal_{\bm{k}}(\bm{t})$, define the set $H_{\boldell^\prime} := \{ j \in [d+1]: l_j^\prime \neq l_j \}$.
    In the integral 
    $$ \int_{\cap_{j=1}^{d+1}
            \left\{
            \left\lfloor y_j 2^{l_j^\prime}  \right\rfloor = \left\lfloor \bm{t}_j 2^{l_j^\prime}  \right\rfloor  
            \right\}
            } 
            f(\bm{y})\dd \bm{y}, $$
     the integration region $\MHlprime$ can be written as  
    $$\begin{aligned}
    &\MHlprime := \cap_{j=1}^{d+1}
            \left\{
            \left\lfloor y_j 2^{l_j^\prime}  \right\rfloor = \left\lfloor {t}_j 2^{l_j^\prime}  \right\rfloor  
            \right\} \\
    &=\{ 
    \by \in [0,1)^{d+1}: \floor{y_j 2^{l_j -1} } = \floor{t_j 2^{l_j -1} }, \text{ if } j \in \Hlprime; \quad \floor{y_j 2^{l_j } } = \floor{t_j 2^{l_j} } \text{ otherwise}
    \}.
    \end{aligned}
    $$
    The sign before the integral 
    is $(-1)^{|\Hlprime|}.$
    The factor before the integral is 
    $\prod_{j=1}^{d+1} 2^{l_j^\prime}$, which can be written as 
    $ (\prod_{j=1}^{d+1} 2^{l_j -1}) \cdot 
    2^{d+1-|H_{\boldell^\prime}|}.$
    Thus, the coefficient before the integral
    $ \int_{ M(H_{\bl^{\prime}}) } f(\by)\dd \by $
    is 
    $ (\prod_{j=1}^{d+1} 2^{l_j -1}) \cdot (-1)^{|H_{\boldell^\prime}|}
    2^{d+1-|H_{\boldell^\prime}|}$. 
    
    For $v \subseteq [d+1]$, define
    $$
    J_v:= \{ 
    \by \in [0,1)^{d+1}: \floor{y_j 2^{l_j} } = 2\floor{t_j 2^{l_j -1} }+ \xi_{\bl,j},  
    \; j \in [d+1]
    \}.
    $$
    It means $y_j$ and $t_j$ agree on the first $l_j-1$ bits of their binary expansions, but differ on the $l_j$-th digit, if $j \in v$; and $y_j$ and $t_j$ agree on the first $l_j$ bits of their binary expansions if $j \notin v$.
    Note that $\MHlprime$ can be divided into sets of the form $J_v$. 
    To prove the lemma, it suffices to show that the coefficients before $\int_{J_v} f(\by) \dd \by$ for $v \subseteq [d+1]$ is $(\prod_{j \in [d+1]} 2^{l_j-1}) (-1)^{|v|}$.
    By the symmetry of the $(d+1)$ dimensions, we only need to consider $v=\{ m+1,\ldots, d+1\}$
    where $m \in \{0,1,2,\ldots,d\}$.
    Note that $\Hlprime$ needs to include $m+1, m+2,\ldots, d+1$, since the $l_j$-th bits of $y_j$ and $t_j$ differ, for $j \in v$. And the set $\Hlprime$ can include $r$ elements of the set $\{1,2,\ldots,m\}$ where $r$ ranges from $0$ to $m$, in which case we have $|\Hlprime|=d-m+1+r$. Since there are $\binom{m}{r}$ combinations where $\Hlprime$ includes $r$ elements of $\{1,2,\ldots,m\}$, the coefficient before $\int_{J_v} f(\by) \dd \by$ is 
    $$
    (\prod_{j \in [d+1]} 2^{l_j-1})
    \sum_{r=0}^m (-1)^{|\Hlprime|}
    2^{d+1-|\Hlprime|} \binom{m}{r}
    $$
    which is exactly $(\prod_{j \in [d+1]} 2^{l_j-1}) (-1)^{|v|}$.
\end{proof}

Below we consider the case where some elements of $\bl$ may be $0$.
\begin{lemma}
When
   $\bm{\ell} \in \mathbb{N}_0^{d+1}$,
    let $u:=\{j \in \{ 1,2,\ldots, d+1 \}: l_j \neq 0 \}$, and $-u:=\{ 1,2,\ldots, d+1 \} \setminus u$.
If $u$ is nonempty, 
    we have
	\begin{multline}
		\label{eq:beta_ell_partial_s}
		\beta_{\bm{\ell}}(\bm{t}) = 
        \left(
        \prod_{j \in {u}} 2^{l_j - 1} 
        \right)\\
        \left( \sum_{{v}\subseteq {u}} (-1)^{\abs{{v}}} \int_{[0, 1]^{\abs{-{u}}}} \int_{\cap_{j \in {u}} 
        \left\{
        \left\lfloor \bm{y}_j2^{l_j} \right\rfloor = 2\left\lfloor \bm{t}_j2^{l_j-1} \right\rfloor + \xi_{\bm{\ell}, j} 
        \right\}
        } f(\bm{y})\dd \bm{y}_{{u}}\dd \bm{y}_{-{u}} \right),
	\end{multline}
where $\xi_{\bm{\ell}, j} = \left\lfloor t2^{l_j} \right\rfloor - 2\left\lfloor t2^{l_j-1} \right\rfloor + \mathbbm{1}_{j \in {v}} \textrm{ mod } 2$. 
\end{lemma}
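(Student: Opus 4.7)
The plan is to reduce the general case to the all-nonzero case already handled by the preceding lemma, by exploiting the fact that the operator $\mc{D}_j$ acts as the identity whenever $l_j = 0$. Starting from~\eqref{eq:walsh_series_L_ell}, I would first observe that for each $j \in -u$ the definition of $\mc{D}_j$ gives back the inner sum unchanged (no subtraction term is produced), so the tensor product collapses to
$$\beta_{\bm{\ell}}(\bm{t}) \;=\; \Bigl(\bigotimes_{j \in u} \mc{D}_j\Bigr) \sum_{\bm{k} \in T_{\bm{\ell}}} \bar{f}(\bm{k})\, {}_2\wal_{\bm{k}}(\bm{t}).$$
This cleanly separates the coordinates into a ``trivial'' block $-u$ and an ``active'' block $u$.

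Next, I would translate the inner sum into an integral via Lemma~\ref{lemma:liu2024-lemma1} (Eq.~\eqref{eq:to-integral-representation}). For $j \in -u$, the localization constraint $\lfloor y_j 2^{l_j}\rfloor = \lfloor t_j 2^{l_j}\rfloor$ becomes $\lfloor y_j\rfloor = \lfloor t_j\rfloor$, which holds for every $y_j \in [0,1)$; simultaneously the corresponding prefactor $2^{l_j}=1$. Hence the $-u$-coordinates contribute precisely a free integration $\int_{[0,1]^{\abs{-u}}} \dd\bm{y}_{-u}$ with no multiplicative constant, and the overall prefactor from~\eqref{eq:to-integral-representation} reduces to $\prod_{j \in u} 2^{l_j}$.

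With the $-u$ block absorbed as a plain integration, applying the remaining operators $\mc{D}_j$ for $j \in u$ reproduces the identical combinatorial expansion from the preceding lemma, but indexed over subsets of $u$ rather than of $[d+1]$. Each such operator yields a signed difference of integrals over adjacent dyadic slabs at scale $2^{-l_j}$; expanding the tensor product produces $2^{\abs{u}}$ signed integrals indexed by subsets $v \subseteq u$. Merging neighboring slabs along each $j \in u$ exactly as in~\eqref{eq:beta_ell_1d} and tracking signs via the ``bit flip'' $\xi_{\bm{\ell},j}$ transforms the prefactor $\prod_{j \in u} 2^{l_j}$ into $\prod_{j \in u} 2^{l_j-1}$, and yields the integration region $\bigcap_{j \in u}\{\lfloor y_j 2^{l_j}\rfloor = 2\lfloor t_j 2^{l_j-1}\rfloor + \xi_{\bm{\ell},j}\}$ prescribed in~\eqref{eq:beta_ell_partial_s}.

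The only real obstacle is notational bookkeeping: one must verify that the factorization of the tensor product over $u$ and $-u$ commutes cleanly with the integral representation, so that the alternating sum $\sum_{v \subseteq u}(-1)^{\abs{v}}$ arises purely from the $u$-block while the $-u$-block yields an unconstrained integral. Once this reduction is in place, the argument inside $u$ is verbatim the previous lemma's, and no additional estimates are needed.
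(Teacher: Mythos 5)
Your proposal is correct and its skeleton matches the paper's proof: both first observe that $\mc{D}_j$ is the identity when $l_j=0$ so the tensor product collapses to $\bigotimes_{j\in u}\mc{D}_j$, both then invoke Lemma~\ref{lemma:liu2024-lemma1} and note that for $j\in -u$ the localization constraint is vacuous and the prefactor $2^{l_j}=1$, so the $-u$ block becomes a free integration peeled off by Fubini. Where you diverge is in the final recombination. The paper expands all $2^{|u|}$ terms of the form $\sum_{\bm{k}\in T_{\bm{\ell}'}}\bar f(\bm{k})\,{}_2\wal_{\bm{k}}(\bm{t})$, indexes them by the set $\Hlprime=\{j\in u: l_j'\neq l_j\}$ of decremented coordinates, refines the nested integration regions $\MHlprime$ into the disjoint cells $J_v$, and collects the coefficient of $\int_{J_v}$ via the binomial identity $\sum_{r=0}^{m}(-1)^{r}2^{m-r}\binom{m}{r}=1$. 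You instead apply the one-dimensional merge of \eqref{eq:beta_ell_1d} coordinate-by-coordinate, i.e., rewrite $2^{l_j}\int_{\text{fine}}-2^{l_j-1}\int_{\text{coarse}}$ as $2^{l_j-1}\bigl(\int_{\text{fine}}-\int_{\text{adjacent}}\bigr)$ in each $j\in u$ and tensor these identities, which produces the alternating sum over $v\subseteq u$ directly and avoids the global binomial recombination. Both are valid; your route is marginally cleaner because linearity of the $\mc{D}_j$ and the product structure of the localization regions let the 1D identity factor across coordinates, while the paper's route makes the coefficient bookkeeping fully explicit. One small caution if you write this up: after the tensor expansion the $2^{|u|}$ integrals are naturally indexed by $\Hlprime$ over \emph{nested} regions, not yet by $v$ over disjoint ones, so the per-coordinate merge is exactly the step that effects that re-indexing and must be stated as such rather than elided.
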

\begin{proof}
    Note that the operation $\bigotimes_{j=1}^{d+1} \mc{D}_j$ is equivalent to the operation
    $\bigotimes_{j \in u} \mc{D}_j$.
    After applying the operation
    $\bigotimes_{j \in u} \mc{D}_j$,
    there are $2^{\cardu}$ terms with the form 
    $$
        \sum_{\bm{k} \in T_{\bm{\ell}}} \bar{f}(\bm{k}) {}_{2}\wal_{\bm{k}}(\bm{t}).
    $$
    For the general term
    $
    \sum_{\bm{k} \in T_{\bm{\ell}^\prime }} \bar{f}(\bm{k}) {}_{2}\wal_{\bm{k}}(\bm{t})$,  define the set $H_{\boldell^\prime} := \{ j \in u: l_j^\prime \neq l_j \}$. 
    By~\eqref{eq:to-integral-representation}, 
    the term $
    \sum_{\bm{k} \in T_{\bm{\ell}^\prime }} \bar{f}(\bm{k}) {}_{2}\wal_{\bm{k}}(\bm{t})$
    can be converted to an integral format as below, with a multiplying constant before it:
    \begin{equation}\label{eq:f-integral}
        \int_{\MprimeHlprime}f(\by) \dd \by, 
    \end{equation}
    where the integration region 
    $\MprimeHlprime$
    is 
    $$
    \left\{ 
    \by \in [0,1)^{d+1}: 
    \floor{y_j 2^{l_j^\prime} } = \floor{t_j 2^{l_j^\prime} }, \text{ if } j \in u; 
    \quad 
    y_j \in [0,1) \text{ if } j \notin u
    \right\}.
    $$
    The integral~\eqref{eq:f-integral} can be written as 
    \begin{equation}\label{eq:f-integral2}
    \int_{[0,1)^{|-u|}} 
    \int_{\MHlprime}
    f(\by) \dd \by_u \dd \by_{-u}
    \end{equation}
    by Fubini's Theorem,
    where 
    $\MHlprime$ is defined as 
    $$
    \left\{ \by_u \in [0,1)^{|u|}:
    \floor{y_j 2^{l_j^\prime}}
    =
    \floor{t_j 2^{l_j^\prime}}, 
    \, j \in u
    \right\}.
    $$
    The sign before the integral~\eqref{eq:f-integral2}
    is $(-1)^{|\Hlprime|}.$
    The factor before the integral is 
    $\prod_{j \in u} 2^{l_j^\prime}$, which can be written as 
    $ (\prod_{j\in u} 2^{l_j -1}) \cdot 
    2^{\cardu-|H_{\boldell^\prime}|}.$
    Thus, the coefficient before the integral
    is 
    $ (\prod_{j\in u} 2^{l_j -1}) \cdot (-1)^{|H_{\boldell^\prime}|}
    2^{\cardu-|H_{\boldell^\prime}|}$. 
    
    For $v \subseteq u$, define
    $$
    J_v:= \{ 
    \by \in [0,1)^{\cardu}: \floor{y_j 2^{l_j} } = 2\floor{t_j 2^{l_j -1} }+ \xi_{\bl,j},  
    \; j \in u
    \}.
    $$
    It means $y_j$ and $t_j$ agree on the first $l_j-1$ bits of their binary expansions, but differ on the $l_j$-th digit, if $j \in v$; and $y_j$ and $t_j$ agree on the first $l_j$ bits of their binary expansions if $j \notin v$.
    Note that $\MHlprime$ can be divided into sets of the form $J_v$. 
    To prove the lemma, it suffices to show that the coefficients before 
    $\int_{[0,1)^{|-u|}} \int_{J_v} f(\by) \dd \by_u \dd \by_{-u}$ 
    for $v \subseteq u$ 
    is $(\prod_{j \in u} 2^{l_j-1}) (-1)^{|v|}$.
    By the symmetry of the $\cardu$ dimensions, we only need to consider $v=\{ m+1,m+2,\ldots, \cardu \}$
    where $m \in \{0,1,2,\ldots,\cardu-1\}$.
    Note that $\Hlprime$ need to include $m+1, m+2,\ldots, \cardu$, since the $l_j$-th bits of $y_j$ and $t_j$ differ, for $j \in v$. And the set $\Hlprime$ can include $r$ elements of the set $\{1,2,\ldots,m\}$ where $r$ ranges from $0$ to $m$, in which case we have 
    $|\Hlprime|=\cardu-m+r$. 
    Since there are $\binom{m}{r}$ combinations where $\Hlprime$ includes $r$ elements of $\{1,2,\ldots,m\}$, the coefficient before 
     $\int_{[0,1)^{|-u|}} \int_{J_v} f(\by) \dd \by_u \dd \by_{-u}$
    is 
    $$
    (\prod_{j \in u} 2^{l_j-1})
    \sum_{r=0}^m (-1)^{|\Hlprime|}
    2^{\cardu-|\Hlprime|} \binom{m}{r},
    $$
    which is exactly $(\prod_{j \in u} 2^{l_j-1}) (-1)^{|v|}$.
\end{proof}

Next, we bound 
$ 
 \sup_{x,x' \in \mc{X}}
 \sigma^2_{\bm{\ell}}
$
in the one-dimensional case. 
	\begin{lemma}
    Assume integrands $f_{x,x'} \in L^2([0, 1])$ satisfy Condition~\ref{assumption:owen_boundary_growth_condition}
  with $A=0$.
  Let $\sigma^2_{l}:=\sigma^2_{l}(f_{x,x'})$ be as defined in \eqref{eq:sigma2_l}.
Then we have	
        \begin{equation}			\label{eq:sigma2_ell_decay_1d}
			\sup_{x,x'\in \mc{X}} \sigma^2_{\ell} \leq C^2 \pi^2 2^{-l-1}.
		\end{equation}
	\end{lemma}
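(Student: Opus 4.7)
The plan is to use Parseval's identity for the orthonormal Walsh system to identify $\sigma^2_{\ell}=\|\beta_\ell\|_{L^2([0,1])}^2$, and then bound this $L^2$ norm by summing pointwise estimates of $\beta_\ell$ on each dyadic parent interval $P_k=[2k\cdot 2^{-\ell},(2k+2)2^{-\ell})$, $k=0,\ldots,2^{\ell-1}-1$.

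From the formula~\eqref{eq:beta_ell_1d}, on each $P_k$ the magnitude $|\beta_\ell|$ is constant and equals $2^{\ell-1}\bigl|\int_{I_{k,1}}f-\int_{I_{k,2}}f\bigr|$, where $I_{k,1},I_{k,2}$ are the two children of $P_k$. I would rewrite the difference as $-\int_{I_{k,1}}\int_0^{2^{-\ell}}f'(y+s)\,ds\,dy$ and apply Fubini to obtain $|\beta_\ell(t)|\leq\tfrac{1}{2}\int_{P_k}|f'(z)|\,dz$ for $t\in P_k$, so $P_k$ contributes at most $2^{-\ell-1}\bigl(\int_{P_k}|f'|\bigr)^2$ to $\|\beta_\ell\|_{L^2}^2$. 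For an interior parent (say, in the left half with $k\ge 1$), I plan to insert the boundary-growth bound $|f'(z)|\leq C/\min(z,1-z)$ coming from Condition~\ref{assumption:owen_boundary_growth_condition} with $u=\{1\}$ and $A_1=0$, getting $\int_{P_k}|f'|\leq C\log(1+1/k)\leq C/k$. Summing the resulting $C^2\cdot 2^{-\ell-1}/k^2$ bounds via $\sum_{k\ge 1}k^{-2}=\pi^2/6$, and doubling by left--right symmetry, will yield the desired $C^2\pi^2\cdot 2^{-\ell-1}$ order. For the two terminal parents (at $0$ and $1$) where $1/\min(z,1-z)$ is not integrable, I would instead invoke the uniform bound $|f|\leq C$ (which is Condition~\ref{assumption:owen_boundary_growth_condition} with $u=\emptyset$ and all $A_j=0$) to control each child integral by $C\cdot 2^{-\ell}$, so $|\beta_\ell(t)|\leq C$ on those pieces and each boundary parent contributes $O(C^2\cdot 2^{-\ell})$, which is of the same order as the main term.

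The main obstacle will be the singular behavior of $f'$ near $0$ and $1$: a direct Cauchy--Schwarz bound $(\int_{P_k}|f'|)^2\leq |P_k|\int_{P_k}|f'|^2$ fails because $|f'|^2$ is not integrable at the boundary. The remedy is to split into boundary and interior parents, using the $L^\infty$ bound on $f$ itself (rather than on $f'$) for the two terminal parents while retaining the $L^1$ estimate on $f'$ for all interior ones. Verifying that the boundary contribution can be absorbed into the $\pi^2/2$ prefactor of the target bound will require careful bookkeeping of constants and possibly a slightly sharper estimate for the parents immediately adjacent to the boundary.
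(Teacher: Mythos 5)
Your decomposition is essentially the paper's: identify $\sigma^2_\ell=\|\beta_\ell\|_{L^2}^2$, note that $|\beta_\ell|$ is constant on each dyadic parent $P_k$ and equals $2^{\ell-1}$ times a second difference of $f$ over the two children, convert that difference to an integral of $f'$, insert the boundary-growth bound $|f'(z)|\le C/\min(z,1-z)$, and sum an $O(k^{-2})$ series. The interior estimate $\int_{P_k}|f'|\le C\log(1+1/k)\le C/k$ is fine and parallels the paper's Taylor-expansion bound $2^{-\ell+1}(2k+1)^{-1}$ for the second difference of $\varTheta(x)=x\log x-x$.

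The gap is exactly where you flagged it: the two terminal parents. Switching to the $L^\infty$ bound $|f|\le C$ there gives $|\beta_\ell|\le C$ on each of them, hence a contribution of $2\cdot 2^{-\ell+1}C^2=8\,C^2 2^{-\ell-1}$; added to the interior total of at most $2\zeta(2)\,C^2 2^{-\ell-1}=(\pi^2/3)\,C^2 2^{-\ell-1}$, you get $(8+\pi^2/3)\,C^2 2^{-\ell-1}\approx 11.3\,C^2 2^{-\ell-1}$, which exceeds the claimed $\pi^2 C^2 2^{-\ell-1}\approx 9.87\,C^2 2^{-\ell-1}$. So as written the proof does not establish the stated constant. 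The fix, which is what the paper does, is to \emph{not} abandon the derivative bound at the boundary: keep the pointwise estimate $|f(y)-f(y+2^{-\ell})|\le C\int_y^{y+2^{-\ell}}t^{-1}\,\mathrm{d}t$ and integrate it in $y$ over the left child $[0,2^{-\ell})$. The inner integral has only a logarithmic singularity at $y=0$, so the double integral is finite and evaluates exactly to $C\,2^{-\ell+1}\log 2$ (the second difference of $\varTheta$ at $k=0$). Since $(\log 2)^2<1=(2\cdot 0+1)^{-2}$, the $k=0$ term then fits inside the same $\sum_{k\ge 0}(2k+1)^{-2}=\pi^2/8$ budget as the interior terms, and the constant $C^2\pi^2 2^{-\ell-1}$ closes. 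Your Fubini reduction to $\int_{P_0}|f'|$ is the step that destroys this, because it discards the weight $\min(z,2^{-\ell})$ that tames the $1/z$ singularity near $0$.
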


\begin{proof}
For notational simplicity, we omit the subscripts of $f_{x,x'}$ and write it as $f$ in the following.
	For $\ell > 0$, by~\eqref{eq:beta_ell_1d}, 
	we have
    \begin{equation*}
		\begin{split}
			\sigma^2_{\ell} &= \int_{[0,1]} \beta^2_{\ell}(x)dx\\
			&= 2^{2\ell-2} \int_{[0,1]} \left( \int_{\left\lfloor y2^{\ell} \right\rfloor = \left\lfloor t2^{\ell} \right\rfloor} f(y)  \dd y - \int_{\left\lfloor y2^{\ell} \right\rfloor = 2\left\lfloor t2^{\ell-1} \right\rfloor + \xi_{\ell}}  f(y)  \dd y \right)^2 \dd t\\
			&= 2^{2 \ell-2}  \sum_{k = 0}^{2^{\ell}-1} 2^{-\ell} \left( \int_{\left\lfloor y2^{\ell} \right\rfloor = k} f(y) \dd y - \int_{\left\lfloor y2^{\ell} \right\rfloor =  k+1 } f(y) \dd y\right)^2 \cdot \mathbbm{1}_{k \textrm{ mod } 2 = 0}\\
			&+ 2^{2 \ell-2}  \sum_{k = 0}^{2^{\ell}-1} 2^{-\ell} \left( \int_{\left\lfloor y2^{\ell} \right\rfloor = k} f(y) \dd y - \int_{\left\lfloor y2^{\ell} \right\rfloor =  k - 1 } f(y) \dd y\right)^2 \cdot \mathbbm{1}_{k \textrm{ mod } 2 = 1} \\
			&= 2^{2 \ell-2} \sum_{k = 0}^{2^{\ell-1}-1} 2^{-\ell} \cdot 2 \left( \int_{\left\lfloor y2^{\ell} \right\rfloor = 2k} f(y) \dd y - \int_{\left\lfloor y2^{\ell} \right\rfloor =  2k+1 } f(y) \dd y\right)^2 \\
			&= 2^{\ell-1} \sum_{k = 0}^{2^{\ell-1}-1} \left( \int_{\left\lfloor y2^{\ell} \right\rfloor =  2k } f(y) - f(y+2^{-\ell}) \dd y\right)^2 \\
			&\leq 2^{\ell-1} \sum_{k = 0}^{2^{\ell-1}-1} \left( \int_{\left\lfloor y2^{\ell} \right\rfloor =  2k } \abs{f(y) - f(y+2^{-\ell})} \dd y\right)^2. 
		\end{split}
	\end{equation*}
 For a given $y_0 \in (0, \frac{1}{2} - 2^{-\ell})$, 
 when $\left\lfloor y_0 2^{\ell} \right\rfloor =  2k$,
$k$ belongs to $\{0,\dotsc, 2^{\ell - 2} - 1\}$. In such a case, we have
	\begin{equation}
		\label{eq:sobol_first_half_k}
		\begin{split}
			\sup_{x,x'\in \mc{X}}
   \abs{f(y_0) - f(y_0+2^{-\ell})} 
   &= 
   \sup_{x,x'\in \mc{X}}
   \abs{\int_{y_0}^{y_0+2^{-\ell}} \frac{\partial f}{\partial y} \dd y}\\
			&\leq
   \sup_{x,x'\in \mc{X}}
   \int_{y_0}^{y_0+2^{-\ell}} \abs{\frac{\partial f}{\partial y} } \dd y\\
   &\leq C
   \int_{y_0}^{y_0+2^{-\ell}} y^{-1-A} \dd y.
		\end{split}
	\end{equation}
	We consider the symmetricity of the boundary growth condition in $[0, 1]$. For $y_0 \in (0, \frac{1}{2} - 2^{\ell})$, $1 - 2^{-\ell} - y_0 \in (\frac{1}{2}, 1-2^{-\ell})$, which corresponds to $k \in\{ 2^{\ell - 2}, \dotsc, 2^{\ell-1} - 1\}$ when $\left\lfloor (1 - 2^{-\ell} - y_0) 2^{\ell} \right\rfloor =  2k$. In such a case, we have 
	\begin{equation}
		\label{eq:sobol_second_half_k}
		\begin{split}
        \sup_{x,x'\in \mc{X}}
			\abs{f(1 - 2^{-\ell} - y_0) - f(1 - y_0)} 
   &=
   \sup_{x,x'\in \mc{X}}
   \abs{\int_{1 - 2^{-\ell} - y_0}^{1 - y_0} \frac{\partial f}{\partial y} \dd y}\\
			&\leq 
   \sup_{x,x'\in \mc{X}}
   \int_{1 - 2^{-\ell} - y_0}^{1 - y_0} \abs{\frac{\partial f}{\partial y} } \dd y\\
   &\leq C
   \int_{1 - 2^{-\ell} - y_0}^{1 - y_0} (1-y)^{-1-A} \dd y\\
			&= C \int_{y_0}^{y_0+2^{-\ell}} y^{-1-A} \dd y.
		\end{split}
	\end{equation}
	The equations~\eqref{eq:sobol_first_half_k} and~\eqref{eq:sobol_second_half_k} cover all the cases for $k = 0, \dotsc, 2^{\ell - 1} - 1$. Thus we have
	\begin{equation}
		\label{eq:sigma_2_ell_bound_difference}
		\begin{split}
		\sigma^2_{\ell} &\leq 2^{\ell-1} \sum_{k = 0}^{2^{\ell-1}-1} \left( \int_{\left\lfloor y2^{\ell} \right\rfloor =  2k } \abs{f(y) - f(y+2^{-\ell})} \dd y\right)^2\\
		&= 2^{\ell} \sum_{k = 0}^{2^{\ell-2}-1} \left( \int_{\left\lfloor y2^{\ell} \right\rfloor =  2k } \abs{f(y) - f(y+2^{-\ell})} \dd y\right)^2.
		\end{split} 
	\end{equation}

 Therefore,
	\begin{equation*}
		\begin{split}
			\sup_{x,x'\in \mc{X}} \sigma^2_{\ell} 
   &\leq 
   \sup_{x,x'\in \mc{X}}
   2^{\ell} \sum_{k=0}^{2^{\ell - 2} - 1} \left( \int_{\left\lfloor y2^{\ell} \right\rfloor =  2k } \abs{f(y) - f(y+2^{-\ell})} \dd y\right)^2\\
            & \leq 2^l \sum_{k=0}^{2^{l-2}-1} 
            \left( \int_{\lfloor y2^l \rfloor=2k} C \int_y^{y+2^{-l}} \frac{1}{t}\dd t \dd y\right)^2\\
            &=
            C^2 2^{\ell} \sum_{k=0}^{2^{\ell - 2} - 1} \left( \int_{2k\cdot 2^{-\ell}}^{(2k+1)2^{-\ell}} \log(y+2^{-\ell}) - \log(y) \dd y \right)^2\\
		\end{split}
	\end{equation*}
    Let

     $\varTheta(x) = x\log x - x$. 
Then
\begin{equation*}
    \sup_{x,x'\in \mc{X}} \sigma^2_{\ell} 
   \leq 
    C^2 2^{\ell} 
    \sum_{k=0}^{2^{\ell - 2} - 1} 
    \left( 
    \varTheta\left((2k+2)2^{-\ell}\right) + \varTheta(2k \cdot 2^{-\ell}) - 2\varTheta((2k+1) \cdot 2^{-\ell}) 
    \right)^2. 
\end{equation*}
Note that when $k=0$, $\varTheta\left((2k+2)2^{-\ell}\right) + \varTheta(2k \cdot 2^{-\ell}) - 2\varTheta((2k+1) \cdot 2^{-\ell})$ is equal to $(\log 2) \cdot  2^{-l+1}$.
 For a general $k$, we have Taylor expansions:
	\begin{equation*}
		\begin{split}
			\varTheta\left((2k+2)2^{-\ell}\right) &= \varTheta\left((2k+1)2^{-\ell}\right) + 2^{-\ell} \varTheta^{\prime}\left((2k+1)2^{-\ell}\right)\\
			&\hspace*{12em} + \cdots + (2^{-\ell})^m \frac{\varTheta^{(m)} ((2k+1)2^{-\ell}) }{m!} + \cdots\\
			\varTheta\left((2k)2^{-\ell}\right) &= \varTheta\left((2k+1)2^{-\ell}\right) - 2^{-\ell} \varTheta^{\prime}\left((2k+1)2^{-\ell}\right)\\
			&\hspace*{12em} + \cdots + (-2^{-\ell})^m \frac{\varTheta^{(m)} ((2k+1)2^{-\ell}) }{m!} + \cdots
		\end{split}
	\end{equation*}
    Therefore,
	\begin{equation*}
		\begin{split}
			&\varTheta\left((2k+2)2^{-\ell}\right) + \varTheta(2k \cdot 2^{-\ell}) - 2\varTheta((2k+1) \cdot 2^{-\ell})\\
			&= \sum_{m=1}^{\infty} 2 (2^{-\ell})^{2m} \frac{\varTheta^{(2m)} ((2k+1)2^{-\ell}) }{(2m)!}\\
			&= 2 \sum_{m=1}^{\infty} (2^{-\ell})^{2m} \frac{\varTheta^{(2m)} ((2k+1)2^{-\ell}) }{(2m)!}\\
			&= 2 \sum_{m=1}^{\infty} (2^{-\ell})^{2m} \frac{(2m-2)! ((2k+1)2^{-\ell})^{-2m+1} }{(2m)!}\\
			&= 2^{-\ell + 1} \sum_{m=1}^{\infty} \frac{(2m-2)! (2k+1)^{-2m+1} }{(2m)!}. 
		\end{split}
	\end{equation*}	
When $k \geq 1$, with the fact that $\frac{(2k+1)^{-2m+1}}{(2m)(2m-1)}\leq \frac{(2k+1)^{-2m+1}}{12}$ for $m\geq 2$, we have 
\begin{align*}
 \sum_{m=1}^{\infty} \frac{(2k+1)^{-2m+1}}{(2m)(2m-1)} 
 &\leq \frac{1}{2} (2k+1)^{-1} + 
 \frac{1}{12}\cdot 
 \frac{(2k+1)^{-3}}{1 - (2k+1)^{-2}} \\
& \leq \frac{1}{2} (2k+1)^{-1} + \frac{1}{12}\cdot 
\frac{9}{8} (2k+1)^{-3} \\
& \leq \frac{1}{2} (2k+1)^{-1} + \frac{1}{12} \cdot \frac{1}{8} (2k+1)^{-1} \\
& \leq (2k+1)^{-1}. 
\end{align*}
Therefore, 
 \begin{align*}
 \sup_{x,x'\in \mc{X}}
 \sigma_l^2 
    &\leq C^2 \cdot 2^l \sum_{k=0}^{2^{l-2} - 1} 2^{-2l+2} (2k+1)^{-2} \leq \frac{C^2 \pi^2}{8} 2^{-l+2},
\end{align*}
where the last inequality follows from the fact that $\sum_{k=0}^\infty (2k+1)^{-2}={\pi^2}/{8}$.

\end{proof}

Now we consider the multi-dimensional case. To this end, we first show the following lemma \ref{lemma:second-equality}. Note that, $f(\by_u;\by_{-u})$ denotes a function of $y_j, \, j \in u$, with $y_j,\,j \in -u$ being fixed.
Let $f(\by_v:(\by+2^{-\bl})_{u\setminus v};\by_{-u})$
denote a function $f(\by^\prime)$ where 
$y^\prime_j,\, j\in -u$ is fixed, 
$y^\prime_j=y_j$ if $j \in v$, and 
$y^\prime_j=y_j+2^{-l_j}$ if $j \in u\setminus v$.
\begin{lemma}
    \label{lemma:second-equality}
Assume $f_{\x,\x'} \in L^2([0, 1]^{d+1})$,    
    and  $\mc{X}$ is a compact set.  
    Let $\sigma^2_{\bl}:=\sigma^2_{\bl}(f_{\x,\x'})$ be as defined in \eqref{eq:sigma2_l}.
    Then
\begin{equation*}
\begin{split}
    \sup_{\x,\x' \in \mc{X}} \sigma_{\bm{\ell}}^2 
    \leq 
    2^{{d+1}+\normone{\bl}}
    \sup_{\x,\x'\in \mc{X}}
  &\sum_{\substack{
    \bm{k}_{{u}} \in \mathbb{N}_0^{\abs{{u}}} \\ 
    { \bm{k}_{{u}}\leq 
    2^{ {\bm{\ell}}_{{u}} -1 } - 1 } 
    }}
   \biggl(
   \int_{[0, 1]^{\abs{-{u}}} } \int_{\cap_{j \in {u}} \left\lfloor y_j 2^{l_j} \right\rfloor = 2k_j} \\
 & \sum_{v \subseteq {u}} (-1)^{\abs{v}} f\left(\bm{y}_{v}:\left(\bm{y} + 2^{-\bm{\ell}} \right)_{{u}-v} 
  ;
  \bm{y}_{-{u}} \right) \dd \bm{y}_{{u}} \dd \bm{y}_{-{u}} 
  \biggr)^2.
  \end{split}
\end{equation*}
\end{lemma}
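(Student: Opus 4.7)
The plan is to compute $\sigma_{\bm{\ell}}^2 = \int_{[0,1]^{d+1}} \beta_{\bm{\ell}}^2(\bt)\,\dd\bt$ directly from the explicit formula~\eqref{eq:beta_ell_partial_s}, combined with a dyadic partition of the $\bt$-domain. First, I would partition $[0,1]^{d+1}$ according to the integer labels $k_j := \lfloor t_j 2^{l_j-1} \rfloor$ for $j \in u$, leaving $\bt_{-u}$ free; note that $\beta_{\bm{\ell}}$ is independent of $\bt_{-u}$ because the $\by_{-u}$-integration in~\eqref{eq:beta_ell_partial_s} already runs over all of $[0,1]^{\abs{-u}}$. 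Each $\bk_u$-bin has $\bt_u$-volume $\prod_{j \in u} 2^{-l_j+1} = 2^{\cardu - \normone{\bm{\ell}}}$, and further splits into $2^{\cardu}$ sub-bins labeled by the $l_j$-th binary digits $b_j := \lfloor t_j 2^{l_j}\rfloor - 2k_j \in \{0,1\}$.

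The key observation is that flipping any single $b_j$ changes $\beta_{\bm{\ell}}$ only by a global sign. Such a flip sends $\xi_{\bm{\ell},j}$ to $\xi_{\bm{\ell},j} \pm 1 \pmod 2$, which is equivalent to relabeling $v$ by the symmetric difference $v \,\triangle\, \{j\}$ inside the sum $\sum_{v \subseteq u}(-1)^{|v|}(\cdot)$; this is a bijection on subsets of $u$ that flips $(-1)^{|v|}$ exactly once. Hence $\beta_{\bm{\ell}}^2$ is \emph{constant} across all $2^{\cardu}$ sub-bins inside a $\bk_u$-bin. I would then evaluate $\beta_{\bm{\ell}}$ on the canonical $\bm{b} = \bm{0}$ sub-bin and, for each $v \subseteq u$, apply the change of variables $y_j \mapsto y_j - 2^{-l_j}$ for $j \in v$. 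This translates each region $\{\lfloor y_j 2^{l_j}\rfloor = 2k_j + 1\}$ onto the canonical $\{\lfloor y_j 2^{l_j}\rfloor = 2k_j\}$ and replaces $f(y_j,\cdot)$ by $f(y_j + 2^{-l_j},\cdot)$. Re-indexing via $v \leftrightarrow u\setminus v$ (which contributes a global sign $(-1)^{\cardu}$ that disappears after squaring) then reproduces the alternating sum $\sum_v (-1)^{|v|} f(\by_v : (\by+2^{-\bm{\ell}})_{u-v}; \by_{-u})$ that appears in the statement.

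Finally, combining the bin volume $2^{\cardu - \normone{\bm{\ell}}}$ with the squared prefactor $(\prod_{j\in u} 2^{l_j-1})^2 = 2^{-2\cardu + 2\normone{\bm{\ell}}}$ from~\eqref{eq:beta_ell_partial_s} yields the exact identity
$$
\sigma_{\bm{\ell}}^2 \;=\; 2^{\normone{\bm{\ell}} - \cardu} \sum_{\substack{\bk_u \in \mathbb{N}_0^{\cardu} \\ \bk_u \le 2^{\bm{\ell}_u - 1} - 1}} \left( \sum_{v \subseteq u} (-1)^{|v|} \int_{[0,1]^{\abs{-u}}} \int_{\cap_{j\in u}\{\lfloor y_j 2^{l_j}\rfloor = 2k_j\}} f\!\left(\by_v : (\by+2^{-\bm{\ell}})_{u-v}; \by_{-u}\right)\dd\by_u\,\dd\by_{-u} \right)^{\!2}.
$$
Taking the supremum over $\x, \x' \in \mc{X}$ on both sides and using the crude bound $-\cardu \le d+1$ upgrades the prefactor to $2^{d+1+\normone{\bm{\ell}}}$, which is the claimed inequality. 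The main obstacle will be the careful sign bookkeeping for the digit-flip invariance across sub-bins together with the re-indexing $v \leftrightarrow u \setminus v$; once those are handled, everything else reduces to Fubini and the dyadic change of variables.
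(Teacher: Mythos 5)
Your proposal is correct and follows essentially the same route as the paper's proof: both partition the $\bm{t}$-domain into dyadic bins indexed by $\bm{k}_u$, use the subset bijection $v \mapsto v \,\triangle\, \{j\}$ to show $\beta_{\bm{\ell}}^2$ is constant across the $2^{|u|}$ sub-bins (the paper phrases this with labels $I_j \in \{1,2\}$ rather than binary digits $b_j$), and then apply the shift $y_j \mapsto y_j - 2^{-l_j}$ to reduce every term to the canonical region $\{\lfloor y_j 2^{l_j}\rfloor = 2k_j\}$ before discarding the factor $2^{-|u|}$ in the final crude bound. Your version is slightly cleaner in that it records the exact identity with prefactor $2^{\normone{\bm{\ell}}-|u|}$ before relaxing to $2^{d+1+\normone{\bm{\ell}}}$, but the substance is identical.
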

\begin{proof}
    By Lemma~\ref{lemma:dick2010_1332},
    we have
    \begin{equation*}
        \sup_{\x,\x'\in \mc{X}}
		\sigma^2_{\bm{\ell}} 
  =
  \sup_{\x,\x'\in \mc{X}}
  \int_{[0,1]^{d+1}} \abs{\beta_{\bm{\ell}}(\bm{t})}^2 {\rm d}\bm{t}.\\
    \end{equation*}
    From \eqref{eq:beta_ell_partial_s}, we have 
\begin{equation}\label{eq:integral-t}
    \begin{split}
        \sup_{\x,\x' \in \mc{X}} \sigma_{\bl}^2
        = \sup_{\x,\x' \in \mc{X}} \prod_{j \in u}2^{2l_j-2} 
        \int_{[0,1]^{d+1}}
        \bigg(
        \sum_{v \subseteq u} (-1)^{|v|}   
        \int_{[0,1]^{|-u|}} 
        & \int_{\cap_{j \in u}\{\lfloor y_j 2^{l_j}\rfloor= 
        2\lfloor t_j 2^{l_j-1} \rfloor
        +\xi_{lj}
        }\\
        & f(\by)\dd \by_u \dd \by_{-u}
        \bigg)^2
        \dd \bt.
        \end{split}
    \end{equation}
    For any $j \in u$, let $k_j$ be even integer in $[0,2^{l_j}-2]$. Define
    \begin{equation*}
        I_j=
        \begin{cases} 
      1, & \text{if } 
      \frac{k_j}{2^{l_j}} \leq t_j < \frac{k_j+1}{2^{l_j}} \\
      2, & \text{if }  
        \frac{k_j+1}{2^{l_j}} \leq t_j < \frac{k_j+2}{2^{l_j}}.
        \end{cases}
    \end{equation*}  
    Recall that for $\bm{\ell} \in \mathbb{N}_0^{d+1}$,
    $u:=\{j \in [{d+1}]: l_j \neq 0 \}$. 
    Let $u = \{i_1, \ldots, i_{|u|}\}$.
    Define
    \begin{equation*}
        \mcZ(j,v,I_j,k_j):=
        \left\{
        \by_u \in [0,1)^{\cardu}:
            \floor{y_j 2^{l_j}}=
             \begin{cases}
            k_j+1, &\text{ if} \; j \in v, I_j=1
        \text{  or} \; j \notin v, I_j=2,\\
        k_j, &\text{ otherwise.}
        \end{cases}
        \right\}.
    \end{equation*}
Note that after fixing $k_j, I_j, j \in u$, the value of 
$$
\left( 
        \sum_{v \subseteq u} (-1)^{|v|}   
        \int_{[0,1]^{|-u|}} 
        \int_{\cap_{j \in u}
        \mcZ(j,v, I_j,k_j)
        }
        f(\by) \dd \by_u \dd \by_{-u}
        \right)^2
$$
is fixed.
So in \eqref{eq:integral-t} the integral with respect to  $\bt$ can be written as sum over $k_j, I_j, j \in u.$
    Specifically, we have
    \begin{align}
\sup_{\x,\x' \in \mc{X}} \sigma_{\bl}^2
        = \sup_{\x,\x' \in \mc{X}}
        \left(
        \prod_{j \in u} 
        2^{2l_j-2} 
        \right)
        & \left(
        \prod_{j \in u} 2^{-l_j} 
        \right)
        \sum_{k_{i_{1}}} 
        \cdots 
        \sum_{k_{i_{|u|}}}
        \sum_{I_{i_{1}} \in \{1,2\}}
        \cdots
        \sum_{I_{i_{|u|}} \in \{1,2\}} \nonumber \\
     &
     \left( 
        \sum_{v \subseteq u} (-1)^{|v|}   
        \int_{[0,1]^{|-u|}} 
        \int_{\cap_{j \in u}
        \mcZ(j,v,I_j,k_j)
        }
        f(\by)\dd \by_u \dd \by_{-u}
        \right)^2,
\end{align}
where for any $j \in u$, the sum of $k_j$ is over all even integers in $[0,2^{l_j}-2]$.

Note that, fixing $k_{i_1},\ldots, k_{i_{|u|}},I_{i_2},\ldots, I_{i_{|u|}} $,
we can construct a bijection between subsets of $u$ as follows: for any $v \subseteq u$, if $i_1 \in v$, let $\tilde{v}=v\setminus \{i_1\}$; 
if $i_1 \notin v$, let $\tilde{v}=v \cup \{i_1\}$.
Therefore, the values of 
\begin{equation*}
    \left( 
        \sum_{v \subseteq u} (-1)^{|v|}   
        \int_{[0,1]^{|-u|}} 
        \int_{\cap_{j \in u}
        \mcZ(j,v,I_j,k_j)
        }
        f(\by)\dd \by_u \dd \by_{-u}
        \right)^2
\end{equation*}
with $I_{i_1}=1$ and $I_{i_1}=2$ are the same.
For notational simplicity, let 
\begin{equation*}
    \mc{A}(\by_u):= 
    \left\{ 
    \by_u:
    \lfloor y_j 2^{l_j} \rfloor=
    \begin{cases} 
      k_j, & \text{if } j \in v,  j\in u \\
      k_j+1, & \text{if } j \notin v, j \in u 
   \end{cases}
    \right\}.
\end{equation*}
By applying the same technique to $i_2,\ldots i_{|u|}$, we derive that

\begin{equation}
    \label{eq:}
    \begin{split}
    \sup_{\x,\x' \in \mc{X}} \sigma_{\bl}^2 
    &=
    \sup_{\x,\x' \in \mc{X}}
    \Bigl(
      \prod_{j \in u}
      2^{l_j - 2}
    \Bigr)
    \; 2^{|u|}\\
    &\qquad \sum_{k_{i_1}} \cdots \sum_{k_{i_{|u|}}}
    \Bigl(
      \sum_{v \subseteq u} (-1)^{|v|}
      \int_{[0,1]^{|-u|}}
      \int_{\mc{A}(\by_u)}
      f(\by_u; \by_{-u})
      \,\dd \by_u \,\dd \by_{-u}
    \Bigr)^2,
     \end{split}
\end{equation}
where each sum of $k_j, j\in u$ is over all even integers in $[0,2^{l_j}-2].$
Given the fact that $\cardu \leq d+1$, the sum $\sum_{k_{i_1}}\cdots \sum_{k_{i_{\cardu}}}$ can be written in a compact form
$
\sum_{\substack{
            \bm{k}_{{u}} \in \mathbb{N}_0^{|{u}|} \\
            \bm{k}_{{u}} \leq 2^{\bm{\ell}_{{u}}} - 2 \\
            k_j \text{ even}
        }}
$, and with the change of variable
$
y_j^{\prime}=y_j-2^{-l_j},\, j \in u\setminus v,
$
we have
\begin{equation*}
\begin{split}
    \sup_{\x,\x' \in \mc{X}} \sigma_{\bl}^2 
    &=
    \sup_{\x,\x' \in \mc{X}}
    \Bigl(
      \prod_{j \in u}
      2^{l_j - 2}
    \Bigr)
    \; 2^{|u|}\\
    & \qquad \sum_{k_{i_1}} \cdots \sum_{k_{i_{|u|}}}
    \Bigl(
      \sum_{v \subseteq u} (-1)^{|v|}
      \int_{[0,1]^{|-u|}}
      \int_{\mc{A}(\by_u)}
      f(\by_u; \by_{-u})
      \,\dd \by_u \,\dd \by_{-u}
    \Bigr)^2
    \\[6pt]
    & \leq
    \begin{aligned}[t]
      & \sup_{\x,\x' \in \mc{X}} \; 2^{d+1} 
        \Bigl(
          \prod_{j \in u} 2^{l_j - 2}
        \Bigr)
        \sum_{\substack{
            \bm{k}_{{u}} \in \mathbb{N}_0^{|{u}|} \\
            \bm{k}_{{u}} \leq 2^{\bm{\ell}_{{u}}} - 2 \\
            k_j \text{ even}
        }}
      \\[4pt]
      & 
        \Bigl(
          \sum_{v \subseteq u}
          (-1)^{|v|}
          \int_{[0,1]^{|-u|}}
          \int_{\cap_{j \in u} \{\lfloor y_j 2^{l_j}\rfloor = k_j\}}
          f\bigl(\by_v : (\by + 2^{-\bl})_{u \setminus v};\,\by_{-u}\bigr)
          \,\dd \by_u \,\dd \by_{-u}
        \Bigr)^2
    \end{aligned}
    \\[8pt]
    & \leq
    \begin{aligned}[t]
      & 2^{(d+1) + \normone{\bl}}
        \;\sup_{\x,\x' \in \mc{X}}
        \sum_{\substack{
           \bm{k}_{{u}} \in \mathbb{N}_0^{|{u}|} \\
           \bm{k}_{{u}}\le 
           2^{ {\bm{\ell}}_{{u}}-1 } - 1
        }}
      \\[4pt]
      & 
        \Bigl(
          \int_{[0,1]^{|\!-\!{u}|}}
          \int_{\cap_{j \in {u}} \{\lfloor y_j 2^{l_j}\rfloor = 2k_j\}}
          \sum_{v \subseteq {u}} (-1)^{|v|}
          f\bigl(\bm{y}_v : (\bm{y}+2^{-\bm{\ell}})_{{u}\setminus v};\,\bm{y}_{-{u}}\bigr)
          \,\dd \bm{y}_{{u}} \,\dd \bm{y}_{-{u}}
        \Bigr)^2.
    \end{aligned}
\end{split}
\end{equation*}
\end{proof}
    
With Lemma~\ref{lemma:second-equality}, we show the following result for the multidimensional case.
\begin{lemma}	\label{lemma:variance_sobol_multi_d}
	Assume integrands $f_{\x,\x'} \in L^2([0, 1]^{d+1})$ satisfy 
Condition~\ref{assumption:owen_boundary_growth_condition}
 with all $A_j=0$.
 Then we have
	\begin{equation*}
 \sup_{\x,\x'\in \mc{X}}
		\sigma^2_{\bm{\ell}} 
  \leq C^2 \cdot 2^{-\| \boldell \|_1 + 5(d+1)}.
	\end{equation*}
\end{lemma}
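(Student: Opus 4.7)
The plan is to leverage Lemma~\ref{lemma:second-equality}, which already reduces $\sup_{\x,\x'}\sigma^2_{\bm{\ell}}$ to controlling a sum of squared integrals of the alternating-sign combination $F(\bm{y}) := \sum_{v \subseteq u}(-1)^{|v|} f_{\x,\x'}(\bm{y}_v : (\bm{y}+2^{-\bm{\ell}})_{u \setminus v}; \bm{y}_{-u})$. The essential observation is that $F$ has a tensor-product structure: for each $j \in u$, the sum over whether $j$ belongs to $v$ contributes the pair of terms $-f(\ldots, y_j, \ldots)$ and $+f(\ldots, y_j + 2^{-l_j}, \ldots)$, so $F$ factorizes as the iterated forward-difference operator $\prod_{j \in u}\Delta_{j,2^{-l_j}}$ applied to $f$. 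By the fundamental theorem of calculus applied iteratively, this equals $\int_{\prod_{j \in u}[0, 2^{-l_j}]} \partial_u f(\bm{y}_u + \bm{s}_u; \bm{y}_{-u})\, d\bm{s}_u$, where $\partial_u$ denotes the mixed partial derivative with respect to the coordinates in $u$.

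Next, I would apply Condition~\ref{assumption:owen_boundary_growth_condition} with all $A_j = 0$ to obtain $|\partial_u f(\bm{y})| \leq C\prod_{j \in u}\min(y_j, 1-y_j)^{-1}$. Crucially, the resulting pointwise bound on $|F(\bm{y})|$ is independent of $\bm{y}_{-u}$ and factorizes across coordinates as $C\prod_{j \in u} g_j(y_j)$, where $g_j(y) := \int_0^{2^{-l_j}}\min(y+s, 1-y-s)^{-1}\, ds$. Consequently, the $\bm{y}_{-u}$-integration over $[0,1]^{|-u|}$ contributes only a factor of $1$, the squared integral factorizes across coordinates in $u$, and the sum over $\bm{k}_u$ splits as a product of one-dimensional sums, yielding
\begin{equation*}
\sup_{\x,\x'}\sigma^2_{\bm{\ell}} \leq 2^{d+1+\|\bm{\ell}\|_1}\, C^2 \prod_{j \in u}\sum_{k=0}^{2^{l_j-1}-1}\Bigl(\int_{2k\cdot 2^{-l_j}}^{(2k+1)2^{-l_j}} g_j(y)\, dy\Bigr)^2.
\end{equation*}

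Finally, I would invoke the same Taylor expansion of $\Theta(x) = x\log x - x$ and the identity $\sum_{k \geq 0}(2k+1)^{-2} = \pi^2/8$ used in the preceding one-dimensional lemma to bound each inner factor, splitting $[0,1]$ at $1/2$ and exploiting the symmetry $y \mapsto 1-y$ of the integrand to cover both halves. This gives $\sum_k(\int g_j)^2 \leq \pi^2 \cdot 2^{-2l_j}$, so the product over $j \in u$ is $\pi^{2|u|} \cdot 2^{-2\|\bm{\ell}\|_1}$. Combining with the prefactor yields $C^2 \pi^{2|u|} \cdot 2^{d+1-\|\bm{\ell}\|_1}$; since $\pi^2 < 16 = 2^4$, we have $\pi^{2|u|} \leq 2^{4|u|} \leq 2^{4(d+1)}$, and the claimed bound $C^2 \cdot 2^{-\|\bm{\ell}\|_1 + 5(d+1)}$ follows. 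The main obstacle is verifying cleanly that the alternating sum factorizes as the iterated finite difference and that the $\bm{y}_{-u}$-integration indeed behaves trivially; once these are established, the boundary growth condition reduces the multi-dimensional problem to a product of one-dimensional estimates that can be handled by essentially verbatim reuse of the preceding lemma.
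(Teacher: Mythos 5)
Your proposal is correct and follows essentially the same route as the paper's proof: starting from Lemma~\ref{lemma:second-equality}, recognizing the alternating sum as an iterated forward difference equal to the integral of the mixed partial $\partial^u$ over the box $[\bm{y}_u, \bm{y}_u+2^{-\bm{\ell}_u}]$, applying Condition~\ref{assumption:owen_boundary_growth_condition} with $A_j=0$ so that the $\bm{y}_{-u}$ integration is trivial and everything factorizes into one-dimensional estimates handled by the $\varTheta$-expansion and $\sum_{k\ge 0}(2k+1)^{-2}=\pi^2/8$. The only difference is bookkeeping of constants (you carry $\pi^{2|u|}$ and bound $\pi^2<2^4$ at the end, while the paper accumulates $(\pi^2/8)^{d+1}$ against $2^{4(d+1)}$), and both arrive at the same $2^{5(d+1)}$ factor.
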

\begin{proof}
For notational simplicity, we omit the subscripts of $f_{\x,\x'}$ and write it as $f$ below.
By lemma \ref{lemma:second-equality},
we have

\begin{equation}
\label{eq:sigma2_ell_multi_dimension}
\begin{split}
  \sup_{\x,\x'\in \mc{X}} \sigma^2_{\bm{\ell}} 
  & \leq
  \begin{aligned}[t]
    & 2^{{d+1}+\normone{\boldell}}
      \sup_{\x,\x'\in \mc{X}}
      \sum_{\substack{
        \bm{k}_{{u}} \in \mathbb{N}_0^{\abs{{u}}} \\
        \bm{k}_{{u}}\leq 2^{ {\bm{\ell}}_{{u}}-1 } - 1
      }}
    \\[-3pt]
    & 
      \Bigl(
        \int_{[0, 1]^{\abs{-{u}}}}
        \int_{\cap_{j \in {u}}
          \bigl\{\lfloor y_j 2^{l_j} \rfloor = 2k_j\bigr\}}
        \sum_{v \subseteq {u}} (-1)^{\abs{v}}
        f\bigl(\bm{y}_{v} : (\bm{y}+2^{-\bm{\ell}} )_{{u}-v};\, \bm{y}_{-{u}}\bigr)
        \,\dd \bm{y}_{{u}}
        \,\dd \bm{y}_{-{u}}
      \Bigr)^2
  \end{aligned}
  \\[6pt]
  & =
  \begin{aligned}[t]
    & 2^{{d+1}+\normone{\boldell}}
      \sup_{\x,\x'\in \mc{X}}
      \sum_{\substack{
        \bm{k}_{{u}} \in \mathbb{N}_0^{\abs{{u}}} \\
        \bm{k}_{{u}}\leq 2^{{\bm{\ell}}_{{u}}-1 } - 1
      }}
    \\[-3pt]
    & \quad
      \Bigl(
        \int_{(E_{\bm{\ell}, 2\bm{k}})_{{u}}}
        \sum_{{v} \subseteq {u}} (-1)^{\abs{{v}}}
        \int_{[0, 1]^{\abs{-{u}}}}
          f\bigl(
            \bm{y}_{{v}} : (\bm{y}+2^{-\bm{\ell}})_{{u}-{v}};
            \,\bm{y}_{-{u}}
          \bigr)
        \,\dd \bm{y}_{-{u}}
        \,\dd \bm{y}_{{u}}
      \Bigr)^2
  \end{aligned}
  \\[6pt]
  & =
  2^{{d+1}+\normone{\boldell}}
  \sup_{\x,\x'\in \mc{X}}\\
 &\qquad  \sum_{\substack{\bm{k}_{{u}} \in \mathbb{N}_0^{\abs{{u}}} \\
                  \bm{k}_{{u}}\leq 2^{{\bm{\ell}}_{{u}}-1 } - 1}}
  \Bigl(
    \int_{{(E_{\bm{\ell}, 2\bm{k}})_{{u}}}}
    \int_{[\bm{y}_u, \,\bm{y}_u + 2^{-\bm{\ell}_{{u}}}]}
    \partial^{{u}}
    \int_{[0, 1]^{\abs{-{u}}}}
      f(\bm{y}_0;\,\bm{y}_{-{u}})
    \,\dd \bm{y}_{-{u}}
    \,\dd \bm{y}_0
    \,\dd \bm{y}_u
  \Bigr)^2
  \\[6pt]
  & \leq
  C^2 \cdot 2^{{d+1}+\normone{\boldell}}\\
  &\qquad \sum_{\substack{\bm{k}_{{u}} \in \mathbb{N}_0^{\abs{{u}}} \\
                  \bm{k}_{{u}}\leq 2^{{\bm{\ell}}_{{u}}-1 } - 1}}
  \Bigl(
    \int_{{(E_{\bm{\ell}, 2\bm{k}})_{{u}}}}
    \int_{[\bm{y}_u, \,\bm{y}_u + 2^{-\bm{\ell}_{{u}}}]}
      \prod_{j\in {u}} 
      \min(y_{0j},1-y_{0j})
      ^{-A_j - 1}
    \,\dd \bm{y}_0
    \,\dd \bm{y}_u
  \Bigr)^2.
\end{split}
\end{equation}

When there is an $l_j = 1$ with $A_j = 0$, then

\begin{align*}
\int_0^{\frac{1}{2}} \int_{y_j}^{y_j + \frac{1}{2}} 
\min(t,1-t)^{-1}
\dd t \dd y_j &= \int_0^{\frac{1}{2}} \left( \int_{y_j}^{\frac{1}{2}} \frac{1}{t} \dd t + \int_{\frac{1}{2}}^{y_j+\frac{1}{2}} \frac{1}{1-t} \dd t \right) \dd y_j = 1.
\end{align*}
When all $l_j > 1$ for $j \in {u}$, 
by symmetry, we have
\begin{equation*}
\begin{split}
    \sup_{\x,\x'\in \mc{X}}	
 \sigma^2_{\bm{\ell}} 
 &\leq 
 C^2 \cdot 2^{d+1}
 \cdot 
  2^{{d+1}+\normone{\boldell}}\\
 &\qquad \sum_{\substack{\bm{k}_{{u}} \in \mathbb{N}_0^{\abs{{u}}} \\ {\bm{k}_{{u}}\leq 
 2^{
 {\bm{\ell}}_{{u}} - 2
 } - 1} }} \left(  \int_{(E_{\bm{\ell}, 2\bm{k}})_{{u}}}
	\int_{[\bm{y}_u, \bm{y}_u + 2^{-\bm{\ell}_{{u}}}] } \prod_{j\in {u}} 
    \min(y_{0j},1-y_{0j})^{-1}
    \dd \bm{y}_0 \dd \bm{y}_u  \right)^2\\
&=C^2 \cdot 2^{d+1}
 \cdot 
 2^{{d+1}+\normone{\boldell}}
 \sum_{\substack{\bm{k}_{{u}} \in \mathbb{N}_0^{\abs{{u}}} \\ {\bm{k}_{{u}}\leq 
 2^{
 {\bm{\ell}}_{{u}} - 2
 } - 1} }} \left(  \int_{(E_{\bm{\ell}, 2\bm{k}})_{{u}}}
    \prod_{j\in {u}} \int_{y_j}^{y_j + 2^{-l_j}} 
    {y}_{0 j}^{ - 1} 
    \dd {y}_{0j}
    \dd \bm{y}_u  \right)^2\\
&=C^2 \cdot 2^{d+1}
 \cdot 
 2^{{d+1}+\normone{\boldell}}\\
 & \qquad  \sum_{\substack{\bm{k}_{{u}} \in \mathbb{N}_0^{\abs{{u}}} \\ {\bm{k}_{{u}}\leq 
 2^{
 {\bm{\ell}}_{{u}} - 2
 } - 1} }} \left(  \int_{(E_{\bm{\ell}, 2\bm{k}})_{{u}}}
    \prod_{j \in u} 
        \left( 
        \log({y}_{j} + 2^{-l_j}) - \log({y}_{j})
        \right)
    \dd \bm{y}_u  \right)^2.\\
\end{split}
\end{equation*}
Therefore, 
\begin{align*}
\sup_{\x,\x'\in \mc{X}}
\sigma_{\bl}^2 
&\leq C^2 \cdot 
2^{2(d+1)+\normone{\boldell}}\\
&\qquad \sum_{\substack{\bk_u \in \mathbb{N}_0^{|u|} \\ \bk_u \leq 
2^{\bl_u-2} - 1}
} 
\left( \prod_{j \in u} \int_{ \left[ \frac{2{k_j}}{2^{l_j}}, \frac{2{k_j + 1}}{2^{l_j}} \right)} 
\left( \log \left( y_j + 2^{-l_j} \right) - \log y_j \right) \dd y_j \right)^2 \\
&\leq C^2 
\cdot 
2^{2(d+1)+\normone{\boldell}}
\sum_{
\substack{
\bk_u \in \mathbb{N}_0^{|u|} \\ 
\bk_u \leq 
2^{\bl_u-2} - 1
}
} 
\left[ \prod_{j \in u} \left( 2^{-l_j + 1} \left( 2k_j + 1 \right)^{-1} \right) \right]^2 \\
&\leq 
C^2 \cdot 
2^{-\normone{\boldell}+4(d+1)}
\sum_{
\substack{
\bk_u \in \mathbb{N}_0^{|u|} \\ 
\bk_u \leq 
2^{\bl_u-2} - 1}
}
\prod_{j \in u} \left( 2k_j + 1 \right)^{-2} \\
& \leq 
C^2 \cdot 
2^{-\normone{\boldell}+4(d+1)}
\prod_{j \in u}
\sum_{k_j=0}^{2^{l_j-2}-1}
(2k_j+1)^{-2}\\
& \leq 
C^2 \cdot 
2^{-\normone{\boldell}+4(d+1)}
\left(
\frac{\pi^2}{8}
\right)^{d+1}\\
& \leq 
C^2 \cdot 
2^{-\normone{\boldell}+5(d+1)}.
\end{align*}
\end{proof}

Now we are ready to prove Theorem~\ref{thm:sup_meansquare}. 

\begin{proof}

Let $f_{\x,\x'}(\bomega)=\psi(\x,\bomega)\psi(\x',\bomega)$ with $\sigma^2_{\bm{\ell}}:=\sigma^2_{\bm{\ell}}(f)$.

If $K(\cdot, \cdot)$ is a shift-invariant kernel satisfying Condition \ref{cond1}, let $\bm{w}=(\bm{t},b)$, then by \citet[Appendix B.1]{huangquasi},
the function $f_{\x,\x'}$ 
can be re-written as
$$\begin{aligned}
f_{\x,\x'}(\mathbf{t},b) &= \cos\left( (\x-\x')^\top \BPhi^{-1}(\mathbf{t})  \right) -  \cos\left( (\x+\x')^\top \BPhi^{-1}(\mathbf{t}) + 4\pi b \right).
\end{aligned} $$
Let $D=\max_{\x,\mathbf{y}\in \mathcal{X},i\in\{1,\ldots,d\}} \{ |x_i-{y}_i|,|x_i+{y}_i|\}$. Then for any non-empty set $u\subset \{1,\ldots,d+1\}$ and $(\mathbf{t},b)\in(0,1)^{d+1}$,
$$\left|\partial^u f_{\x,\x'}
(\mathbf{t},b) \right| \leq 4\pi  D^{|u\backslash\{d+1\}|} \prod_{i\in u\backslash\{d+1\}}\frac{{\rm d} }{{\rm d}t_i}\Phi^{-1}_i(t_i).$$
By Condition \ref{cond1}, $\frac{{\rm d} }{{\rm d}t}\Phi^{-1}_i(t)\leq \frac{C_i}{\min(t,1-t)} $ for some constant $C_i>0$ and all $ t\in (0,1)$. Therefore, the Condition \ref{assumption:owen_boundary_growth_condition} is satisfied with $C = 4\pi  D^{|u\backslash\{d+1\}|} \prod_{i\in u\backslash\{d+1\}} C_i
$ and all $A_j=0$.

Recall that the first $M=2^m$ points of a scrambled Sobol' $(t,s)$-sequence $(m\ge t \ge 0)$ is used.
By Lemma \ref{lemma:2} and \ref{lemma:variance_sobol_multi_d},  we have 
\begin{align*}
\sup_{\x,\x' \in \mc{X}} \mathbb{E} \left[ |K_M(\mathbf{x}, \mathbf{x}') - K(\mathbf{x}, \mathbf{x}')|^2 \right] 
 & \leq 2^{-m + t + {d+1}} 
 \sup_{\x,\x'\in \mc{X}}
 \sum_{\substack{\bl \in \mathbb{N}_0^{d+1} \\ \| \bl \|_1 > m - t}} \sigma_{\bl}^2(f) \\
& \leq C^2 \cdot 2^{-m + t + 6(d+1)} \sum_{\substack{\bl \in \mathbb{N}_0^{d+1} \\ \|\bl\|_1 > m - t}} 
2^{-\normone{\boldell}} \\
& = C^2 \cdot 2^{-m + t + 6(d+1)} \sum_{k=m-t+1}^{\infty} 2^{-k} \binom{k + {d+1} - 1}{{d+1} - 1}.
\end{align*}
By Lemma \ref{lemma:1}, we have 
\begin{align*}
 \sum_{k=m-t+1}^{\infty} \left( \frac{1}{2} \right)^k \binom{k + {d+1} - 1}{{d+1} - 1} \leq 2^{-(m - t + 1)+{d+1}} \binom{m - t + {d+1}}{{d+1} - 1}.
 \end{align*}
Note that when $m \ge 4,$ we have 
$\binom{m-t+d+1}{d} \le 2m^d/d!.$
Therefore,
\begin{align*}
 \sup_{\x,\x' \in \mc{X}} \mathbb{E} \left[ |K_M(\mathbf{x}, \mathbf{x}') - K(\mathbf{x}, \mathbf{x}')|^2 \right] 
 &\leq \frac{C^2}{M^2} 2^{2t+7(d+1)-1} \binom{m - t + {d+1}}{{d+1} - 1} \\
& \leq 
 C^2 \cdot \frac{2^{2t+7(d+1)}}{d!}  \frac{\log^d_2 M  }{M^2}.
\end{align*}
\end{proof}

\subsection{Proof of Theorem~\ref{thm:qmc1_deterministiccase_kernelapprox_error}}
\label{pf:thm:qmc1_deterministiccase_kernelapprox_error}
Recall that we use the first $M=b^{m}$ points of a scrambled $(t, d+1)$ sequence in base $b$.
When $m \geq t$, the first $M=b^{m}$ points of a $(t, d+1)$ sequence is a $(t, m, d+1)$ net, which remains a $(t, m, d+1)$ net with probability 1 after scrambling 
\citep[Proposition 17.2]{owen2023practical}.
It has the following property: for any subinterval of $[0,1)^{d+1}$ of the form
$
\prod_{j=1}^{d+1}\left[\frac{c_{j}}{b^{k_{j}}}, \frac{c_{j}+1}{b^{k_{j}}}\right)
$
with $k_{j} \geq 0$ and $0 \leq c_{j}<b^{k_{j}}$, if it is of volume $b^{t-m}$, then it contains exactly $b^{t}$ points of the sequence.

Now, we consider $f:[0,1]^{d+1} \rightarrow \mathbb{R}, f(\bomega)=\psi(\mathbf{x}, \bomega) \psi\left(\mathbf{x}^{\prime}, \bomega\right) \leq \kappa^{2}$. Let $\tilde{f}_{M}$ be the low variation function that coincides with $f$ on a ``large set" $K_{M}=\left[\varepsilon_{M}, 1-\varepsilon_{M}\right]^{d+1}$ defined in \citet[Appendix B.1]{huangquasi}. We have
\begin{equation*}
    \begin{split}
\left|\int_{[0,1]^{d+1}} f(\mathbf{x}) \mathrm{d} \mathbf{x}-\frac{1}{M} \sum_{i=1}^{M} f\left(\mathbf{h}_{i}\right)\right| \leq \int_{[0,1]^{d+1}}
&\left|f(\mathbf{x})-\tilde{f}_{M}(\mathbf{x})\right| \mathrm{d} \mathbf{x}\\
&+\mathcal{D}^{*}\left(\left\{\mathbf{h}_{i}\right\}_{i=1}^{M}\right) V_{\mathrm{HK}}\left(\tilde{f}_{M}\right)\\
& +\frac{1}{M} \sum_{i=1}^{M}\left|\tilde{f}_{M}\left(\mathbf{h}_{i}\right)-f_{M}\left(\mathbf{h}_{i}\right)\right|.
   \end{split}
\end{equation*}
By 
\citet[Inequality B.4]{huangquasi},
$$
V_{\mathrm{HK}}\left(\tilde{f}_{M}\right) \leq 2 B\left(1-2 \log 2-2 \log \varepsilon_{M}\right)^{d},
$$
where $B=4\pi D^{|u\setminus \{d+1\}|} \prod_{i \in u \setminus \{d+1\}}C_i$, 
and 
$D=\max_{\x,\y \in \mc{X}, \, i \in \{1,\ldots,d\}} \{ \abs{x_i-y_i}, \abs{x_i+y_i}  \}$.

By \citet[Inequality B.6]{huangquasi},
$$
\int_{[0,1]^{d+1}}\left|f(\mathbf{x})-\tilde{f}_{M}(\mathbf{x})\right| \mathrm{d} \mathbf{x} \leq 3 \cdot 2^{d-1} B \varepsilon_{M}\left(2+(2-\log 2) d-d \log \varepsilon_{M}\right).
$$

Since $\tilde{f}_{M}$ coincides with $f$ on $K_{M}=\left[\varepsilon_{M}, 1-\varepsilon_{M}\right]^{d+1}$, the region where $\tilde{f}_{M}$ differs from $f$ can be covered by $2(d+1)$ subintervals:

$$
\begin{array}{ll}
{\left[0, \varepsilon_{M}\right] \times[0,1] \times \cdots \times[0,1],} & {\left[1-\varepsilon_{M}, 1\right] \times[0,1] \times \cdots \times[0,1]} \\
{[0,1] \times\left[0, \varepsilon_{M}\right] \times \cdots \times[0,1],} & {[0,1] \times\left[1-\varepsilon_{M}, 1\right] \times \cdots \times[0,1]} \\
{[0,1] \times \cdots \times[0,1] \times\left[0, \varepsilon_{M}\right],} & {[0,1] \times \cdots \times[0,1] \times\left[1-\varepsilon_{M}, 1\right]}
\end{array}
$$

Let $\varepsilon_{M}=\frac{1}{b^{m-t}}=b^{t} / M$. Then each of these intervals contains exactly $b^{t}$ points, and thus there are at most $2(d+1) b^{t}$ points in the union of these intervals. 
Note that
$
\tilde{f}(\mathbf{x})=f\left(\operatorname{Proj}_{\left[\varepsilon_{M}, 1-\varepsilon_{M}\right]}\left(x_{1}\right), \ldots, \operatorname{Proj}_{\left[\varepsilon_{M}, 1-\varepsilon_{M}\right]}\left(x_{d}\right)\right)
$, where $\operatorname{Proj}_{\left[\varepsilon_{M}, 1-\varepsilon_{M}\right]}\left(x\right)$ is the projection of $x$ onto $\left[\varepsilon_{M}, 1-\varepsilon_{M}\right]$.
Therefore,
$$
\frac{1}{M} \sum_{i=1}^{M}\left|\tilde{f}_{M}\left(\mathbf{h}_{i}\right)-f_{M}\left(\mathbf{h}_{i}\right)\right| \leq \frac{2(d+1) b^{t}}{M} \cdot 2 \kappa
.$$

The star discrepancy of a $(t, m, d+1)$-net in base $b$ satisfies $\mathcal{D}^{*}\left(\left\{\mathbf{h}_{i}\right\}_{i=1}^{M}\right) \leq C \cdot \frac{(\log M)^{d}}{M}$ for some constant $C$
\citep[Theorem 4.10]{niederreiter1992random}.
Combining the bounds above, we have
\begin{equation*}
    \left|\int_{[0,1]^{d+1}} f(\mathbf{x}) \mathrm{d} \mathbf{x}-\frac{1}{M} \sum_{i=1}^{M} f\left(\mathbf{h}_{i}\right)\right| \leq C^{\prime} \cdot \frac{\log ^{2 d} M}{M}.
\end{equation*}

\subsection{Proof of Theorem~\ref{Thm:RQMCF-KRR}}
\label{pf:thm:rqmcf-krr}
Given the deterministic error bound of RQMC features shown in Theorem~\ref{thm:qmc1_deterministiccase_kernelapprox_error} and \ref{thm:QMC2}, the same proof as in \citet[Appendix C]{huangquasi}  applies.

\section{Supplementary Technical Lemmas}

\begin{lemma}\citep{dick2010digital}
\label{lemma:dick2010_1332}
    With the notations in Section~\ref{pf:thm:sup_meansquare},
    we have
    \begin{equation*}
        \sup_{\x,\x'\in \mc{X}}
		\sigma^2_{\bm{\ell}} 
  =
  \sup_{\x,\x'\in \mc{X}}
  \int_{[0,1]^{d+1}} \abs{\beta_{\bm{\ell}}(\bm{t})}^2 {\rm d}\bm{t}.\\
    \end{equation*}
\end{lemma}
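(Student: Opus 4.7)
The plan is to prove this as a direct application of Parseval's identity for the Walsh function system, which forms a complete orthonormal basis of $L^2([0,1]^{d+1})$. The claimed equality is, for each fixed pair $(\x,\x')$, a pointwise identity between the $L^2$ norm of $\beta_{\bm{\ell}}$ and the sum of squared Walsh coefficients over $L_{\bm{\ell}}$; taking $\sup$ over $\x,\x'\in\mc{X}$ on both sides then yields the stated result.

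First I would recall that $\beta_{\bm{\ell}}(\bm{t}) = \sum_{\bm{k}\in L_{\bm{\ell}}} \bar{f}(\bm{k})\,{}_2\mathrm{wal}_{\bm{k}}(\bm{t})$ is by construction a finite (or at worst $L^2$-convergent) Walsh series supported on the index block $L_{\bm{\ell}}$. Since the Walsh functions $\{{}_2\mathrm{wal}_{\bm{k}}\}_{\bm{k}\in \mathbb{N}_0^{d+1}}$ are orthonormal with respect to Lebesgue measure on $[0,1]^{d+1}$, Parseval's identity gives
\begin{equation*}
\int_{[0,1]^{d+1}} |\beta_{\bm{\ell}}(\bm{t})|^2\,\dd\bm{t} \;=\; \sum_{\bm{k}\in L_{\bm{\ell}}} |\bar{f}(\bm{k})|^2 \;=\; \sigma^2_{\bm{\ell}},
\end{equation*}
where the second equality is simply the definition~\eqref{eq:sigma2_l} of $\sigma^2_{\bm{\ell}}$. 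This pointwise identity holds for every $(\x,\x')\in\mc{X}\times\mc{X}$, because both sides are computed from the same integrand $f_{\x,\x'}$, and it is implicit in the standard development of Walsh analysis (see, e.g., Dick and Pillichshammer, 2010).

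Finally, taking the supremum over $\x,\x'\in\mc{X}$ on both sides of the pointwise equality yields
\begin{equation*}
\sup_{\x,\x'\in\mc{X}} \sigma^2_{\bm{\ell}} \;=\; \sup_{\x,\x'\in\mc{X}} \int_{[0,1]^{d+1}} |\beta_{\bm{\ell}}(\bm{t})|^2\,\dd\bm{t},
\end{equation*}
which is the claim. There is essentially no obstacle here beyond verifying that the Walsh orthonormality justifies applying Parseval to the block sum defining $\beta_{\bm{\ell}}$; the only care needed is to confirm that $f_{\x,\x'}\in L^2([0,1]^{d+1})$ (so that the Walsh series converges in $L^2$ and block-truncations are well-defined), which is either explicit in the hypotheses of the theorems invoking this lemma or follows from boundedness of the integrand $\psi(\x,\cdot)\psi(\x',\cdot)$ under Condition~\ref{cond1}.
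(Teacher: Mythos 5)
Your proof is correct: the paper states this lemma by citation to Dick and Pillichshammer (2010) without reproducing an argument, and the Parseval/orthonormality identity you give --- noting that $\beta_{\bm{\ell}}$ is a finite Walsh sum over the block $L_{\bm{\ell}}$, so $\int |\beta_{\bm{\ell}}|^2 = \sum_{\bm{k}\in L_{\bm{\ell}}}|\bar f(\bm{k})|^2 = \sigma^2_{\bm{\ell}}$ pointwise in $(\x,\x')$, then taking suprema --- is exactly the standard argument underlying the cited result. Your remark that one only needs $f_{\x,\x'}\in L^2$, which follows from boundedness of the integrand, is the right (and only) hypothesis to check.
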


\begin{lemma}\citep{liu2024randomized}
\label{lemma:liu2024-lemma1}
	Given the function $f \in L^2([0, 1]^{d+1})$ and the index set $T_{\bm{\ell}}$ as defined in~\eqref{eq:set_T_ell}, we have the expression for the Walsh series in base 2 in $T_{\bm{\ell}}$ as
	\begin{equation*}
		\label{eq:walsh_series_T_ell}
		\sum_{\bm{k} \in T_{\bm{\ell}}} \bar{f}(\bm{k}) {}_{2}\wal_{\bm{k}}(\bm{t}) = \prod_{j=1}^{{d+1}} 2^{\ell_j} \int_{\cap_{j=1}^{d+1} \left\lfloor y_j 2^{l_j}  \right\rfloor = \left\lfloor {t}_j 2^{l_j}  \right\rfloor} f(\bm{y})\dd \bm{y}.
	\end{equation*}
	\label{lemma:walsh_series_T_ell}
\end{lemma}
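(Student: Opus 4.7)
The final statement is the identity
\[
\sum_{\bm{k} \in T_{\bm{\ell}}} \bar{f}(\bm{k})\, {}_{2}\mathrm{wal}_{\bm{k}}(\bm{t}) \;=\; \prod_{j=1}^{d+1} 2^{\ell_j} \int_{\cap_j \{\lfloor y_j 2^{\ell_j}\rfloor = \lfloor t_j 2^{\ell_j}\rfloor\}} f(\bm{y})\,\mathrm{d}\bm{y},
\]
which identifies a finite Walsh partial sum with a local dyadic average. My plan is to recognize the left-hand side as the orthogonal $L^{2}$-projection of $f$ onto the subspace $V_{\bm{\ell}}$ spanned by $\{{}_{2}\mathrm{wal}_{\bm{k}} : \bm{k} \in T_{\bm{\ell}}\}$, and to identify $V_{\bm{\ell}}$ explicitly as the space of functions that are constant on the dyadic boxes $\prod_{j} [c_j/2^{\ell_j}, (c_j+1)/2^{\ell_j})$. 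The stated identity is then just the formula for that projection evaluated at $\bm{t}$.

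First I would recall that $\bar f(\bm{k}) = \int_{[0,1]^{d+1}} f(\bm{y})\, {}_{2}\mathrm{wal}_{\bm{k}}(\bm{y})\,\mathrm{d}\bm{y}$ and that the Walsh system is orthonormal in $L^{2}([0,1]^{d+1})$, with ${}_{2}\mathrm{wal}_{\bm{k}}(\bm{t}) = \prod_j {}_{2}\mathrm{wal}_{k_j}(t_j)$. Hence the left-hand side is exactly the orthogonal projection $P_{V_{\bm{\ell}}} f(\bm{t})$, where $V_{\bm{\ell}} = \mathrm{span}\{{}_{2}\mathrm{wal}_{\bm{k}} : \bm{k} \in T_{\bm{\ell}}\}$ has dimension $\prod_j 2^{\ell_j}$.

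Next I would identify $V_{\bm{\ell}}$. In the one-dimensional case, for $0 \le k < 2^{\ell}$ the base-$2$ expansion of $k$ has at most $\ell$ digits, so ${}_{2}\mathrm{wal}_{k}(t)$ depends only on the first $\ell$ digits of $t$ and is therefore constant on every dyadic interval $[c/2^{\ell}, (c+1)/2^{\ell})$. Thus $V_{\ell}$ is contained in the space $W_{\ell}$ of functions constant on these $2^{\ell}$ dyadic intervals. Both spaces have dimension $2^{\ell}$, so $V_{\ell} = W_{\ell}$. Taking tensor products, $V_{\bm{\ell}}$ equals the space of functions constant on the dyadic boxes $E_{\bm{\ell},\bm{c}} := \prod_j [c_j/2^{\ell_j},(c_j+1)/2^{\ell_j})$.

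Finally I would use the standard fact that the orthogonal projection onto a space of piecewise-constant functions with respect to a partition is the conditional expectation, i.e.\ the cell-wise average. Since $\bm{t}$ lies in the unique box $E_{\bm{\ell},\bm{c}(\bm{t})}$ with $c_j(\bm{t}) = \lfloor t_j 2^{\ell_j}\rfloor$, and this box has volume $\prod_j 2^{-\ell_j}$, the projection at $\bm{t}$ equals
\[
P_{V_{\bm{\ell}}} f(\bm{t}) \;=\; \Big(\prod_{j=1}^{d+1} 2^{\ell_j}\Big) \int_{E_{\bm{\ell},\bm{c}(\bm{t})}} f(\bm{y})\,\mathrm{d}\bm{y},
\]
which is exactly the right-hand side, since $E_{\bm{\ell},\bm{c}(\bm{t})} = \{\bm{y} : \lfloor y_j 2^{\ell_j}\rfloor = \lfloor t_j 2^{\ell_j}\rfloor, \forall j\}$.

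The only substantive step is identifying $V_{\bm{\ell}}$ with the dyadic-piecewise-constant space $W_{\bm{\ell}}$; after that everything follows from Hilbert-space projection. The mild subtlety is ensuring that the containment $V_{\bm{\ell}} \subseteq W_{\bm{\ell}}$ really does hold for \emph{every} $\bm{k} \in T_{\bm{\ell}}$ (not just diagonal ones), which is immediate from the product structure of Walsh functions and of the partition, combined with the fact that $2^{\ell_j}$ consecutive Walsh functions in dimension $j$ depend only on the first $\ell_j$ binary digits of $t_j$. The dimension count $|T_{\bm{\ell}}| = \prod_j 2^{\ell_j} = \dim W_{\bm{\ell}}$ then upgrades the inclusion to equality and closes the argument.
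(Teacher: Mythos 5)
Your proof is correct, and it is essentially the standard argument behind this identity (which the paper does not prove itself but imports by citation from \citet{liu2024randomized}): the partial Walsh sum over the box $T_{\bm{\ell}}$ is the orthogonal projection onto the span of those Walsh functions, that span coincides with the $\prod_j 2^{\ell_j}$-dimensional space of functions constant on the dyadic cells $\prod_j [c_j/2^{\ell_j},(c_j+1)/2^{\ell_j})$, and projection onto a piecewise-constant space is the cell-wise average. The only cosmetic point worth adding is that the projection identity a priori holds only almost everywhere, but since both sides are constant on each dyadic cell of positive measure, almost-everywhere equality upgrades to equality for every $\bm{t}$.
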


\begin{lemma}\citep{dick2010digital}
    \label{lemma:1}
    For any real number $b > 1$ and any $k, t_0 \in \mathbb{N}$, we have
\[
\sum_{t=t_0}^{\infty} b^{-t} \binom{t+k-1}{k-1} \leq b^{-t_0} \binom{t_0+k-1}{k-1} \left( 1 - \frac{1}{b} \right)^{-k}.
\]
\end{lemma}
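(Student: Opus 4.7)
The plan is to reduce the tail sum to a full sum by an index shift, then bound the shifted summand by a product that factors out the $t_0$-dependence, and finally recognize the remaining full sum as a closed-form geometric-type series. Concretely, substituting $s=t-t_0$ I would write
\begin{equation*}
\sum_{t=t_0}^{\infty} b^{-t}\binom{t+k-1}{k-1}
= b^{-t_0}\sum_{s=0}^{\infty} b^{-s}\binom{t_0+s+k-1}{k-1}.
\end{equation*}
The heart of the argument is then the auxiliary inequality
\begin{equation*}
\binom{t_0+s+k-1}{k-1}\;\le\;\binom{t_0+k-1}{k-1}\binom{s+k-1}{k-1}, \qquad t_0,s\ge 0,\;k\ge 1,
\end{equation*}
together with the negative binomial identity $\sum_{s\ge 0} b^{-s}\binom{s+k-1}{k-1}=(1-1/b)^{-k}$, valid for $b>1$. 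Plugging these into the displayed line immediately yields the claimed bound.

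To establish the auxiliary inequality I would argue combinatorially. Both sides count configurations of balls in $k$ labeled boxes: $\binom{n+k-1}{k-1}$ is the number of placements of $n$ indistinguishable balls into $k$ boxes. Consider the map that sends a pair $((a_1,\dots,a_k),(b_1,\dots,b_k))$ with $\sum a_i=t_0$, $\sum b_i=s$ to the coordinatewise sum $(a_1+b_1,\dots,a_k+b_k)$, which is a placement of $t_0+s$ balls. This map is surjective, because any $(n_1,\dots,n_k)$ with $\sum n_i=t_0+s$ admits at least one such decomposition (simply take $a_i$ so that $\sum a_i=t_0$ and $a_i\le n_i$, which is possible by a greedy filling argument). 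Surjectivity gives the inequality $\binom{t_0+k-1}{k-1}\binom{s+k-1}{k-1}\ge\binom{t_0+s+k-1}{k-1}$, which is exactly what is needed.

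The main obstacle is really only picking a clean justification for the auxiliary inequality; the rest is mechanical. A purely algebraic alternative, in case the combinatorial phrasing is deemed less self-contained, is to divide both sides by $\binom{t_0+s+k-1}{k-1}$ and check that the ratio $\binom{t_0+k-1}{k-1}\binom{s+k-1}{k-1}/\binom{t_0+s+k-1}{k-1}$ equals $1$ when $t_0=0$ or $s=0$ and is nondecreasing in each of $t_0,s$, which follows by comparing consecutive ratios $(t_0+k)/(t_0+s+k)\le (s+k)/(s+k) = 1$ after cross-multiplication; this would give a self-contained inductive verification with no combinatorial vocabulary. Either route closes the proof in a few lines.
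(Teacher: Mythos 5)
Your argument is correct. The paper does not prove this lemma at all --- it is stated with a citation to Dick and Pillichshammer (2010) and used as a black box --- so there is no in-paper proof to compare against; your write-up supplies a valid self-contained derivation. The three ingredients all check out: the index shift is exact, the negative binomial identity $\sum_{s\ge 0}\binom{s+k-1}{k-1}x^s=(1-x)^{-k}$ applies since $0<1/b<1$, and the auxiliary inequality $\binom{t_0+s+k-1}{k-1}\le\binom{t_0+k-1}{k-1}\binom{s+k-1}{k-1}$ is correctly established by the surjectivity of the coordinatewise-sum map on ball placements. One small blemish: in the algebraic alternative at the end, the displayed ratio comparison ``$(t_0+k)/(t_0+s+k)\le(s+k)/(s+k)=1$'' is garbled and does not carry out the claimed cross-multiplication (the correct computation shows the successive ratio in $t_0$ equals $\frac{(t_0+k)(t_0+s+1)}{(t_0+1)(t_0+s+k)}\ge 1$ because the difference of the two products is $(k-1)s\ge 0$); but since this is offered only as a backup to the complete combinatorial argument, it does not affect the validity of the proof.
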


\begin{lemma}\citep{dick2010digital}
\label{lemma:2}
    Let $f \in L_2([0, 1]^{d+1})$ and let $\hat{I}(f)$ be the RQMC estimator using the scrambled Sobol' sequence. 
    Then
\begin{equation*}
    \text{Var}[\hat{I}(f)] \leq b^{-m + t + {d+1}} \sum_{\substack{
    \boldell \in \mathbb{N}_0^{d+1} \\ 
     \normone{\boldell} > m - t}} 
    \sigma_{\boldell}^2(f).
\end{equation*}
\end{lemma}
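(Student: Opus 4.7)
The plan is to prove this statement by leveraging the Walsh series machinery already introduced in Section~\ref{pf:thm:sup_meansquare} and classical results on the variance of scrambled net estimators. Write $f$ in its base-$b$ Walsh expansion $f = \sum_{\bk \in \mathbb{N}_0^{d+1}} \bar{f}(\bk)\, \wal_{\bk}$ and group coefficients by block: $\beta_{\bl} = \sum_{\bk \in L_{\bl}} \bar{f}(\bk)\, \wal_{\bk}$, so that $f - I(f) = \sum_{\bl \neq \bm{0}} \beta_{\bl}$ and $\sigma_{\bl}^2(f) = \norm{\beta_{\bl}}_{L^2}^2$. Because the scrambled-Sobol' estimator is linear, $\hat{I}(f) - I(f) = \sum_{\bl \neq \bm{0}} \hat{I}(\beta_{\bl})$; the first step is therefore to prove a block-wise orthogonality identity of the form $\E[\hat{I}(\beta_{\bl}) \overline{\hat{I}(\beta_{\bl'})}] = 0$ for $\bl \ne \bl'$, which follows from the fact that under a nested uniform scramble the images $\sigma \cdot \bk$ and $\sigma \cdot \bk'$ of distinct Walsh indices lying in different blocks are uncorrelated. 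This reduces the variance to
\begin{equation*}
    \mathrm{Var}[\hat{I}(f)] = \sum_{\bl \ne \bm{0}} \mathrm{Var}[\hat{I}(\beta_{\bl})].
\end{equation*}

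The second step is to analyze each block separately. For any $\bl$ with $\normone{\bl} \le m - t$, each Walsh function $\wal_{\bk}$ with $\bk \in L_{\bl}$ is constant on an elementary interval of volume $b^{t-m}$, and so the $(t,m,d+1)$-net property (which the first $b^m$ points of a scrambled $(t,d+1)$-sequence inherit with probability one by~\citet[Proposition~17.2]{owen2023practical}) forces $\hat{I}(\wal_{\bk}) = I(\wal_{\bk})$ deterministically; hence $\mathrm{Var}[\hat{I}(\beta_{\bl})] = 0$ for these blocks. For $\bl$ with $\normone{\bl} > m - t$, I invoke the standard gain-coefficient bound for scrambled $(t,m,d+1)$-nets,
\begin{equation*}
    \mathrm{Var}[\hat{I}(\beta_{\bl})] \le \frac{\Gamma_{\bl}}{b^m}\, \sigma_{\bl}^2(f), \qquad \Gamma_{\bl} \le b^{t+d+1},
\end{equation*}
which is precisely the content of the Owen variance decomposition for scrambled nets as recorded in~\citet{dick2010digital}. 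Combining these two cases and summing over the surviving blocks yields the claimed bound
\begin{equation*}
    \mathrm{Var}[\hat{I}(f)] \le b^{-m+t+(d+1)} \sum_{\substack{\bl \in \mathbb{N}_0^{d+1} \\ \normone{\bl} > m-t}} \sigma_{\bl}^2(f).
\end{equation*}

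The main obstacle is the second ingredient: the uniform gain-coefficient bound $\Gamma_{\bl} \le b^{t+d+1}$ is nontrivial and relies on a careful count of how many index classes $\bk \in L_{\bl}$ share the same partial-sum pattern along the columns of the generating matrices of the scrambled net. Since this is exactly the content of the variance identity developed by Owen and refined in \citet[Chapter~13]{dick2010digital}, my plan is to cite that monograph for the gain-coefficient estimate rather than re-derive it in full; the result then drops out by substituting this bound into the block-variance decomposition above. The remaining orthogonality and net-exactness steps are direct applications of the Walsh-function identities and the elementary-interval property already used in the proof of Theorem~\ref{thm:sup_meansquare}.
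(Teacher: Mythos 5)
The paper does not actually prove Lemma~\ref{lemma:2} --- it is quoted directly from \citet{dick2010digital} in the supplementary-lemmas section --- so there is no internal argument to compare against. Your outline is a correct reconstruction of the standard scrambled-net variance analysis behind that citation: the block-diagonal covariance of $\hat I(\beta_{\bl})$ under nested uniform scrambling, the exactness of the $(t,m,d+1)$-net on Walsh blocks with $\normone{\bl}\le m-t$ (so those blocks contribute zero variance), and the gain-coefficient bound $\Gamma_{\bl}\le b^{t+d+1}$ for the surviving blocks, which you, like the paper, ultimately defer to \citet{dick2010digital}. Since the one genuinely hard ingredient is cited rather than re-derived in both treatments, your proposal is in substance equivalent to the paper's, just with the surrounding scaffolding made explicit and correct.
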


\section{Additional Simulation Results}
\label{sec:additional-simu}
In this section, 
we present experimental findings for KRR with $r=0.5$ case.

Following the same procedure as before, the training and test datasets are generated from $Y = f(\mathbf{X}) + \varepsilon$, where $f$ is the regression function, $\mathbf{X} \sim \mathrm{Unif}[0,1]^d$, and $\varepsilon \sim \mathcal{N}(0,1)$. We focus on the Gaussian kernel $K(\mathbf{x}, \mathbf{x}') = \exp\!\bigl(-\tfrac{1}{2\sigma^2}\|\mathbf{x}-\mathbf{x}'\|^2\bigr)$, where the bandwidth $\sigma$ is chosen as the median of $\|\mathbf{X} - \mathbf{X}'\|$ computed numerically, with $\mathbf{X}, \mathbf{X}'$ drawn i.i.d.\ from $\mathrm{Unif}[0,1]^d$.

The range of $L^r$ coincides with $\mathcal{H}$ when $r=0.5$. Hence, we set $\tilde{f}(\mathbf{x}) = K\bigl(\tfrac{1}{3}\mathbf{1}_d, \mathbf{x}\bigr) + K\bigl(\tfrac{2}{3}\mathbf{1}_d, \mathbf{x}\bigr)$, ensuring that $\tilde{f} \in \mathrm{ran}(L^r)$. To control the signal-to-noise ratio, we let $f(\mathbf{x}) = C_{\tilde{f}}\tilde{f}(\mathbf{x})$, where $C_{\tilde{f}}$ is chosen so that $\mathbb{E}[f(\mathbf{X})] = 5$. The kernel ridge regularization parameter is $\lambda = 0.25\,n^{-\tfrac{1}{2r+1}}$.

We plot the test MSE against the number of random features for exact KRR, RF-KRR, QMCF-KRR and RQMCF-KRR in Figure~\ref{fig:r-0.5}. For each dimension $d$, we generate $10^6$ test points and keep them fixed. We then conduct 1000 trials of training samples of size $10^4$. For each trial, we fit the kernel ridge regressor and record its test error. The MSE (solid lines) is the average over these 1000 trials, and we additionally provide confidence bands based on the 25\% and 75\% quantiles of the errors.

Empirically, the results for $r=0.5$ exhibit patterns similar to those observed for $r=1$ in Section~\ref{sec:simu-krr}, again showcasing the strong performance of RQMC-based features.

\clearpage
\begin{figure}[htbp]
\thispagestyle{empty}
  \centering
  \includegraphics[width=\textwidth, height=\textheight, keepaspectratio]{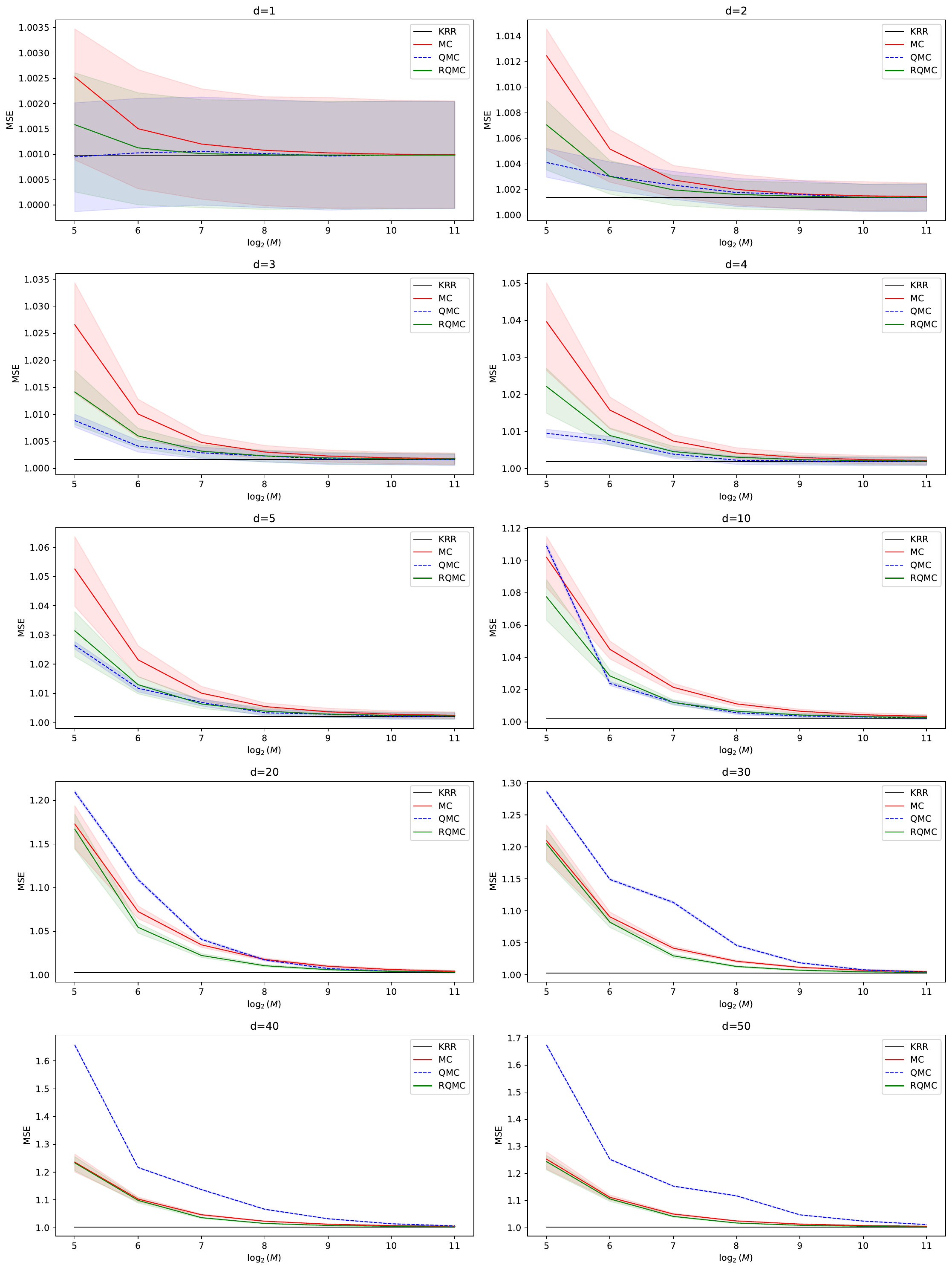}
  \caption{The test MSE against the number of random features ($r=0.5$), for exact KRR, RF-KRR, QMCF-KRR and RQMCF-KRR.}
  \label{fig:r-0.5}
\end{figure}
\clearpage

\end{document}